\def\ES#1{{\EuScript#1}}
\def\lseg#1#2{\overline{#1#2}}
\def\upseg{\mathop{\mbox{\sl ups}}}
\def\uptan{\mathop{\mbox{\sl upt}}}
\def\lowseg{\mathop{\mbox{\sl los}}}
\def\lowtan{\mathop{\mbox{\sl lot}}}
\let\lotan\lowtan
\long\def\useappendix#1{}
\long\def\nouseappendix#1{#1}
\begin{document}
	
\title{On conflict-free chromatic guarding \mbox{of simple polygons}}

\author{Onur \c{C}a\u{g}{\i}r{\i}c{\i}\,\inst{1}%
\thanks{Supported by the {Czech Science Foundation}, project no.~{20-04567S}.} \and
	Subir Kumar Ghosh\,\inst{2}%
\thanks{Supported by {SERB}, Government of India through a grant under {MATRICS}.} \and
\mbox{Petr Hlin\v en\'y}\,\inst{1}${}^\star$ \and
	Bodhayan Roy\,\inst{3}%
\thanks{A part of the research was done while B. Roy was
	affiliated with the Faculty of Informatics of Masaryk University.} }

\authorrunning{O.\c{C}a\u{g}{\i}r{\i}c{\i}, S.K. Ghosh, P. Hlin\v en\'y and B. Roy}

\institute{Faculty of Informatics of Masaryk University, Brno, Czech Republic
\\\email{onur@mail.muni.cz, hlineny@fi.muni.cz} 
\smallskip \and
Ramakrishna Mission Vivekananda University, Kolkata, India
\\\email{ghosh@tifr.res.in}
\smallskip \and
Indian Institute of Technology Kharagpur, India
\\\email{broy@maths.iitkgp.ac.in}}

	\maketitle

 \newtheoremrep{proposition}[theorem]{Proposition}
 \newtheoremrep{lemma}[theorem]{Lemma}
 \newtheoremrep{corollary}[theorem]{Corollary}
 \newtheoremrep{theorem2}[theorem]{Theorem}

	\begin{abstract}
        We study the problem of colouring the vertices of a polygon, such
        that every viewer in it can see a unique colour.  The goal is to minimise the
        number of colours used.  This is also known as the conflict-free 
        chromatic guarding problem with vertex guards, and is
        motivated, e.g., by the problem of radio frequency assignment
        to sensors placed at the polygon vertices.

        We first study the scenario in which viewers can be all points of the polygon
        (such as a mobile robot which moves in the interior of the polygon).  
        We efficiently solve the related problem of minimising the number of guards
        and approximate (up to only an additive error) 
        the number of colours required in the special case of polygons called funnels.
        Then we give an upper bound of $O(\log^2 n)$ colours on
        $n$-vertex weak visibility polygons, by decomposing the problem into sub-funnels.
        This bound generalises to all simple polygons. 

        We briefly look also at the second scenario, in which the viewers are only the
        vertices of the polygon.
        We show a lower bound of $3$ colours in the general case of
        simple polygons and conjecture that this is tight.
        We also prove that already deciding whether $1$ or $2$ colours are enough
        is \NP-complete. 

\keywords{Computational geometry \and Polygon guarding \and Visibility graph \and Art gallery problem \and Conflict-free colouring}
	\end{abstract}

\title{On conflict-free chromatic guarding of simple polygons
	\thanks{O.~\c{C}a\u{g}{\i}r{\i}c{\i} and P.~Hlin\v{e}n\'y
	have been supported by the {Czech Science Foundation}, project no.~{20-04567S}.
	S. K. Ghosh has been supported by {SERB}, Government of India through a grant under {MATRICS}.
	A significant part of the research was done while B. Roy was affiliated
	with the Faculty of Informatics	of Masaryk University.
}
}

\author{Onur \c{C}a\u{g}{\i}r{\i}c{\i} \and
        Subir Kumar Ghosh \and
        Petr Hlin\v{e}n\'{y} \and
        Bodhayan Roy
}

\institute{
			O. \c{C}a\u{g}{\i}r{\i}c{\i} \at
			Faculty of Informatics, Masaryk University, Brno - Czech Republic\\
			\email{onur@mail.muni.cz}
			        \and
			S. K. Ghosh \at
			Ramakrishna Mission Vivekananda University, Kolkata, India\\
			\email{ghosh@tifr.res.in}
				\and
			P. Hlin\v{e}n\'{y} \at
			Faculty of Informatics, Masaryk University, Brno - Czech Republic\\
			\email{hlineny@fi.muni.cz}
				\and
			B. Roy \at
			Indian Institute of Technology Kharagpur, India\\
			\email{broy@maths.iitkgp.ac.in}
}

\section{Introduction}
	The \emph{guarding} of a polygon considers placing ``guards'' into the polygon, in a way that the guards collectively can see the whole polygon.
	It is usually assumed that a guard can see any point unless there is an obstacle or a wall between the guard and that point.
	One of the best known problems  in computational geometry, the \emph{art gallery
	problem}, is essentially a guarding problem \cite{bcko-cgaa-08,o-agta-87}.
	The problem is to find the minimum number of guards to guard an art gallery, which is
	modelled by an $n$-vertex polygon.
	This problem  was shown to be NP-hard by Lee and Lin \cite{ll-ccagp-86} and more recently $\exists \mathbb{R}$-complete by Abrahamsen et al. \cite{art-gal-etr}. 
	The Art Gallery Theorem, proved by Chv{\'a}tal, shows that $ \lfloor n/3 \rfloor$ guards are sufficient and sometimes necessary to guard a simple polygon \cite{c-actpg-75}. 
	
	The guard minimisation problem has been studied under many
	constraints; such as the placement of guards being restricted to the
	polygonal perimeter or vertices \cite{loglogn-artgal}, the viewers
	being restricted to vertices, the polygon being terrains
	\cite{saurabh-artgal,ben-moshe-terrain,terrain-gal}, weakly visible
	from an edge \cite{weakvis-guard}, with holes or orthogonal
	\cite{eidenbenz_inapprox,katz-orth-artgal,terrain-np-complete}, with
	respect to parameterization \cite{para-artgal} and to approximability
	\cite{katz-wvp}.
	
	For most of these cases the problem remains hard, but interesting approximation algorithms have also been provided \cite{g-apatpp-2010,approx-art-gal-bonnet}.
	
	In addition to above mentioned versions of art gallery problem (or rather polygon guarding problem), some problems consider not the number of the guards, but the number of colours that are assigned to the guards.
	The colours, depending on the scope, determine the types of the guards.
	If any observer in the polygon sees at least one guard of a different type
	(than other visible guards), then that polygon has a \emph{conflict-free chromatic guarding} \cite{suri-conflict,Bartschi-2014,DBLP:journals/comgeo/HoffmannKSVW18}.
	If every guard that sees any given observer is of different types, then that polygon has a \emph{strong chromatic guarding} \cite{strong-conflict-free}. 
	
	\paragraph{Motivation.}
	
	In graph theory, a {\em conflict-free colouring} of a graph is an
	assignment of colours to vertices such that the
	closed neighbourhood of each vertex contains at least one unique colour.
	This problem was studied since 1973 by Biggs under the name \emph{perfect code}, which is
	a conflict-free colouring of a graph using only one colour, see \cite{biggs-1973,DBLP:conf/mfcs/KratochvilK88}.
	Later on, this topic arouse interest~on polygon visibility graphs when the field of robotics became widespread \cite{chazelle87a,gmr-sjc-97}.
	
	Consider a geometric scenario in which a mobile robot traverses a room from one point to another, communicating with the wireless sensors placed
	in the corners of the room.
	Even if the robot has full access to the map of the room, it cannot determine its location precisely because of accumulating rounding errors \cite{motionPlanningCourse}.
	And thus it needs clear markings in the room to guide itself to the end point in an energy efficient way.

	To guide a mobile robot with wireless sensors, two properties must be satisfied.
	First one is, no matter where the robot is in the polygon, it should hear from at least one sensor. That is, the placed sensors must together \emph{guard} the whole room and leave no place uncovered.
	The second one is, if the robot hears from several sensors, there must be at least one sensor broadcasting with a frequency that is not reused by some other sensor in the range. That is, the sensors must have \emph{conflict-free} frequencies.
	If these two properties are satisfied, then the robot can guide itself using the deployed wireless sensors as landmarks.
	This problem is also closely related to frequency assignment problem in wireless networks \cite{freqAssignment,suri-conflict}.
	One can easily solve this problem by placing a sensor at each corner of the room, and assigning a different frequency to each sensor.
	However, this method becomes very expensive as the number of sensors grow \cite{freqAssignment,cf-app}.
	Therefore, the main goal in this problem is minimise the number of different frequencies assigned to sensors.
	Since the cost of a sensor is comparatively very low, we do not aim to	minimise the number of sensors used.
	
	The above scenario is mathematically modelled as follows.
	The room is a simple polygon with $n$ vertices.
	There are $m$ sensors placed in the polygon (usually on some of its vertices),
	and two different sensors are given two different colours if, and only if they broadcast in different frequencies.

	\paragraph{Basic definitions.}

	A polygon $P$ is defined as a closed region $R$ in the plane bounded by a finite set
 	of line segments (called edges of $P$) \cite{g-vap-07}.
 	We consider simple polygons , i.e., simply connected regions
	(informally, ``without holes''), usually non-convex.
	Two points $p_1$ and $p_2$ of a polygon $P$ are said to \emph{see} each other, or be \emph{visible}
	to each other, if the line segment $\lseg{p_1}{p_2}$ belongs to~$P$.
\footnote{Note that our polygons are topologically closed, and hence
 	neighbouring vertices see each other along their common edge of the polygon.
	More generally, a straight line of sight can ``touch'' the boundary of the
	polygon, but not to cross it.
	In a physical-world scenario, that is when the
 	polygon is an empty area (formally topologically open) 
 	bounded by the surrounding walls, this definition
 	of visibility between $p_1$ and $p_2$ hence relates to an
 	arbitrarily small perturbation of the straight line of sight between $p_1$ and $p_2$.
}

	In this context, we say that a guard $g$ {\em guards} a point $x$ of $P$ if
	$g$ sees $x$, that is when the line segment $\lseg{g}{x}$ belongs to~$P$. 
	A vertex of $P$ is said to be a \emph{convex vertex} if its interior angle is less than $180^\circ$.
	Otherwise the vertex is said to be a \emph{reflex vertex}.
	A~polygon $P$ is a \emph{weak visibility polygon} if $P$ has an edge $uv$ 
	such that for every point $p$ of $P$ there is a point $p'$ on $uv$ seeing $p$. 

	A solution of {\em conflict-free chromatic guarding} of a polygon $P$
	consists of a set of {\em guards} in $P$,
	and an assignment of colours to the guards (one colour per guard) 
	such that the following holds;
	every {\em viewer} $v$ in $P$ (where $v$ can be any point of $P$ in
	our case, including a guard which sees itself) can see a guard of colour $c$ such
	that no other guard seen by $v$ has the same colour~$c$.
	In the {\em point-to-point} (P2P) variant the guards can be placed in any points of~$P$, while in
	the {\em vertex-to-point} (V2P) variant the guards can be placed only at the vertices of~$P$.
	There is also a {\em vertex-to-vertex} (V2V) variant in which both
	guards and viewers are restricted to the polygon vertices.
	We leave a discussion of the V2V variant till Section~\ref{sec:v2vcfc}.
	In all variants the goal is to minimise the number of colours (e.g., frequencies) used.

	When writing $\log n$, we mean the binary logarithm $\log_2n$.
	Further special definitions are presented in the coming sections in
	which they are used.

	\paragraph{Related research.}

	The aforementioned P2P conflict-free chromatic guarding (art gallery) problem has been studied in several papers. 
	B\"{a}rtschi and Suri gave an upper bound of $O(\log^2n)$ colours on simple
	$n$-vertex polygons \cite{suri-conflict}.
	Later, B\"{a}rtschi et al.\ improved this upper bound to $O(\log n)$
	on simple polygons \cite{Bartschi-2014}, and Hoffmann et al.~\cite{DBLP:journals/comgeo/HoffmannKSVW18},
	while studying the orthogonal variant of the problem, have given the first nontrivial lower bound of $\Omega\big(\log\log n/\log\log\log n\big)$ colours holding also in the general case of simple polygons.

	Our paper deals with the V2P variant
	in which guards should be placed on polygon vertices and viewers can be any points of the polygon.
	Note that there are some fundamental differences between point and
	vertex guards, e.g., funnel polygons (Fig.~\ref{fig:funnel}) can
	always be guarded by one point guard (of one colour) but they may require up to 
	$\Omega (\log n)$ colours in the V2P conflict-free chromatic
	guarding, as shown in~\cite{Bartschi-2014}.
	Hence, extending a general upper bound of $O(\log n)$ colours for point guards on simple polygons by B\"{a}rtschi et al.~\cite{Bartschi-2014} to the more restrictive vertex guards is a challenge, 
	which we can now only approach with an $O(\log^2n)$ bound (see below). 
	Note that the same bound of $O(\log^2n)$ colours was attained by 
	\cite{Bartschi-2014} when allowing multiple guards at the same vertex.
	
	\paragraph*{Our results}
	\begin{enumerate}
	\item (Sections~\ref{sec:vpguarding} and~\ref{sec:funnelcolouring}, in Theorems \ref{thm:sizeguard-funnel}, \ref{thm:logm-4} and~\ref{cor:opt+4})~
	We give a polynomial-time algorithm to find the optimum number $m$ of vertex-guards to guard all the points of a funnel, and show that the number of 	
	colours in the corresponding conflict-free chromatic guarding problem is $\log m+O(1)$.
	This leads to an approximation algorithm for V2P conflict-free chromatic guarding of a funnel, with only a constant ($+4$) additive error.
	A remarkable feature of this result is that we prove a direct relation (Theorem~\ref{thm:logm-4}) between the optimal numbers of guards and of colours 	needed in funnel polygons. \label{it:result1}
	\item (Sections~\ref{sec:v2pChromatic} and~\ref{sec:allsimple} in Theorems \ref{thm:twoApprox} and \ref{thm:allpolygon})
	We show that a weak visibility polygon on $n$ vertices can be V2P conflict-free chromatic guarded with only $O (\log^2 n)$
	guards, by decomposing the problem into funnels and colouring each funnel by $O (\log n)$ colours.
	We generalise this upper bound to all simple polygons, which is a result incomparable with previous~\cite{Bartschi-2014}. \label{it:result2}
	\item (Section~\ref{sec:v2vcfc} in Proposition~\ref{pro:v2v-3colours} and Theorem \ref{thm:v2vhardness})
	We show that a V2V chromatic guarding may require at least three colours,
	and conjecture that three colours always suffice in a simple polygon
	(unlike in the V2P and P2P settings for which this is unbounded~\cite{Bartschi-2014,DBLP:journals/comgeo/HoffmannKSVW18}).
	We prove that determining whether a given polygon has a V2V conflict-free
	chromatic guard set using only one or two colours is NP-complete.
	
	\end{enumerate}
Note that all our algorithms are simple and suitable for easy implementation.

	\section{Minimising Vertex-to-Point Guards for Funnels} \label{sec:vpguarding}
	In the next two sections, we focus on a special interesting type of polygons -- {\em funnels}.
	A polygon $P$ is a funnel if precisely three of the vertices of $P$ are convex, and two of the convex vertices share one common edge -- 
	the {\em base} of the funnel~$P$.
	
	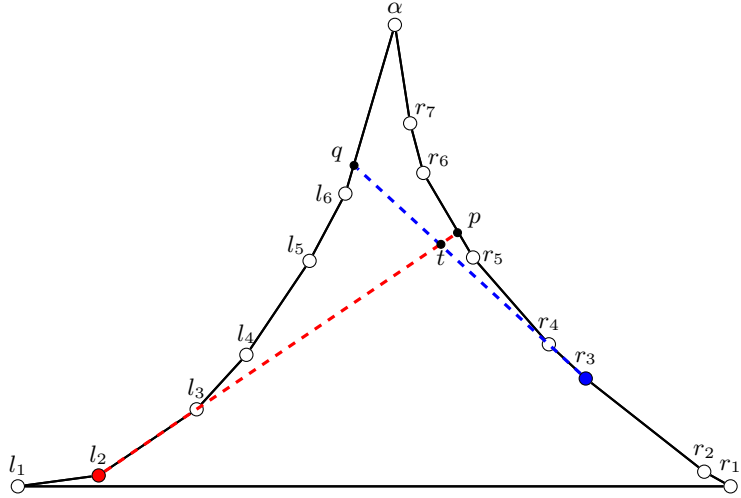
\begin{figure}[tbp]
		\centering
		\begin{tikzpicture}[xscale=0.35, yscale=0.5]
		
		\tikzstyle{every node}=[draw, shape=circle, minimum size=5pt,inner sep=0pt];
		\node[label=$l_1$] (A) at (1.93,4.71) {};
		\node[label=$l_2$,fill=red] (B) at (5,5) {};
		\node[label=$l_3$] (C) at (8.72,6.76) {};
		\node[label=$l_4$] (D) at (10.61,8.21) {};
		\node[label=120:$l_5$] (E) at (13.01,10.71) {};
		\node[label=left:$l_6$] (F) at (14.37,12.5) {};
		\node[label=$\alpha$] (G) at (16.25,16.99) {};
		\node[label=30:$r_7$] (H) at (16.83,14.37) {};
		\node[label=30:$r_6$] (I) at (17.33,13.05) {};
		\node[label=right:$r_5$] (J) at (19.22,10.8) {};
		\node[label=$r_4$] (K) at (22.1,8.49) {};
		\node[label=$r_3$,fill=blue] (L) at (23.5,7.58) {};
		\node[label=$r_2$] (M) at (28,5.1) {};
		\node[label=$r_1$] (N) at (29,4.71) {};
		
		\draw[thick](A)--(B)--(C)--(D)--(E)--(F)--(G)--(H)--(I)--(J)--(K)--(L)--(M)--(N)--(A);
		
		\coordinate (A') at (24.71,6.89);
		\node[minimum size = 3pt, fill=black, label=40:$~p$] (B') at (18.63, 11.46) {};
		\coordinate (C') at (17.25, 13.27);
		\node[minimum size = 3pt, fill=black, label=150:$q~$] (M') at (14.7,13.25) {};
		
		\draw[thick](A)--(B)--(C)--(D)--(E)--(F)--(G)--(H)--(I)--(J)--(K)--(L)--(M)--(N)--(A);
		\draw[very thick, color=red, dashed] (B)--(B');
		\draw[very thick, color=blue, dashed] (L)--(M');
		\node[minimum size = 3pt, fill=black, label=270:$t$] (B'M') at (18,11.15) {};
		\end{tikzpicture}
		
		\caption{A funnel $F$ with seven vertices in $\mathcal{L}$ labelled $l_1,
			\ldots, l_7$ from bottom to top, and eight vertices in $\mathcal{R}$ 
			labelled $r_1, \ldots, r_8$, including the apex $\alpha = l_7 = r_8$.
			All vertices of $F$ except $l_1,r_1$ and $\alpha$ are reflex vertices.
			The picture also shows the upper tangent of the vertex $l_2$ of $\mathcal L$ (drawn in
			dashed red), the upper tangent of the vertex $r_3$ of $\mathcal R$
			(drawn in dashed blue), and their intersection~$t$.}
		\label{fig:funnel}
		\label{fig:tangents}
	\end{figure}

	\label{su:numguardsfun}
	Before turning to the conflict-free chromatic guarding problem, we
	first resolve the problem of minimising the total number of
	vertex guards needed to guard all points of a funnel.	
	We start by describing a simple procedure (Algorithm~\ref{alg:tightpath})
	that provides us with a guard set which may not always be optimal (but
	very close to the optimum, see Lemma~\ref{lem:alg1-near-optimal}). 
	This procedure will be helpful for the subsequent colouring results.
	Then, we also refine the simple procedure to compute the optimal number of guards 
	in Algorithm~\ref{alg:newminfunguard}.
	
	We use some special notation here. See Figure~\ref{fig:funnel}.
	Let the given {funnel} be~$F$, oriented in the plane as follows. 
	On the bottom, there is the horizontal \emph{base} of the funnel -- the line segment $\lseg{l_1}{r_1}$ in the picture.
	The topmost vertex of $F$ is called the \emph{apex}, and it is denoted by $\alpha$.
	There always exists a point $x$ on the base which can see the apex $\alpha$, and then $x$ sees the whole funnel at once.
	The vertices on the left side of apex form the \emph{left concave chain}, and analogously, the vertices on the right side of the apex form the 				\emph{right concave chain} of the funnel.
	These left and right concave chains are denoted by $\mathcal{L}$ and $\mathcal{R}$ respectively.
	We denote the vertices of $\mathcal{L}$ as $l_1, l_2, \ldots, l_k$ from bottom to top.
	We denote the vertices of $\mathcal{R}$ as $r_1, r_2, \ldots, r_m$ from bottom to top.
	Hence, the apex is $l_k = r_m = \alpha$.

	Let $l_i$ be a vertex on $\mathcal{L}$ which is not the apex. 
	We define the \emph{upper tangent} of $l_i$, denoted by $\uptan(l_i)$, as the ray whose origin is $l_i$ and which passes through $l_{i+1}$.
	Upper tangents for vertices on $\mathcal{R}$ are defined analogously. 
	Let $p$ be the point of intersection of $\mathcal{R}$ and the upper tangent of~$l_i$.
	Then we define $\upseg(l_i)$ as the line segment $\lseg{l_{i+1}}p$. 
	For the vertices of $\mathcal{R}$, $\upseg$ is defined analogously: 
	if $q$ is the point of intersection of $\mathcal{L}$ and the upper tangent of
	$r_j \in \mathcal{R}$, then let $\upseg(r_j) := \lseg{r_{j+1}}q$.
	See again Figure~\ref{fig:tangents}.
	
	\begin{algorithm}[tbp]
		\KwIn{A funnel $F$ with concave chains $\mathcal{L}=(l_1, 
			\ldots, l_k)$ and $\mathcal{R}=(r_1, \ldots, r_m)$.}
		\KwOut{A vertex set guarding all the points of $F$.}
		\smallskip
		
		Initialise an auxiliary digraph $G$ with two dummy vertices $x$ and $y$,
		\mbox{and declare $\upseg(x) = \lseg{l_1}{r_1}$}\;
		Initialise $\ES{S} \gets \{x\}$\;
		\While {$\ES{S}$ is not empty}
		{
			Choose an arbitrary $t \in \ES{S}$, and remove $t$ from $\ES{S}$\;
			Let $s=\upseg(t)$	\tcc*{$s$ is a segment inside $F$}
			Let $q$ and $p$ be the ends of $s$ on $\mathcal{L}$ and $\mathcal{R}$, respectively\;
			Let $i$ and $j$ be the largest indices such that $l_i$ and
			$r_j$ are not above $q$ and~$p$, resp.\;
			\lIf{$l_{i+1}$ can see whole $s$}{$i' \gets i+1$}
			\lElse(\tcc*[f]{the topmost vertex on the left seeing whole $s$})
			{$i' \gets i$}
			\lIf{$r_{j+1}$ can see whole $s$}{$j' \gets j+1$}
			\lElse(\tcc*[f]{the topmost vertex on the right seeing whole $s$})
			{$j' \gets j$}
			Include the vertices $l_{i'}$ and $r_{j'}$ in~$G$\;
			\ForEach{$z\in\{l_{i'},r_{j'}\}$}
			{\label{lin:zin}%
				Add the directed edge $(t,z)$ to~$G$ \label{lin:wght}\;
				\If{segment $\upseg(z)$ includes the apex $l_k=r_m$}
				{Add the directed edge $(z,y)$ to~$G$
					\tcc*{$y$ is the dummy vertex}}
				\lElse(\tcc*[f]{more guards are needed above~$z$})
				{$\ES{S}\gets \ES{S}\cup\{z\}$}
			}
		}
		Enumerate a shortest path from $x$ to $y$ in~$G$\; 
		Output the shortest path vertices without $x$ and $y$ as the required guard set\;
		\caption{Simple vertex-to-point guarding of funnels (uncol.)}
		\label{alg:tightpath}
	\end{algorithm}
	
	The underlying idea of Algorithm~\ref{alg:tightpath} is as follows.
	Imagine we proceed bottom-up when building the guard set of a funnel $F$.
	Then the next guard is placed at the top-most vertex $z$ of $F$,
	nondeterministically choosing between $z$ on the left and the right chain of
	$F$, such that no ``unguarded gap'' remains below~$z$.
	Note that the unguarded region of $F$ after placing a guard at $z$
	is bounded from below by~$\upseg(z)$.
	The nondeterministic choice of the next guard~$z$ is encoded within a digraph,
	in which we then find the desired guard set as a shortest path.
	The following is straightforward.
	
	\begin{lemma}\label{lem:tightpath-feas}
		Algorithm~\ref{alg:tightpath} runs in polynomial time, and it outputs
		a feasible guard set for all the points of a funnel $F$.
	\end{lemma}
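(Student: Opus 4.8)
The plan is to establish the two assertions separately.

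\emph{Polynomial running time.} Since the work done by one pass of the \textbf{while} loop on an element $t$ depends only on the segment $\upseg(t)$, which is determined by $t$ alone, reprocessing the same vertex produces the same edges and the same new elements; hence it suffices to keep a set of already processed vertices and skip repeated insertions into $\ES{S}$, which does not change $G$. Every element ever placed in $\ES{S}$ is one of the vertices $l_{i'},r_{j'}$, so there are $O(n)$ iterations, each of which computes an upper-tangent ray and its first intersection with the opposite concave chain, locates the indices $i,j$, and performs two ``does this vertex see this segment?'' tests, all in time polynomial in $n=|V(F)|$. The digraph $G$ then has $O(n)$ vertices and edges, so a shortest $x$--$y$ path is found in linear time; such a path exists because following, say, the choice $z=l_{i'}$ repeatedly makes the left endpoint of the current upper segment strictly ascend along $\mathcal{L}$, so after at most $k$ steps it reaches $\alpha$ and an edge to $y$ is created.

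\emph{Feasibility.} I would use the basic fact that in a funnel every vertex of one concave chain sees the \emph{entire} opposite chain (the opposite chain bulges away, and one's own chain never blocks a line of sight directed across the funnel), and consequently a vertex $l_a\in\mathcal{L}$ sees a point $w\in F$ exactly when $w$ lies in the wedge at $l_a$ bounded by the lines $\slin{l_{a-1}}{l_a}$ and $\slin{l_a}{l_{a+1}}$, and symmetrically for $\mathcal{R}$. The core claim is then: \emph{for every edge $(t,z)$ of $G$ with $z$ a real vertex, the guard at $z$ sees every point of $F$ lying between $s=\upseg(t)$ and $\upseg(z)$.} Take $z=l_{i'}$ (the other case is symmetric). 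By the choice of $i'$ the vertex $l_{i'}$ sees all of $s$: this is explicit when $i'=i+1$, and when $i'=i$ it follows because the endpoint of $s$ on $\mathcal{L}$ lies on the edge $\lseg{l_i}{l_{i+1}}$ (hence on a bounding line of $l_i$'s wedge) while the other endpoint lies on $\mathcal{R}$ and is thus seen by $l_i$, so $s$ lies in $l_i$'s wedge. Now the region of $F$ between $s$ and $\upseg(z)$ is bounded below by $s$, above by $\upseg(z)\subseteq\slin{l_{i'}}{l_{i'+1}}$, on the left by a portion of $\mathcal{L}$ contained in the boundary of $l_{i'}$'s wedge, and on the right by a sub-arc of $\mathcal{R}$; since all four pieces lie in the (convex) wedge, so does the whole region, and since inside the wedge nothing of $F$ is hidden from $l_{i'}$, the guard at $l_{i'}$ sees all of it.

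\emph{Conclusion.} For the output path $x=z_0\to z_1\to\cdots\to z_\ell\to y$ one checks from the definitions of $i',j'$ that the upper segments are nested -- the endpoints of $\upseg(z_t)$ lie strictly higher on $\mathcal{L}$ and on $\mathcal{R}$ than those of $\upseg(z_{t-1})$ -- so the regions ``between $\upseg(z_{t-1})$ and $\upseg(z_t)$'', for $t=1,\dots,\ell$, partition the part of $F$ below $\upseg(z_\ell)$, starting from the base $\upseg(z_0)=\lseg{l_1}{r_1}$. By the core claim, $z_t$ covers the $t$-th region, and the edge $(z_\ell,y)$ is present only because $\upseg(z_\ell)$ passes through the apex $\alpha$, whence -- $\alpha$ being the topmost vertex of $F$ -- the part of $F$ above $\upseg(z_\ell)$ is empty. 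Thus the returned set $\{z_1,\dots,z_\ell\}$ guards all of $F$. The only step that needs genuine care is the core claim of the second paragraph: verifying that a chain vertex ``sees across'' the whole funnel, and that the slab in question really lies inside its visibility wedge; the termination bound, the nesting of the upper segments, and the telescoping of the slabs are routine bookkeeping.
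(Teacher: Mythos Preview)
Your proof follows the same scheme as the paper's: show that each guard $z$ on the output path covers the strip of $F$ between $\upseg(t)$ and $\upseg(z)$, and then telescope along the path. Where the paper writes only ``by basic properties of a funnel'' for this step, you spell out the wedge characterisation of visibility from a reflex chain vertex, and you also add an explicit termination argument (existence of an $x$--$y$ path) that the paper omits.

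One slip to fix: your stated ``basic fact'' that \emph{every vertex of one concave chain sees the entire opposite chain} is false in a funnel. In Figure~\ref{fig:funnel}, for instance, $l_2$ cannot see $r_6$ or $r_7$, because its own neighbour $l_3$ blocks the line of sight---so one's own chain \emph{does} obstruct views across the funnel. What is true, and what you actually use afterwards, is only that the visibility region of $l_a$ inside $F$ equals the intersection of $F$ with the wedge between $\slin{l_{a-1}}{l_a}$ and $\slin{l_a}{l_{a+1}}$: outside the wedge the segment $\lseg{l_a}{w}$ exits $F$ immediately through~$\mathcal{L}$, while inside the wedge the only remaining possible obstacle is~$\mathcal{R}$, which is convex as seen from $l_a$ and hence never occludes itself. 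With the offending sentence replaced by this, your core claim and the telescoping go through, and the proof is sound.
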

	\begin{proof}
	As for the runtime, we observe that the number of considered line segments
	$s$ in the algorithm is, by the definition of $\upseg$, 
	bounded by at most $k+m$ (and it is typically much lower than this bound).
	Each considered segment $\upseg(t)$ of $t \in \ES{S}$ is processed at most once, and it contributes two edges to $G$.
	Overall, a shortest path in $G$ is found in linear time.
	
	We prove feasibility of the output set by induction.
	Let $(x=x_0,x_1,\ldots,x_{a-1},$ $x_a=y)$ be a path in~$G$.
	We claim that, for $0\leq i\leq a$, guards placed at $x_0,x_1,\ldots,x_i$ guard all the points of $F$ below $\upseg(x_i)$. 
	This is trivial for $i=0$, and it straightforwardly follows by induction:
	Algorithm~\ref{alg:tightpath} asserts that $x_i$ can see whole $\upseg(x_{i-1})$, and $x_i$ hence also sees the strip between $\upseg(x_{i-1})$ 			and $\upseg(x_{i})$ by basic properties of a funnel.
	Finally, at $x_{a-1}$, we guard whole $F$ up to its apex.
	\end{proof}
	
	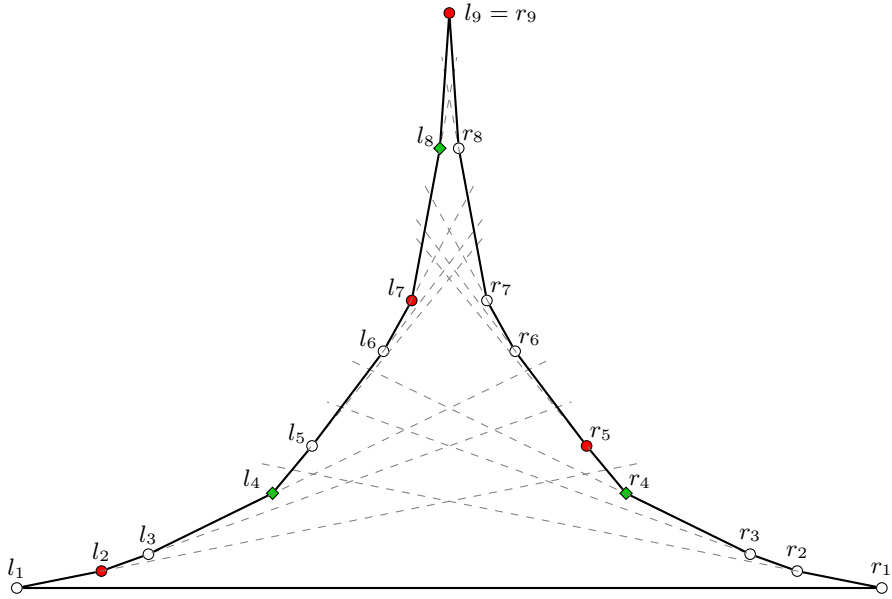
\begin{figure}[btp]
		\centering
		\begin{tikzpicture}[xscale=0.25, yscale= 0.45]
		
		\tikzstyle{every node}=[draw, shape=circle, minimum size=4pt,inner sep=0pt];
		\node[label=$l_1$] (A) at (-23,0) {};
		\node[fill=red, label=$l_2$] (B) at (-18.5,0.5) {};
		\node[label=$l_3$] (C) at (-16,1) {};
		\node[fill=green!80!black, label=150:$l_4~$, diamond,inner sep=1.2pt] (D) at (-9.4,2.8) {};
		\node[label=150:$l_5$] (E) at (-7.3,4.2) {};
		\node[label=150:$l_6$] (F) at (-3.5,7) {};
		\node[fill=red, label=150:$l_7$] (G) at (-2,8.5) {};
		\node[fill=green!80!black, label=150:$l_8$, diamond,inner sep=1.2pt] (H) at (-0.5,13) {};
		\node[fill=red, label=right:{$~l_9=r_9$}] (I) at (0,17) {};
		
		\node[label=$r_1$] (Q) at (23,0) {};
		\node[label=$r_2$] (P) at (18.5,0.5) {};
		\node[label=$r_3$] (O) at (16,1) {};
		\node[fill=green!80!black, label=30:$r_4$, diamond,inner sep=1.2pt] (N) at (9.4,2.8) {};
		\node[fill=red, label=30:$r_5$] (M) at (7.3,4.2) {};
		\node[label=30:$r_6$] (L) at (3.5,7) {};
		\node[label=30:$r_7$] (K) at (2,8.5) {};
		\node[label=30:$r_8$] (J) at (0.5,13) {};
		
		\coordinate (A') at (10.2,3.7);
		\coordinate (Q') at (-10.2,3.7);
		
		\coordinate (B') at (6.5,5.5);
		\coordinate (P') at (-6.5,5.5);
		\coordinate (C') at (5.5,6.8);
		\coordinate (O') at (-5.5,6.8);
		
		\coordinate (D') at (2,10.5);
		\coordinate (N') at (-2,10.5);
		
		\coordinate (E') at (1.9, 11);
		\coordinate (M') at (-1.9, 11);
		\coordinate (F') at (1.4, 12);
		\coordinate (L') at (-1.4, 12);
		
		\coordinate (G') at (0.4,15.7);
		\coordinate (K') at (-0.4,15.7);
		
		\coordinate (H') at (3.61, 5.6);
		\coordinate (J') at (-3.61, 5.6);
		
		\draw[color=gray, dashed] (A)--(A');
		\draw[color=gray, dashed] (Q)--(Q');
		\draw[color=gray, dashed] (D)--(D');
		\draw[color=gray, dashed] (N)--(N');
		\draw[color=gray, dashed] (E)--(E');
		\draw[color=gray, dashed] (F)--(F');
		\draw[color=gray, dashed] (M)--(M');
		\draw[color=gray, dashed] (L)--(L');
		\draw[color=gray, dashed] (B)--(B');
		\draw[color=gray, dashed] (C)--(C');
		\draw[color=gray, dashed] (P)--(P');
		\draw[color=gray, dashed] (O)--(O');
		\draw[color=gray, dashed] (G)--(G');
		\draw[color=gray, dashed] (K)--(K');
		
		\draw[thick]  (A)--(B)--(C)--(D)--(E)--(F)--(G)--(H)--(I)--(J)--(K)--(L)--(M)--(N)--(O)--(P)--(Q)--(A);
	
		\end{tikzpicture}
		\caption{A symmetric funnel with $17$ vertices. 
			The gray dashed lines show the upper tangents of the vertices.
			It is easy to see that Algorithm~\ref{alg:tightpath} selects $4$ guards,
			up to symmetry, at $l_2,r_5,l_7,l_9$ (the red vertices).
			However, the whole funnel can be guarded by three guards
			at $l_4,r_4,l_8$ (the green diamond vertices), and it will be the task of
			Algorithm~\ref{alg:newminfunguard} to consider such better possibility.}
		\label{fig:funnelguards}
	\end{figure}
	
	\begin{remark}
	Unfortunately, the guard set produced by Algorithm~\ref{alg:tightpath} may not be optimal under certain circumstances.
	See the example in Figure~\ref{fig:funnelguards}; the algorithm picks the four red vertices, but the funnel can be guarded by the three green 				vertices. 
	Nevertheless, this (possibly non-optimal) guard set will be very useful in the next section in the context of conflict-free colouring.
	\end{remark}
		
	For the sake of completeness, we now refine the simple approach of Algorithm~\ref{alg:tightpath} to always produce a minimum size guard set.
	Recall that Algorithm~\ref{alg:tightpath} always places one next guard based on the position of the previous one guard.
	Our refinement is going to consider also pairs of guards (one from the left and one from the right chain) in the procedure. 
	We correspondingly extend the definition of $\upseg$ to pairs of vertices as follows.
	Let $l_i$ and $r_j$ be vertices of $F$ on $\mathcal{L}$ and $\mathcal{R}$, respectively, such that $\upseg(l_i)=\lseg{l_{i+1}}p$ intersects 
	$\upseg(r_j)=\lseg{r_{j+1}}q$ in a point~$t$ (see in Figure~\ref{fig:tangents}).
	Then we set $\upseg(l_i,r_j)$ as the polygonal line (``$\vee$-shape'') $\lseg pt\cup\lseg qt$.
	In case that $\upseg(l_i)\cap\upseg(r_j)=\emptyset$, we simply define
	$\upseg(l_i,r_j)$ as the upper one of $\upseg(l_i)$ and	$\upseg(r_j)$.

	\begin{algorithm}[tbp]
		\KwIn{A funnel $F$ with concave chains $\mathcal{L}=(l_1, l_2,
			\ldots, l_k)$ and $\mathcal{R}=(r_1, \ldots, r_m)$.}
		\KwOut{A minimum vertex set guarding all the points of $F$.}
		\smallskip
		Initialise an auxiliary digraph $G$ with two dummy vertices $x$ and $y$,
		\mbox{and declare $\upseg(x) = \lseg{l_1}{r_1}$}\;
		Initialise $\ES{S} \gets \{x\}$\;
		\While {$\ES{S}$ is not empty}
		{
			Choose an arbitrary $t \in \ES{S}$, and remove $t$ from $\ES{S}$\;
			Let $s=\upseg(t)$
			\tcc*{$s$ is a segment or a $\vee$-shape}
			Let $i'$ and $j'$ be defined for $s$ as in Algorithm~\ref{alg:tightpath}\;
			
			Let $q$ and $p$ be the ends of $s$ on $\mathcal{L}$ and $\mathcal{R}$, respectively\;
			Let $i''$ and $j''$ be the largest indices such that
			\mbox{$l_{i''}$ lies strictly below $\upseg(p)$ and 
				$r_{j''}$ strictly below $\upseg(q)$}\;
			\label{line:7}
			\tcc*[f]{then $l_{i''}$ and $r_{j''}$ together can see whole $s$}
			
			Include the vertices $l_{i'}$, $r_{j'}$ and $(l_{i''},r_{j''})$ in~$G$\;
			
			\ForEach{$z\in\{l_{i'},r_{j'},(l_{i''},r_{j''})\}$}
			{
				Add the directed edge $(t,z)$ to~$G$, and
				\mbox{\quad assign $(t,z)$ weight $2$ if $z=(l_{i''},r_{j''})$,
					and weight $1$ otherwise}\;
			\label{line:9}
				\If{$\upseg(z)$ includes the apex $l_k=r_m$}
				{Add the directed edge $(z,y)$ to~$G$ of weight~$0$\;}
				\lElse(\tcc*[f]{more guards are needed above~$z$})
				{$\ES{S}\gets \ES{S}\cup\{z\}$}
			}
		}
		Enumerate a shortest weighted path from $x$ to $y$ in~$G$\; 
		Output the shortest path vertices without $x$ and $y$,
		but considering the possible guard pairs, as the required guard set\;
		
		\caption{Optimum vertex-to-point guarding of funnels (uncol.)}
		\label{alg:newminfunguard}
	\end{algorithm}

	Algorithm~\ref{alg:newminfunguard}, informally saying,
	enriches the two nondeterministic choices of placing the next guard
	in Algorithm~\ref{alg:tightpath} with a third choice;
	placing a suitable top-most pair of guards $z=(z_1,z_2)$, 
	$z_1\in\mathcal{L}$ and $z_2\in\mathcal{R}$,
	such that again no ``unguarded gap'' remains below $(z_1,z_2)$.
	Figure~\ref{fig:funnelguards} features a funnel in which placing such a pair
	of guards $(z_1=l_4,\>z_2=r_4)$ may be strictly better than using any two
	consecutive steps of Algorithm~\ref{alg:tightpath}.
	On the other hand, we can show that there is no better possibility than one of these three considered steps.
	Within the scope of Algorithm~\ref{alg:newminfunguard} (cf.~line~\ref{line:7}),
	we extend the definition range of $\upseg(\cdot)$
	to include all boundary points of $\mathcal{L}$ and $\mathcal{R}$, as follows.
	If $p$ is an internal point of $\lseg{l_i}{l_{i+1}}$, then we set
	$\upseg(p):=\upseg(l_i)$.
	If $p'$ is an internal point of $\lseg{r_j}{r_{j+1}}$, then we set
	$\upseg(p'):=\upseg(r_j)$.
	
	\begin{theorem}\label{thm:sizeguard-funnel}
		Algorithm~\ref{alg:newminfunguard} runs in polynomial time, and it outputs
		a feasible guard set of minimum size guarding all the points of a funnel $F$.
	\end{theorem}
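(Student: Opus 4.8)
The plan is to establish two things: first, that Algorithm~\ref{alg:newminfunguard} runs in polynomial time and outputs a \emph{feasible} guard set; second, that no feasible guard set can be smaller than the one produced. The feasibility part mirrors the proof of Lemma~\ref{lem:tightpath-feas}: by induction along the shortest weighted path $(x=x_0,x_1,\ldots,x_a=y)$ in $G$, the guards placed at $x_0,\ldots,x_i$ guard all points of $F$ below $\upseg(x_i)$, where now $x_i$ may be a single vertex or a pair $(l_{i''},r_{j''})$, and where $\upseg$ on a pair is the $\vee$-shape as defined just before the algorithm. The key geometric observation is the comment on line~\ref{line:7}: $l_{i''}$ being strictly below $\upseg(p)$ and $r_{j''}$ strictly below $\upseg(q)$ forces $l_{i''}$ to see the whole right part of $s$ up to $p$ and $r_{j''}$ to see the whole left part up to $q$, so together they see all of $s$; hence, by the same funnel property used in Lemma~\ref{lem:tightpath-feas}, they guard the strip between $\upseg(x_{i-1})=s$ and the $\vee$-shape $\upseg(x_i)$. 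Summing the edge weights along the path (weight $2$ for a pair, $1$ for a single vertex, $0$ into $y$) gives exactly the number of guards output, so a shortest weighted $x$--$y$ path corresponds to a minimum-cardinality guard set \emph{among those of this structured form}. Runtime is argued as before: the set of distinct segments/$\vee$-shapes $\upseg(t)$ that can arise is polynomially bounded (each determined by a vertex or a pair of vertices of $\mathcal L\cup\mathcal R$, and the recursion only climbs upward), each is processed once contributing $O(1)$ vertices and edges to $G$, and a shortest weighted path is found in polynomial time.

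The substantive part is lower-bound optimality: I must show that \emph{every} feasible vertex-guard set $Z$ of $F$ has size at least the weight of the shortest $x$--$y$ path. The natural approach is an exchange/normalisation argument. Given an arbitrary optimal $Z$, sort it bottom-up and process it greedily: I claim one can transform $Z$, without increasing its size, into a set that is ``canonical'' in the sense that it decomposes into blocks, each block being either a single vertex or a left-right pair, where consecutive blocks are related exactly by the $\upseg(\cdot)$ transition of Algorithm~\ref{alg:newminfunguard}. Concretely, look at the lowest guard(s) of $Z$ needed to cover the base strip $\lseg{l_1}{r_1}=\upseg(x)$: either a single vertex suffices, and then the topmost such single vertex is exactly $l_{i'}$ or $r_{j'}$ as the algorithm would choose (pushing the guard up can only help, since seeing $s$ from higher up on a concave chain still leaves everything below covered — here I need the monotonicity lemma that on a concave chain, if a vertex sees a cross-segment $s$ then so does the guarding requirement above it behave monotonically), or no single vertex covers the base and then two guards are required, one on each chain, and the topmost feasible such pair is $(l_{i''},r_{j''})$. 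In either case the chosen block has the same cost ($1$ or $2$) as $Z$ spent on covering that bottom strip, and the remaining problem is again a funnel-guarding problem above $\upseg$ of the chosen block, so induction applies. The claim ``$2$ guards on the same chain can always be replaced by $1$ higher guard or by a $1$-left/$1$-right split without loss, and there is never a reason to place more than $2$ guards to bridge a single $\upseg$-strip'' is the crux; it follows from the structure of visibility in a funnel (the visible region of any cross-segment from the two chains is itself ``interval-like'' along each chain).

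The main obstacle I anticipate is making this exchange argument fully rigorous, in particular proving that it never pays to use three or more guards to advance past one $\upseg$-transition, and that replacing a guard by a higher one on the same chain never breaks feasibility below. This requires a clean statement of the funnel visibility monotonicity: for a point $p$ on $\mathcal R$, the set of vertices of $\mathcal L$ that see the segment $\upseg(p)$ in its entirety is an upward-closed (or suitably contiguous) portion of $\mathcal L$, up to the vertex that is ``just below $\upseg(p)$'', which is precisely the index $i''$ defined on line~\ref{line:7}; and symmetrically. Once this is in hand, together with the observation that a vertex and the cross-segment it defines obey a nesting structure (higher $\upseg$'s are ``inside'' lower ones), the induction goes through. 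I would also need to double-check the boundary-case extension of $\upseg(\cdot)$ to internal points of edges (stated right before the theorem) is used consistently so that $i''$, $j''$ are always well defined, and that the $\vee$-shape degenerates gracefully to a single segment when $\upseg(l_i)\cap\upseg(r_j)=\emptyset$, matching the definition given for $\upseg(l_i,r_j)$ in that case.
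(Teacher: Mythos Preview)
Your proposal is essentially correct and follows the same overall approach as the paper: feasibility and runtime are handled as in Lemma~\ref{lem:tightpath-feas}, and optimality is shown by arguing that any feasible guard set can be ``normalised'' to one corresponding to an $x$--$y$ path in $G$ without increasing its size.

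The paper formalises the normalisation a little differently. Rather than an inductive bottom-up exchange, it introduces a partial order on guard sets: $A$ \emph{covers} $B$ if there is an injection $c:A\to B$ sending each guard to one on the same chain and no higher. Given an arbitrary feasible set $D$, the paper takes $A$ to be any feasible set covering $D$ that is \emph{maximal} in this order (so $|A|\le|D|$ automatically), and then shows directly that such a maximal $A$ is exactly the guard set of some directed $x$--$y$ path in $G$. The argument is: take a maximal path $P$ in $G$ whose guards are contained in $A$; if $P$ does not reach $y$, look at the lowest guards of $A\setminus B_P$ on each chain, and use maximality of $A$ to conclude they coincide with one of the three out-neighbours the algorithm adds, contradicting maximality of $P$. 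This packaging sidesteps the worry you flag about pushing guards up one at a time and re-checking feasibility after each push: maximality in the covers order already encodes ``every guard is as high as possible while the set stays feasible'', so no intermediate feasibility checks are needed. Your exchange argument is an inductive unwinding of the same idea; it works, but the monotonicity and ``never three guards for one strip'' claims you identify as obstacles are precisely what the covers-maximality formulation absorbs.
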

	\begin{proof}
		Proving polynomial runtime is analogous to Lemma~\ref{lem:tightpath-feas},
		only now obtaining a quadratic worst-scenario bound.
		Likewise the proof of feasibility of the obtained solution is analogous to
		the previous proof. 
		We only need to observe the following new claim:
		if $i''$and $j''$ are defined as on line~\ref{line:7} of
		Algorithm~\ref{alg:newminfunguard}, then $l_{i''}$ and $r_{j''}$ together
		can see whole $s$ and the strip of $F$ from $s$ till
		$\upseg(l_{i''},r_{j''})$.
		The crucial part is to prove optimality.
		
		Having two guard sets $A,B\subseteq V(F)$, we say that $A$ {\em covers} $B$ if there is an injection $c:A\to B$ such that, for each $a\in A$,
		the guard $c(a)$ is on the same (left or right) chain of~$F$ as $a$ and not higher than~$a$.
		Let $G$ be the digraph constructed by Algorithm~\ref{alg:newminfunguard} on~$F$. 
		Note that the weight of any $x$--$y$ path in $G$ equals the number of guards placed along it.
		Hence, together with claimed feasibility of the computed solution, it is enough to prove:
		\begin{itemize}
			\item For every feasible vertex guard set $D$ of $F$,
			there exists a feasible guard set $A$ which covers $D$, and
			$A$ has a corresponding directed $x$--$y$ path in the graph~$G$
			of Algorithm~\ref{alg:newminfunguard}.
		\end{itemize}
		
		Let $A$ be any feasible guard set of $F$ which covers given $D$,
		and such that $A$ is maximal w.r.t.~the cover relation.
		Let $P$ be a maximal directed path in $G$, starting from $x$, such that the set of guards $B_P$ listed in the vertices of $P$ (without~$x$)
		satisfies~$B_P\subseteq A$.
		Obviously, we aim to show that $P$ ends in~$y$.
		Suppose not (it may even be that $P$ is a single vertex~$x$ and $B_P=\emptyset$).
		Let $t$ be the last vertex of $P$ and denote by $s=\upseg(t)$ and let $q$ and $p$ be the ends of $s$ on $\mathcal{L}$ and $\mathcal{R}$,
		respectively, as in the algorithm.
		
		Let $A'=A\setminus B_P\not=\emptyset$.
		Then $s$ has to be guarded from $A'$ (while the whole part of $F$ below $s$ is already guarded by $B_P$ by feasibility of the algorithm).
		Let $i,j$ be such that $l_i\in A'\cap\mathcal{L}$ and $r_j\in A'\cap\mathcal{R}$ are the lowest guards on the left and right chain.
		Assume, up to symmetry, that $l_i$ can see whole~$s$.
		By our maximal choice of $A$ we have that no vertex on $\mathcal{L}$ above $l_i$ can see whole~$s$,
		and so the digraph $G$ contains an edge from $t$ to $l_i$ (line~\ref{line:9} of Algorithm~\ref{alg:newminfunguard}),
		which contradicts maximality of the path~$P$.
		
		Otherwise, neither of $l_i$, $r_j$ can see whole $s$,
		and so $l_i$ sees the end $p$ and $r_j$ sees the end~$q$.
		Consequently, $l_{i}$ is strictly below $\upseg(p)$ and            
		$r_{j}$ strictly below $\upseg(q)$, and they are topmost such vertices
		again by our maximal choice of $A$.
		Hence the digraph $G$ contains an edge from $t$ to $(l_i,r_j)$,
		as previously, which is again a contradiction concluding the proof.
	\end{proof}
	
	Lastly, we establish that the difference between Algorithms \ref{alg:tightpath} and \ref{alg:newminfunguard} 
	cannot be larger than~$1$ guard,
	because we would like to use the simpler Algorithm~\ref{alg:tightpath} instead of the latter one in subsequent applications.

	\begin{lemma}\label{lem:alg1-near-optimal}
		The guard set produced by Algorithm \ref{alg:tightpath} is always by at most one
		guard larger than the optimum solution produced by Algorithm~\ref{alg:newminfunguard}.
	\end{lemma}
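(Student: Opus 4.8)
The plan is to phrase both algorithms as shortest–path computations and compare them level by level. First I would record that Algorithm~\ref{alg:tightpath} (resp.\ Algorithm~\ref{alg:newminfunguard}) outputs a shortest (resp.\ shortest weighted) $x$--$y$ path in the digraph it builds, and that the length of that path equals the number of guards used; by Theorem~\ref{thm:sizeguard-funnel} the latter value for Algorithm~\ref{alg:newminfunguard} is the true optimum, which I denote $\mathrm{opt}$. I would also prove a monotonicity lemma: for vertices $u,u'$ on the same chain with $u$ not below $u'$, the part of $F$ lying above $\upseg(u)$ is contained in the part lying above $\upseg(u')$, and likewise for the $\vee$-shapes $\upseg(\cdot,\cdot)$; moreover, from any state $s$ the topmost vertex of $\mathcal L$ (resp.\ $\mathcal R$) seeing all of $s$ -- which is exactly the vertex $l_{i'}$ (resp.\ $r_{j'}$) chosen by Algorithm~\ref{alg:tightpath} -- reaches at least as high as any single guard that sees all of $s$. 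Consequently Algorithm~\ref{alg:tightpath} is optimal among all guard sets built by placing one guard at a time, so it suffices to bound this restricted optimum.

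For a state $s$ (an $\upseg$-segment, or a $\vee$-shape as occurring in Algorithm~\ref{alg:newminfunguard}) let $g(s)$ be the least number of guards -- single guards or admissible pairs -- needed to guard the part of $F$ above $s$, and let $f(s)$ be the least number using single placements only; both are monotone in $s$ by the above. The statement then amounts to $f(\lseg{l_1}{r_1})\le g(\lseg{l_1}{r_1})+1$, and I would prove the stronger claim that $f(s)\le g(s)+1$ for every state $s$, by induction on $g(s)$. If $g(s)=0$ then $s$ already reaches the apex and $f(s)=0$. Otherwise I fix an optimal plan from $s$ of the canonical form supplied by the proof of Theorem~\ref{thm:sizeguard-funnel} and look at its first move. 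If that move places a single guard $z$ seeing all of $s$ and proceeds to a state $s_z$ with $g(s_z)=g(s)-1$, then by monotonicity Algorithm~\ref{alg:tightpath} has an available single move on the chain of $z$ to a state $s'$ with the part of $F$ above $s'$ contained in the part above $s_z$; hence $f(s')\le f(s_z)$ and $g(s')\le g(s_z)$, and the induction hypothesis gives $f(s)\le 1+f(s')\le 1+g(s_z)+1=g(s)+1$.

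The remaining case -- the first optimal move is an admissible pair $(l,r)$ going to a $\vee$-shape $w$ with $g(w)=g(s)-2$ -- is the crux. Here I must exhibit a single move of Algorithm~\ref{alg:tightpath} from $s$ to a state $s'$ with $g(s')\le g(s)-1$; the same arithmetic as above, $f(s)\le 1+f(s')\le 1+(g(s')+1)\le g(s)+1$, then closes the induction. To produce such an $s'$ I would fix an optimal guard set $D$, $|D|=g(s)$, for the region above $s$ and examine how $D$ sees the two endpoints $q\in\mathcal L$ and $p\in\mathcal R$ of $s$. If some left-chain guard of $D$ lying at height at most that of Algorithm~\ref{alg:tightpath}'s left choice $l_{i'}$ already sees all of $s$, delete it: the remaining $g(s)-1$ guards still cover everything above $\upseg(l_{i'})$ (using that $l_{i'}$ itself now covers the strip from $s$ up to $\upseg(l_{i'})$), so $g(\upseg(l_{i'}))\le g(s)-1$ and we may take $s'=\upseg(l_{i'})$; symmetrically on the right. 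It remains to handle the genuine ``crossing'' configuration, where $D$ covers $s$ near $q$ only from the right chain and near $p$ only from the left chain. In this situation I would argue, using concavity of the two chains together with the monotonicity lemma, that the left-chain guard $l^{\dagger}\in D$ which sees $p$ lies no higher than $l_{i'}$ and that every point above $s$ seen by $l^{\dagger}$ is also seen once $l_{i'}$ is present (since $l_{i'}$ sees all of $s$ and, by monotonicity, $\upseg(l^{\dagger})$ lies no higher than $\upseg(l_{i'})$); hence $D\setminus\{l^{\dagger}\}$ together with $l_{i'}$ guards everything above $\upseg(l_{i'})$, which again yields $g(\upseg(l_{i'}))\le g(s)-1$.

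I expect this last ``crossing'' subcase to be the main obstacle: one has to verify carefully -- via the concavity of the two chains and the behaviour of upper tangents -- that one of the two greedy single moves of Algorithm~\ref{alg:tightpath} always ``pays for itself'' even when the optimum genuinely prefers a pair, so that over the whole run the single extra guard of the statement is spent at most once (intuitively, to compensate the one pair move that the one-guard-at-a-time procedure cannot imitate). Everything else is routine bookkeeping with the monotonicity lemma and the structural description of optimal plans from the proof of Theorem~\ref{thm:sizeguard-funnel}.
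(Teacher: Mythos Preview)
Your plan diverges from the paper's proof, and the ``crossing'' subcase contains a concrete gap.

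You assert that the left-chain guard $l^{\dagger}\in D$ which sees $p$ lies no higher than $l_{i'}$. This is backwards. By definition $l_{i'}$ is the \emph{topmost} left vertex that sees all of~$s$; in the genuine crossing configuration no single left guard of $D$ sees all of~$s$, so $l^{\dagger}$ must lie strictly above~$l_{i'}$. Indeed, the entire point of the pair move in Algorithm~\ref{alg:newminfunguard} (line~\ref{line:7}) is that the pair vertices $l_{i''},r_{j''}$ sit at least as high as, and typically strictly above, the single-move choices $l_{i'},r_{j'}$. With $l^{\dagger}$ above $l_{i'}$ you get $\upseg(l^{\dagger})$ above $\upseg(l_{i'})=s'$, so deleting $l^{\dagger}$ from $D$ need not leave the region above $s'$ guarded, and your conclusion $g(s')\le g(s)-1$ does not follow. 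More broadly, the inductive step you need---one greedy single move always drops the residual optimum by at least one---appears to be essentially as hard as the lemma itself; your sketch does not establish it.

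The paper avoids this difficulty by a different invariant. It proves, by induction on the number of edges traversed along a path $P^2$ in $G^2$, that if $P^2$ has weight~$k$ then there is a path in $G^1$ of length $k{+}1$ whose last vertex $x^1_{k+1}$ is strictly higher than every guard listed along $P^2$, and whose penultimate vertex $x^1_k$ dominates the $P^2$-guards on its own chain. The ``$+1$'' is thus a standing one-step buffer carried throughout, not a deficit to be amortised. When $P^2$ performs a pair step $(z_1,z_2)$, the buffer vertex $x^1_{k'+1}$ is already strictly above all earlier $P^2$-guards; from line~\ref{line:7} one sees that $z_2$ is visible from $x^1_{k'+1}$, so two greedy single steps of Algorithm~\ref{alg:tightpath}---first to the opposite chain (landing at least as high as $z_2$), then back (landing strictly above $z_1$)---match the pair and restore the buffer. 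Hence every pair step is absorbed by two single steps, and the single extra guard is paid exactly once, at the outset, rather than having to be justified anew at each pair move.
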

	
	\begin{proof}
	Let $G^1$ with the source $x^1$ be the auxiliary graph produced by Algorithm~\ref{alg:tightpath},
	and $G^2$ with the source $x^2$ be the one produced by Algorithm~\ref{alg:newminfunguard}.
	We instead prove the following refined statement by induction on~$i\geq0$.
	Recall the detailed inductive statement we are going to prove now:
		\begin{itemize}
			\item Let $P^2=(x^2=x^2_0,x^2_1,\ldots,x^2_i)$ be any directed path in $G^2$ 
			of weight $k$, let $Q^2$ denote the set of guards listed in the vertices of $P^2$, and
			$L^2=\mathcal{L}\cap Q^2$ and $R^2=\mathcal{R}\cap Q^2$.
			Then there exists a directed path $(x^1=x^1_0,x^1_1,\ldots,x^1_k,x^1_{k+1})$ in $G^1$
			(of length $k+1$), such that the guard of $x_k$ is at least as high as all the
			guards of $L^2$ (if $x_k\in\mathcal{L}$) or of $R^2$ (if
			$x_k\in\mathcal{R}$), and the guard of $x_{k+1}$ is strictly higher than all
			the guards of $Q^2$.
		\end{itemize}
		
		The claim is trivial for $i=0$, and so we assume that $i\geq1$ and the claim
		holds for the shorter path ${P^2}'=(x^2=x^2_0,x^2_1,\ldots,x^2_{i-1})$ of
		weight $k'$ in $G^2$, hence providing us with a path
		$(x^1=x^1_0,x^1_1,\ldots,x^1_{k'},x^1_{k'+1})$ in $G^1$.
		If $k=k'+1$ (i.e., $x^2_i$ represents a single guard),
		Algorithm~\ref{alg:tightpath} can ``duplicate'' the move, hence making
		$x^2_i$ or a higher vertex $x^2_{i'}$ on the same chain 
		an out-neighbour of $x^1_{k'+1}$ in $G^1$.
		Then we set $x^1_{k'+2}=x^1_{k+1}=x^2_{i'}$ and we are done.
		
		If $k=k'+2$ (i.e., $x^2_i$ represents a pair of guards $z_1,z_2$),
		we proceed as follows.
		Up to symmetry, assume $x^1_{k'+1}\in\mathcal{L}$ 
		and $z_1\in\mathcal{L}$, $z_2\in\mathcal{R}$.
		By the induction assumption, we know that $x^1_{k'+1}$ is strictly higher
		(on $\mathcal{L}$) than the guards from $L^2\setminus\{z_1\}$.
		We choose $x^1_k$ as the out-neighbour of $x^1_{k'+1}$ in $G^1$ that lies on
		$\mathcal{R}$, and $x^1_{k+1}$ as the out-neighbour of $x^1_{k}$ in
		$G^1$ that lies back on $\mathcal{L}$. 
		From Algorithm~\ref{alg:newminfunguard} (line~\ref{line:7})
		it follows that $z_2$ sees $x^1_{k'+1}$, and so $x^1_k$ is at least as high
		on $\mathcal{R}$ as~$z_2$.
		Consequently, $x^1_{k+1}$ lies on $\mathcal{L}$ strictly higher than~$z_1$
		(which sees the highest guard from $R^2\setminus\{z_2\}$), and we are again done.
	\end{proof}

	\section{Vertex-to-Point Conflict-Free Chromatic Guarding of Funnels}
	 \label{sec:funnelcolouring}
	
	In this section, we continue to study funnels.
	To obtain a conflict-free coloured solution, we will simply
	consider the guards chosen by Algorithm~\ref{alg:tightpath} in the ascending
	order of their vertical coordinates, and colour them in the \emph{ruler sequence},
	(e.g.,~\cite{guy-1994}) in which the $i^{th}$ term is the exponent of the largest power of $2$ that divides $2i$. 
	(The first few terms of it are $1,2,1,3,1,2,1,4,1,2,1,3,1,2,$ $1,5,1,2,1,3\dots$.)
	So, if Algorithm~\ref{alg:tightpath} gives $m$ guards, then our approach
	will use about $\log m$ colours.
	
	Our aim is to show that this is always very close to the optimum, by giving a lower bound
	on the number of necessary colours of order $\log m-O(1)$.
	To achieve this, we study the following two sets of guards for a given funnel~$F$:
	\begin{itemize}
		\item The minimal {\em guard set $A$} computed by Algorithm~\ref{alg:tightpath} on~$F$
		(which is overall nearly optimal by Lemma~\ref{lem:alg1-near-optimal});
		if this is not unique, then we fix any such~$A$.
		\item A {\em guard set $D$} which achieves the minimum number of colours for
		conflict-free guarding; note that $D$ may be much larger than $A$ since it
		is the number of colours which matters.
	\end{itemize}
	On a high level, we are going to show that the colouring of $D$
	must (somehow) copy the ruler sequence on~$A$.
	For that we will recursively bisect our funnel into smaller ``layers'',
	gaining one unique colour with each bisection.
	
	Recall that our funnel $F$ consists of the concave chains $\mathcal{L}=(l_1, l_2,
	\ldots, l_k)$ and $\mathcal{R}=(r_1, r_2, \ldots, r_m)$.
	Analogously to the notion of an upper tangent from
	Section~\ref{su:numguardsfun}, we define the {\em lower tangent}
	of a vertex $l_i\in\mathcal{L}$, denote by $\lowtan(l_i)$,
	as the ray whose origin is $l_i$ and which passes through~$r_j\in\mathcal{R}$
	such that $r_j$ is the lowest vertex on $\mathcal{R}$ seeing $l_i$.
	Note that $\lowtan(l_i)$ may intersect $\mathcal{R}$ in $r_j$ alone or in a
	segment $\overline{r_jr_{j+1}}$.
	Let $\lowseg(l_i):=\lseg{l_i}{r_j}$.
	The definition of $\lowtan()$ and $\lowseg()$ for vertices of $\mathcal{R}$ is
	symmetric.
	
	We now give a definition of ``layers'' of a funnel which is the
	first crucial term for our proof.
	
	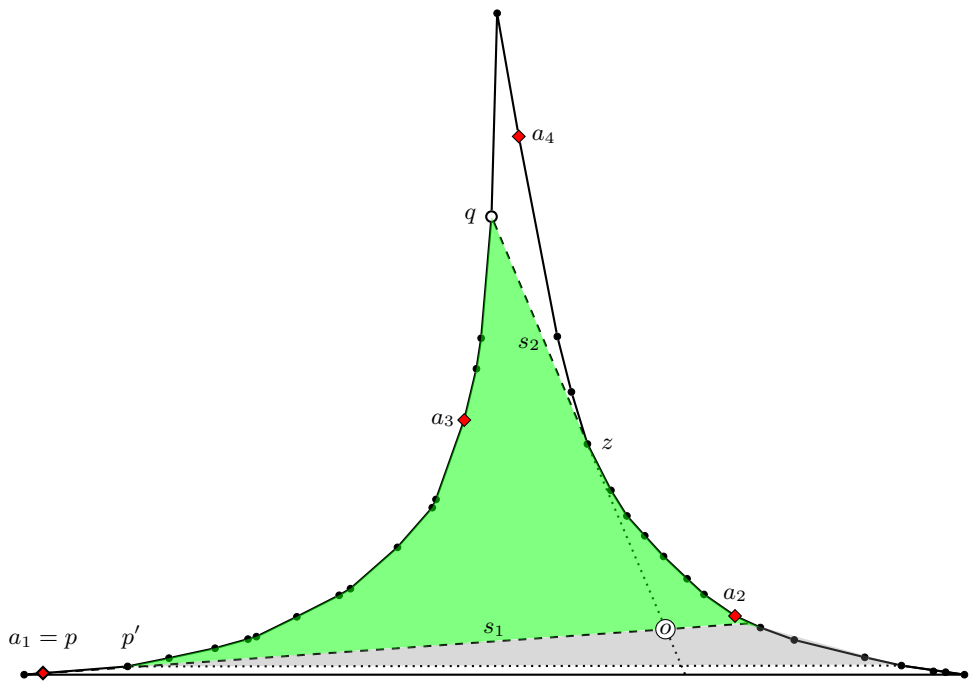
\begin{figure}[tbp]
		\centering
		\begin{tikzpicture}[yscale=1.1, xscale=1.25]
		
		\tikzstyle{every node}=[draw, shape=circle, fill=black, minimum size=2.5pt,inner sep=0pt];
		\node (1) at (-5,0) {};
		\node[label=above:~~$p'$~~] (2) at (-3.9, 0.1) {};
		\node (3) at (-3.46, 0.2) {};
		\node (4) at (-2.97, 0.32) {};
		\node (5) at (-2.62, 0.43) {};
		\node (6) at (-2.53, 0.46) {};
		\node (7) at (-2.1, 0.7) {};
		\node (8) at (-1.65, 0.96) {};
		\node (9) at (-1.53, 1.04) {};
		\node (10) at (-1.03, 1.54) {};
		\node (11) at (-0.66, 2.02) {};
		\node (12) at (-0.62, 2.12) {};
		\node (13) at (-0.32, 3.08) {};
		\node (14) at (-0.19, 3.7) {};
		\node (15) at (-0.14, 4.07) {};
		\node[label=left:$q\>$, fill=white,thick,minimum size=4pt] (16) at (-0.03, 5.54) {};
		\node (17) at (0.03, 8) {};
		\node[fill=red,minimum size=5pt,diamond, label=right:$\,a_4$] (18) at (0.26, 6.51) {};
		\node (19) at (0.67, 4.09) {};
		\node (20) at (0.82, 3.42) {};
		\node[label=right:~$z$] (21) at (0.99, 2.79) {};
		\node (22) at (1.24, 2.23) {};
		\node (23) at (1.41, 1.92) {};
		\node (24) at (1.6, 1.68) {};
		\node (25) at (1.8, 1.43) {};
		\node (26) at (2.05, 1.16) {};
		\node (27) at (2.23, 0.97) {};
		\node (28) at (2.56, 0.71) {};
		\node (29) at (2.83, 0.57) {};
		\node (30) at (3.19, 0.42) {};
		\node (31) at (3.94, 0.21) {};
		\node (32) at (4.33, 0.11) {};
		\node (33a) at (4.67, 0.04) {};
		\node (33b) at (4.8, 0.03) {};
		\node (33) at (5, 0) {};
		\foreach \i in {1,...,32} 
		{
			\pgfmathtruncatemacro\j{\i+1};
			\draw[thick] (\i)--(\j);
		}
		\draw[thick] (32)--(33)--(1);
		
		\coordinate(1') at (2.7,0.62);
		\coordinate(16') at (2.03, 0);
		\coordinate(1int) at (2.6,0.62); 
		\coordinate(16int) at (2.02,0);
		\draw[dashed, thick] (1)--(1');
		\draw[dashed, thick] (16)--(21);
		\draw[dotted, thick] (21)--(16');
		\draw[dotted, thick] (2)--(32);
		\fill[opacity=0.3,gray] (2) -- (32) -- (1') -- (2);
		\fill[opacity=0.5,green] (1'.center)--(2.center)--(3.center)--(4.center)--(5.center)--(6.center)--(7.center)--(8.center)--(9.center)--(10.center)--(11.center)--(12.center)--(13.center)--(14.center)--(15.center)--(16.center)--(21.center)--(22.center)--(23.center)--(24.center)--(25.center)--(26.center)--(27.center)--(28.center)--(29.center)--(1int)--(1.center);

		\node[fill=white, inner sep=0.5pt] at (1.82,0.55) {{\small $o$}};
		\node[fill=red,minimum size=5pt,thick,diamond, label=above:{$\!\!\!a_1=p\!\!\!$}] (1g) at (-4.8,0.02) {};
		\node[fill=red,minimum size=5pt,diamond, label=left:$a_3$] (13g) at (-0.32, 3.08) {};
		\node[fill=red,minimum size=5pt,diamond, label=above:$a_2$] (28g) at (2.56, 0.71) {};
		\node[draw=none,fill=none, inner sep=0.5pt] at (0,0.55) {{$s_1$}};
		\node[draw=none,fill=none, inner sep=0.5pt] at (0.37,4) {{$s_2$}};
		\end{tikzpicture}
		\caption{An example of a $2$-interval $Q$ of a funnel ($Q$ filled green and
			bounded by $s_1=\upseg(p)$ and $s_2=\lowseg(q)$).
			The red diamond vertices $a_1=p,a_2,a_3,a_4$ are the guards computed
			by Algorithm~\ref{alg:tightpath}, and $a_2,a_3$ belong to
			the interval~$Q$.
			Note that $p$ and $q$ by definition do not belong to~$Q$. 
			The shadow of $Q$ (filled light gray) is
			bounded from below by the dotted line $\lowseg(p')$, and the
			inner point $o$ on $s_1$ is the so-called {\em observer} of $Q$
			(seeing all vertices of $Q$ and, possibly, some vertices
			in the shadow). }
		\label{fig:k-interval}
	\end{figure}
	
	\begin{definition}[$t$-interval]\label{df:kinterval}\rm
		Let $F$ be a funnel with the
		and $A$ be the fixed guard set $A$ computed by Algorithm~\ref{alg:tightpath} on~$F$.
		Let $s_1$ be the base of $F$, or $s_1=\upseg(p)$ for some vertex $p$ 
		of $F$ (where $p$ is not the apex or its neighbour).
		Let $s_2$ be the apex of $F$, or
		$s_2=\lowseg(q)=\overline{qz}$ for some vertices $q$ and $z$
		of $F$ (where $q$ is not in the base of~$F$).
		Assume that $s_2$ is above $s_1$ within~$F$.
		Then the region $Q$ of $F$ bounded from below by $s_1$ and from above by
		$s_2$ is called an {\em interval of~$F$}.
		For technical reasons, $Q$ includes the vertex $z$ but
		excludes the rest of the upper boundary~$s_2$.
		Moreover, $Q$ is called a {\em$t$-interval of~$F$} if $Q$ contains at least
		$t$ of the guards of~$A$.
		See Figure~\ref{fig:k-interval} (where $p,q$ are chosen on
		the same chain, but they may also lie on opposite chains).
		
		Having an interval $Q$ of the funnel $F$, bounded from below by $s_1$,
		we define the {\em shadow of $Q$} as follows.
		If $s_1=\upseg(l_i)$ ($s_1=\upseg(r_j)$), then the shadow consists of the 
		region of $F$ between $s_1$ and $\lowseg(l_{i+1})$
		(between $s_1$ and $\lowseg(r_{j+1})$, respectively).
		If $s_1$ is the base, then the shadow is empty.
	\end{definition}
	
	\begin{lemma}\label{lem:11interval}
		If $Q$ is a $13$-interval of the funnel $F$, then there exists a point
		in $Q$ which is not visible from any vertex of $F$ outside of~$Q$.
	\end{lemma}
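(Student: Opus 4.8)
The plan is to show that each vertex of $F$ outside $Q$ can see only a ``shallow'' portion of $Q$ near its lower or upper boundary, and that these portions together cannot absorb all $13$ guards of $A$ contained in~$Q$; an uncovered guard then serves as the required point.

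First I split the vertices of $F$ lying outside $Q$ into those (weakly) below $s_1$ and those (weakly) above $s_2$. Consider the lower group and write $s_1=\upseg(l_a)=\lseg{l_{a+1}}{p^*}$ with $l_a\in\mathcal{L}$ and $p^*$ on the right chain (the cases $l_a\in\mathcal{R}$, or $s_1$ the base, are symmetric or trivial). The vertex $l_a$ itself sees only the part of $F$ below $s_1$, hence nothing of~$Q$. The vertex $l_{a+1}$ (the lower corner of $Q$, which by convention is not in~$Q$) sees within $F$ only points below $\upseg(l_{a+1})$, by the reflex-blocking property of a funnel, hence within $Q$ only the bottom strip $L_1:=Q\cap\{\text{points below }\upseg(l_{a+1})\}$. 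Finally, every vertex strictly below $s_1$ can reach a point of $Q$ only by ``peeking over'' the reflex vertex $l_{a+1}$; the set of such points lies in the shadow of $Q$ from Definition~\ref{df:kinterval}, which is contained in $L_1$ because $\lowseg(l_{a+1})$ lies below $\upseg(l_{a+1})$ inside $F$. So the whole lower group sees, within $Q$, only points of~$L_1$. A symmetric analysis at the upper boundary $s_2=\lowseg(q)=\lseg{q}{z}$ shows that the upper group of outside vertices sees within $Q$ only points of a top strip $L_2$, bounded from below by a tangent segment lying only a bounded number of tangent-levels below~$s_2$ (here the convention that $z\in Q$ while the rest of $s_2$ is excluded has to be tracked).

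Next I bound the number of guards of $A$ inside $L_1$, and symmetrically inside~$L_2$. Recall that $A$ is produced by Algorithm~\ref{alg:tightpath}, which proceeds bottom-up and always places the next guard at the \emph{topmost} vertex on either chain that still sees the whole previous segment $\upseg(\cdot)$. The first guard of $A$ in $Q$ sees all of $s_1$ and is placed as high as possible subject to that; using concavity of $\mathcal{L}$ and $\mathcal{R}$ together with the reflex-blocking property, this forces it to sit essentially at the level of $\upseg(l_{a+1})$, and each later guard then climbs strictly higher. A short case analysis (splitting on which chain the corner and the relevant guards lie) shows that at most a fixed constant of the guards of $A$ lie in $L_1$, and likewise in~$L_2$; I expect this constant to be about $6$ per side. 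Since $Q$ is a $13$-interval, its $13$ guards of $A$ cannot all lie in $L_1\cup L_2$, so there is a guard $g^\ast\in Q$ with $g^\ast\notin L_1\cup L_2$. By the previous paragraph, $g^\ast$ is seen by no vertex of $F$ outside~$Q$, so $g^\ast$ is the desired point.

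The main obstacle is the quantitative estimate of the last paragraph: controlling how many of the greedily chosen guards of Algorithm~\ref{alg:tightpath} can fall into the near-boundary strips $L_1$ and $L_2$. This is precisely where the constant $13$ is produced, and it requires careful bookkeeping of the tangent segments $\upseg$ and $\lowseg$ around $s_1$ and $s_2$ --- distinguishing the chain on which each corner and guard lies, repeatedly invoking concavity of the two chains and the ``topmost vertex seeing the whole segment'' rule, and being careful with the boundary conventions for~$Q$ (including the asymmetric treatment of the vertex $z$ on $s_2$). Once this estimate is in place, the splitting of the outside vertices, the identification of their visible parts of $Q$ with $L_1\cup L_2$, and the final counting are all routine.
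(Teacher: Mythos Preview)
Your geometric visibility analysis for the ``lower group'' has a genuine gap. You write that every vertex strictly below $s_1$ can reach $Q$ only by peeking over the reflex vertex $l_{a+1}$, and hence sees only points in a shallow bottom strip $L_1$. This is only plausible for left-chain vertices $l_i$ with $i\le a$ (and in fact those see nothing of $Q$ at all, since $s_1=\upseg(l_a)$). It is simply false for right-chain vertices. A vertex $r_j$ on $\mathcal R$ just below $p^*$ has no obstruction at $l_{a+1}$; its upward visibility into $\mathcal L$ is controlled by $\uptan(r_j)$, which in a suitably shaped funnel may hit $\mathcal L$ arbitrarily high inside~$Q$. So such an $r_j$ can see left-chain vertices of $Q$ far above $\upseg(l_{a+1})$, and your region $L_1$ does not capture what the lower outside vertices see. (You also identify the shadow of $Q$ with a subset of $L_1$; by Definition~\ref{df:kinterval} the shadow lies between $\lowseg(l_{a+1})$ and $s_1$, hence below $s_1$ and not in $Q$ at all.) Because the basic covering claim $L_1\cup L_2 \supseteq \{\text{points of }Q\text{ visible from outside}\}$ fails, the later counting argument never gets off the ground, regardless of the constant you aim for.

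The paper's proof does not try to carve out purely geometric near-boundary strips. Instead it uses pigeonhole to get seven guards of $A$ on one chain, say $a,b,c,d,e,f,g$ on $\mathcal L$, and argues by contradiction that the middle guard $d$ cannot be seen from any vertex outside~$Q$. The point is that if some $x\in\mathcal R$ below $\uptan(a)\cap\mathcal R$ saw $d$, then $a$, $x$, $d$ together would see everything that $b$ and $c$ see, and Algorithm~\ref{alg:tightpath} would have produced a shorter path in $G$ by taking $x$ in place of $b,c$ --- contradicting the choice of~$A$. A symmetric argument with $e,f,g$ handles visibility from above. So the constant $13$ is obtained not from a strip-width estimate but from a replacement argument exploiting the specific greedy structure of Algorithm~\ref{alg:tightpath}. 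If you want to repair your approach, you will need an argument of this kind to handle visibility from the \emph{opposite} chain; a purely local tangent-level bound at $s_1$ and $s_2$ is not enough.
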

	\begin{proof}
		By definition, a $13$-interval has thirteen guards from~$A$ in it. By 
		the pigeon-hole principle, at least seven of these guards are on the same chain.
		Without the loss of generality, let these seven guards lie on the left chain $\mathcal{L}$ of $F$.
		Let us denote these guards by $a$, $b$, $c$, $d$, $e$, $f$ and $g$ in the bottom-up order, respectively.
		We show that the guard $d$ is not seen by any viewer outside of the $13$-interval.
		
		Suppose that $d$ can be seen by a vertex of $\mathcal{R}$ which lies below $\uptan(a) \cap \mathcal{R}$.
		This means that the vertex of $\mathcal{R}$ immediately below $\uptan(a) \cap \mathcal{R}$ 
		(say, denoted by~$x$) also sees $d$. 
		
		Since $x$ is below $\uptan(a) \cap \mathcal{R}$, $x$ sees all vertices of $\mathcal{L}$ between $b$ and $d$,
		including both $b$ and $d$.
		Additionally, if $b$ is not the immediate neighbour of $a$ on $\mathcal{L}$,
		then $x$ sees the vertex of $\mathcal{L}$ immediately below $b$ as well.
		Thus, $x$ sees all points seen by $b$ or $c$ on $\mathcal{L}$.
		
		Since $x$ sees $d$, all points of $\mathcal{R}$ that are seen by $b$ or $c$ and lie above $x$, are also seen by $d$.
		Since $x$ lies below $\uptan(a) \cap \mathcal{R}$, $a$ sees all points of $\mathcal{R}$
		between and including $x$ and $\lotan(a) \cap \mathcal{R}$.
		Since $a$ lies below $b$ and $c$ on $\mathcal{L}$, none among $b$ and $c$ can see any vertex below $\lotan(a) \cap \mathcal{R}$.
		Thus, $d$ and $a$ see all points seen by $b$ or $c$ on $\mathcal{R}$.
		
		The above arguments show that $a$, $d$ and $x$ together see all points on the two concave chains seen by $b$ and $c$.
		Observe that since $x$ is the vertex immediately below $\uptan(a) \cap \mathcal{R}$, Algorithm \ref{alg:tightpath}
		includes in $G$ either $x$, or a higher vertex $x'$ of $\mathcal{R}$ which sees the points of $F$ exclusively seen by $x$.
		This means Algorithm \ref{alg:tightpath} must choose $x$ (or, $x'$) instead of $b$ and $c$ to optimize on the number of 
		guards. Hence, we have a contradiction, and no vertex below the $13$-interval can see $d$.
		
		Now suppose that $d$ is seen by a vertex $y$ lying above the $13$-interval. Since the $13$-interval contains
		two more guards on $\mathcal{L}$ above $d$, the vertex $y$ can certainly not lie on $\mathcal{L}$.
		Therefore, $y$ must lie on $\mathcal{R}$
		If the upper segment of the $13$-interval is defined
		by $\lowseg(v)$, where $v \in \mathcal{R}$, then $d$, $e$, $f$ and $g$ must lie below
		$\lowseg(v) \cap \mathcal{L}$.
		This means, $\uptan(d) \cap \mathcal{R}$ must lie below $v$. But to see $d$, $y$ must lie below $v$.
		This makes $y$ a vertex contained in the $13$-interval. So, $v$ must lie on~$\mathcal{L}$.
		
		So, we assume that $v$ lies on $\mathcal{L}$. At the worst case, $v$ can be the guard $f$. 
		This means, $\uptan(d) \cap \mathcal{R}$ is above $lot(g) \cap \mathcal{R}$, and $y$ lies on the segment
		of $\mathcal{R}$ between these two points. But then, by an argument similar to above, $d$, $g$ and $y$ together see 
		everything that is seen by $d$, $e$, $f$ and $g$, and so Algorithm \ref{alg:tightpath} would choose only $d$, $g$
		and $y$ to get a shortest path in $G$, a contradiction. So, the point $d$ is not visible from any vertex outside of 
		the $13$-interval.
	\end{proof}

	Our second crucial ingredient is the possibility to ``almost privately'' see the vertices of an interval
	$Q$ from one point as follows.
	Recall the notation of Definition~\ref{df:kinterval}, and Figure~\ref{fig:k-interval}.
	If $s_2=\lowseg(q)$, then the intersection point of
	$\lowtan(q)$ with $s_1$ is called the {\em observer of~$Q$}.
	(Actually, to be precise, we should slightly perturb this position of the
	observer $o$ so that the visibility between $o$ and $q$ is blocked by~$z$.
	To keep the presentation simple, we neglect this detail.)
	If $s_2$ is the apex, then consider the spine of $F$ instead of $\lowtan(q)$.
	
	\begin{lemma}\label{lem:observer}
		The observer $o$ of an interval $Q$ in a funnel $F$
		can see all vertices of $Q$, but $o$ cannot see any vertex of $F$
		which is not in $Q$ and not in the shadow of~$Q$.
	\end{lemma}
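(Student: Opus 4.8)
The plan is to establish the two claims separately: first that the observer $o$ sees every vertex of $Q$, and second that $o$ sees nothing outside $Q\cup\mathrm{shadow}(Q)$. Throughout I would keep the picture of Figure~\ref{fig:k-interval} in mind, with $s_1$ the lower boundary of $Q$ and $s_2=\lowseg(q)=\overline{qz}$ the upper boundary (the apex case being an easy degeneration where the spine plays the role of $\lowtan(q)$, so I would treat the generic case and remark on the apex case at the end).

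For the first claim, recall that $o$ is by construction the intersection of $\lowtan(q)$ with $s_1$. Since $\lowtan(q)$ is, by definition, the lowest line of sight from $q$ hitting the opposite chain, $o$ sees $q$ (after the mentioned infinitesimal perturbation so that $z$ just barely blocks it), and hence $o$ sees the whole segment $s_2$ up to $z$. Now take any vertex $v$ of $Q$. The region $Q$ is bounded below by $s_1$ (on which $o$ lies) and above by $s_2$; I would argue that the segment $\overline{ov}$ stays inside $F$ by a standard convexity argument on the two concave chains: $v$ lies on one of the chains, say $\mathcal L$, between the foot of $s_1$ on $\mathcal L$ and the foot of $s_2$ on $\mathcal L$; since $o$ lies on $s_1$ and $o$ already sees the top boundary point $z$, and the left chain is concave, the sightline $\overline{ov}$ is sandwiched between $\overline{o\,(\text{foot of }s_1)}$ and $\overline{oz}$, both of which lie in $F$; concavity of $\mathcal R$ then guarantees $\overline{ov}$ does not exit through the right chain either. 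So $o$ sees all vertices of $Q$.

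For the second claim, suppose $o$ sees some vertex $w$ of $F$ with $w\notin Q$ and $w\notin\mathrm{shadow}(Q)$. There are two cases. If $w$ lies above $s_2$: the sightline $\overline{ow}$ would have to cross $s_2=\overline{qz}$, and since $o$ sees $z$ but (by the perturbation) not past $q$, any point of $F$ strictly above the line $\lowtan(q)$ on the far side is hidden behind $z$; more carefully, $w$ above $s_2$ forces $\overline{ow}$ to pass on the $q$-side of $z$, where it is blocked by the chain containing $z$ — contradiction. If $w$ lies below $s_1$: here I use the shadow definition. Write $s_1=\upseg(l_i)$ (the case $s_1=\upseg(r_j)$ is symmetric, and if $s_1$ is the base there is simply nothing below it). The point $o$ lies on $\upseg(l_i)=\overline{l_{i+1}p}$, so $o$ is not below $l_{i+1}$; a vertex $w$ below $s_1$ that $o$ can see must be "close" to $s_1$, and the farthest-down such vertex is exactly the one cut off by $\lowseg(l_{i+1})$ — which is precisely the lower boundary of the shadow. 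Concretely, the lowest line of sight from $l_{i+1}$ to the opposite chain is $\lowtan(l_{i+1})$, and any vertex strictly below $\lowseg(l_{i+1})$ is separated from every point of $s_1$ (in particular from $o$) by the reflex vertex $l_{i+1}$ (or its counterpart on $\mathcal R$); hence $w$ lies in the region between $s_1$ and $\lowseg(l_{i+1})$, i.e. in the shadow of $Q$ — contradiction.

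The main obstacle I anticipate is the second claim's below-$s_1$ case: making rigorous that "$o$ sees nothing below $\lowseg(l_{i+1})$" requires pinning down that the blocker is the reflex vertex $l_{i+1}$ and invoking the concavity of the chain together with the fact that $o\in\upseg(l_i)$ lies on the segment emanating from $l_{i+1}$. The above-$s_2$ case and the first claim are comparatively routine applications of chain-concavity. I would also explicitly note at the end that the small perturbation of $o$ (so that $z$ blocks visibility to $q$) is what makes the boundary bookkeeping — $Q$ including $z$ but excluding the rest of $s_2$ — come out exactly right.
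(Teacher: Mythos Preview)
Your approach is essentially the same as the paper's: both argue visibility of $Q$'s vertices via concavity of the two chains, then handle non-visibility above $s_2$ using $z$ as blocker and below the shadow using $l_{i+1}$ (and its $\mathcal R$-counterpart) as blocker. The paper's proof additionally splits on whether the vertex $p$ defining $s_1$ lies on the same chain as $q$ or the opposite one, but this is only a minor organisational difference.

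One point worth tightening: in the below-$s_1$ case you argue that $l_{i+1}$ separates every point of $s_1$ from any vertex strictly below $\lowseg(l_{i+1})$. This is not quite true as stated, because $o$ taken \emph{exactly} on $s_1=\upseg(l_i)$ is collinear with $l_i$ and $l_{i+1}$, and in the paper's visibility model (lines of sight may graze the boundary) $o$ would then see $l_i$ --- a vertex that is neither in $Q$ nor in the shadow. The paper handles this by noting that the same infinitesimal perturbation of $o$ off $s_1$ (upward into $Q$) that you already use to let $z$ block $q$ also lets $l_{i+1}$ block $l_i$ and the lower left chain. You anticipate this as ``the main obstacle'' and mention the perturbation in your closing remark, but only in connection with the $s_2$ bookkeeping; you should invoke it explicitly in the below-$s_1$ argument as well.
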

	\begin{proof}
		Without the loss of generality let $q$ lie on $\mathcal{L}$.
		Since $o$ is at the intersection point of $\lowtan(q)$ with $s_1$, the observer 
		$o$ sees all vertices of the chain $\mathcal{L}$ strictly between
		its intersection with $s_1$ and~$q$ (recall that the end $p$ of
		$s_1$ may also lie on $\mathcal{R}$, but then this claim also holds).
		Likewise, $o$ sees all vertices of $\mathcal{R}$ between    
                its intersection with $s_1$ and~$z$ (other end of~$s_2$), and including $z$.

		Conversely, by the definition, $o$ is blocked from seeing
		$q$ and all vertices of $\mathcal{L}$ and $\mathcal{R}$
		above $q$ and $z$, respectively.
		If $p$ lies on the same chain of $F$ as $q$ (so $p\in\mathcal{L}$),
		as in Figure~\ref{fig:k-interval}, then $o$, which is
		slightly perturbed from $s_1=\upseg(p)$ up, cannot see $p$
		and vertices of $\mathcal{L}$ below~$p$.
		Furthermore, since $o$ is placed above the lower boundary of
		the shadow of~$Q$, it also cannot see any vertex of
		$\mathcal{R}$ below this shadow.
		The remaining case of $p\in\mathcal{R}$ is fully symmetric
		to the latter argument.
	\end{proof}
	
	The last ingredient before the main proof is the notion
	of sections of an interval $Q$ of~$F$.
	Let $s_1$ and $s_2$ form the lower and upper boundary of~$Q$.
	Consider a vertex $l_i\in\mathcal{L}$ of~$Q$.
	Then the {\em lower section of $Q$ at $l_i$} is the interval of $F$ bounded
	from below by $s_1$ and from above by $\lowseg(l_i)$.
	The {\em upper section of $Q$ at $l_i$} is the interval of $F$ bounded
	from below by $\upseg(l_i)$ and from above by $s_2$.
	Sections of $r_j\in\mathcal{R}$ are defined analogously.
	
	\begin{lemma}\label{lem:ksection}
		Let $Q$ be a $t$-interval of the funnel $F$, and let $Q_1$ and $Q_2$ be its
		lower and upper sections at some vertex $p$.
		Then $Q_i$, $i=1,2$, is a $t_i$-interval such that $t_1+t_2\geq t-3$.
		(In other words, at most $3$ of the $A$-guards in $Q$ are not in $Q_1\cup Q_2$.)
	\end{lemma}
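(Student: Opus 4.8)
By the left–right symmetry of a funnel we may assume $p=l_i\in\mathcal L$. First I would observe that $Q_1$ and $Q_2$ really are intervals in the sense of Definition~\ref{df:kinterval}: $Q_1$ is bounded below by $s_1$ (the base or some $\upseg$) and above by $\lowseg(l_i)$, while $Q_2$ is bounded below by $\upseg(l_i)$ and above by $s_2$ (the apex or some $\lowseg$). Here one uses that $l_i\in Q$, basic funnel geometry, and the key fact that $\upseg(l_i)$ lies strictly above $\lowseg(l_i)$ inside $F$: on $\mathcal L$ it meets the chain at $l_{i+1}$ versus $l_i$, and on $\mathcal R$ it meets the chain at $p':=\uptan(l_i)\cap\mathcal R$, which is above the $\mathcal R$-endpoint $r_j$ of $\lowseg(l_i)$; since one segment slopes up and the other down, they do not cross. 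Consequently $Q_1$, $Q_2$ and $M:=Q\setminus(Q_1\cup Q_2)$ are pairwise disjoint, where $M$ is exactly the part of $F$ lying above $\lowseg(l_i)$ and below $\upseg(l_i)$. Setting $t_i:=|A\cap Q_i|$, so that $Q_i$ is a $t_i$-interval, we get $t_1+t_2=|A\cap Q|-|A\cap M|\ge t-|A\cap M|$, and it suffices to prove $|A\cap M|\le 3$.

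Next I would locate the vertices of $F$ in $M$. On the left chain, the only chain vertex strictly between $\lowseg(l_i)$ and $\upseg(l_i)$ is $l_i$ (with the boundary convention of Definition~\ref{df:kinterval}, $l_{i+1}$ already belongs to $Q_2$), so $M$ carries at most one $A$-guard on $\mathcal L$. On the right chain, $M$ contains precisely the vertices $r_{j+1},\dots,r_{j''}$, where $r_{j''}$ is the topmost vertex of $\mathcal R$ below $p'$; these all lie in the visibility window of $l_i$ on $\mathcal R$. In fact all of $M$ — the two bounding segments, the edge $\lseg{l_i}{l_{i+1}}$, and the vertices $r_{j+1},\dots,r_{j''}$ — is visible from the single point $l_i$, because the relevant triangles lie inside the funnel. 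Two easy consequences follow: if $l_i\in A\cap M$ then $\upseg(l_i)$ is the upper edge of $M$, so the guard of $A$ following $l_i$ already lies above $M$ and $|A\cap M|=1$; otherwise $A\cap M\subseteq\mathcal R$, and since $A$ is bottom-to-top, these guards form a contiguous block $r_{b_1}<r_{b_2}<\dots$ of $A$ inside $l_i$'s $\mathcal R$-window.

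The heart of the argument is then an exchange argument using that $A$ is a \emph{shortest} $x$--$y$ path in the digraph $G$ of Algorithm~\ref{alg:tightpath}. Suppose $|A\cap M|\ge 3$; by the previous paragraph these are right-chain guards $h_1=r_{b_1}<h_2<h_3<\dots$, and the block is entered by processing $\upseg(g)$ for the $A$-guard $g$ directly below the band. If $h_1=r_{j''}$ then, as its successor is either above $p'$ on $\mathcal R$ or past $l_{i+1}$ on $\mathcal L$, already $|A\cap M|=1$; so $h_1=r_b$ with $b<j''$. Then $l_i$ sees $r_{b+1}$ and, by the triangle argument, all of $\upseg(r_b)=\lseg{r_{b+1}}{q_b}$; hence the topmost \emph{left} vertex $l_{i^*}$ seeing $\upseg(r_b)$ satisfies $i^*\ge i$, and Algorithm~\ref{alg:tightpath} puts the edge $(r_b,l_{i^*})$ into $G$. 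But $\upseg(l_{i^*})$ reaches at least as high as $\upseg(l_i)$, i.e.\ to the upper edge of $M$, so replacing the block $h_1,h_2,\dots$ of $A$ by $r_b,l_{i^*}$ and continuing produces an $x$--$y$ path through $G$ with strictly fewer guards inside $M$ (and no longer overall), contradicting minimality of $A$. Therefore $|A\cap M|\le 3$, giving $t_1+t_2\ge t-3$; the case $p\in\mathcal R$ is symmetric.

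\textbf{Main obstacle.} The delicate point is making the shortcut fully rigorous: one must check that after rerouting through $l_{i^*}$ the remaining tail of the path still reaches $y$ without spending any additional guard, i.e.\ that $\upseg(l_{i^*})$ dominates the $\upseg$ of the last $M$-guard of $A$ (so the shortcut is genuinely no longer, exactly as in the optimality reasoning in the proof of Lemma~\ref{lem:11interval}). One also has to be careful with the boundary conventions of Definition~\ref{df:kinterval} — which of $l_i,l_{i+1},r_j,r_{j''}$ and the vertex $z$ count as lying in $Q_1$, $M$ or $Q_2$ — and with the degenerate cases in which $s_1$ is the base of $F$ (so the shadow and the ``$l_i$ below $s_1$'' issue disappear) or $s_2$ is the apex; these are routine but are what forces the clean additive constant $3$ rather than something smaller.
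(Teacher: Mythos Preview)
Your approach is essentially the same as the paper's: you correctly identify the ``middle band'' $M=Q\setminus(Q_1\cup Q_2)$ as consisting of $p$ together with the opposite-chain vertices in the shadow of $Q_2$, and you bound $|A\cap M|$ by an exchange argument contradicting the minimality of the guard set $A$ produced by Algorithm~\ref{alg:tightpath}. The paper's proof is considerably terser --- it simply asserts that four $A$-guards in $M$ could be replaced by the three guards $\{p,\text{bottommost},\text{topmost}\}$, appealing to the reasoning style of Lemma~\ref{lem:11interval} --- whereas you route the shortcut through $l_{i^*}$ instead of through $p$ itself; both exchanges exploit the same geometric fact that $p$ (or a vertex at least as high on the same chain) sees the entire strip~$M$, and the obstacle you flag about splicing the shortcut back into the tail of the path is exactly the detail the paper leaves implicit.
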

	\begin{proof}
		The only vertices of $Q$ which are not included in $Q_1\cup Q_2$ are
		$p$ and the vertices of the shadow of~$Q_2$.
		Suppose, for a contradiction, that those contain (at least) four guards
		from~$A$; in either such case, we easily contradict minimality of the guard
		set~$A$ in Algorithm~\ref{alg:tightpath}, by the same argument given in Lemma \ref{lem:11interval}.
		The Algorithm~\ref{alg:tightpath} simply chooses $p$ and the topmost and bottommost among these (at least) four guards,
		thus choosing only three guards instead of four.
	\end{proof}
	
	Now we are ready to prove the advertised lower bound:
	\begin{theorem}\label{thm:logm-4}
		Any conflict-free chromatic guarding of a given funnel 
		requires at least $\lfloor\log_2(m+3)\rfloor-3$ colours,
		where $m$ is the minimum number of guards needed to guard the whole funnel.
	\end{theorem}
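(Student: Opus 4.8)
The plan is to show that any conflict-free colouring $D$ of $F$ is forced to ``imitate'' the ruler-sequence colouring (the one giving the matching upper bound) along the guard set $A$ produced by Algorithm~\ref{alg:tightpath}. Write $m'=|A|$. Since $A$ guards all of $F$ we have $m'\ge m$ (in fact $m'\le m+1$ by Lemma~\ref{lem:alg1-near-optimal}, but only $m'\ge m$ is used), and $F$ itself is an $m'$-interval in the sense of Definition~\ref{df:kinterval}. By monotonicity of the floor it therefore suffices to prove the lower bound $\lfloor\log_2(m'+3)\rfloor-3$, which dominates $\lfloor\log_2(m+3)\rfloor-3$.

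The heart of the proof is a recursive bisection of $F$ into ever smaller intervals. I would build a nested chain $F=Q_0\supset Q_1\supset Q_2\supset\cdots$, where each $Q_{i+1}$ is a \emph{section} of $Q_i$ (as in Lemma~\ref{lem:ksection}), together with a strictly growing family of colours $c_1,c_2,\dots$, maintaining the invariant that $Q_i$ is a $t_i$-interval and that none of $c_1,\dots,c_i$ is the colour of a $D$-guard inside $Q_i$ (suitably extended so as to also control the shadow of $Q_i$). Given $Q_i$ with $t_i\ge 13$, I would invoke the \emph{observer} $o_i$ of $Q_i$ (Lemma~\ref{lem:observer}): since $o_i$ sees all vertices of $Q_i$, feasibility of $D$ forces $o_i$ to see a uniquely coloured guard, and its colour $c_{i+1}$ occurs on at most one $D$-guard inside $Q_i$. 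Lemma~\ref{lem:11interval} (applied to the $13$-interval $Q_i$) guarantees moreover that this fresh colour can be realised by a $D$-guard inside $Q_i$, so $c_{i+1}\notin\{c_1,\dots,c_i\}$ by the invariant. To continue, I would split $Q_i$ at a balanced vertex; by Lemma~\ref{lem:ksection} this yields a lower and an upper section, each still containing at least $\lfloor(t_i-3)/2\rfloor$ guards of $A$, and since $c_{i+1}$ occurs at most once in $Q_i$, at least one of the two sections avoids $c_{i+1}$ entirely — I take that one as $Q_{i+1}$, re-choosing the split vertex if needed so the invariant (including its shadow clause) is restored.

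Counting the levels then finishes the argument. From $t_0=m'$ and $t_{i+1}\ge\lfloor(t_i-3)/2\rfloor$ one gets $t_i+3\ge(m'+3)/2^{\,i}$, so the recursion can be iterated as long as $t_i\ge 13$ (as required by Lemma~\ref{lem:11interval}), i.e.\ for at least $\lfloor\log_2(m'+3)\rfloor-4$ steps, each producing a new colour, with one final colour extracted at the bottom. Altogether $D$ must use at least $\lfloor\log_2(m'+3)\rfloor-3\ge\lfloor\log_2(m+3)\rfloor-3$ distinct colours.

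The main obstacle is the visibility bookkeeping in the recursive step: a colour private to one viewer may reappear on $D$-guards that this viewer does not see, so deleting the single witness does not suffice; one needs the observer's \emph{total} visibility of $Q_i$ (to pin $c_{i+1}$ to at most one occurrence inside $Q_i$) together with a careful choice of the balanced split vertex to confine that occurrence to one section. The genuinely delicate case is the \emph{shadow} of an interval: it is only partially seen by the observer, and it is recycled as the shadow of the lower section, so maintaining the invariant there requires either strengthening it to track shadows or re-picking the split vertex to push the offending guard out of the relevant shadow. This is exactly where the additive slack ``$-3$'' in the statement (and the ``$t-3$'' of Lemma~\ref{lem:ksection}) is spent.
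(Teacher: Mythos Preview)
Your outline matches the paper's strategy: recursive bisection into sections, using the observer (Lemma~\ref{lem:observer}) to extract a fresh colour at each level, with Lemma~\ref{lem:11interval} anchoring the base and Lemma~\ref{lem:ksection} bounding the loss per split. The one substantive gap is your ``balanced vertex'' step. Lemma~\ref{lem:ksection} only gives $t_1+t_2\ge t-3$ for an \emph{arbitrary} split vertex; it does not assert that a split exists with \emph{each} $t_i\ge\lfloor(t-3)/2\rfloor$, and since $t_1$ can jump by more than one as the split vertex moves along a chain, this would need a separate argument. The paper sidesteps the issue by splitting \emph{at the unique-colour guard $g$ itself}: then $g$ lies in neither section (nor in the shadow of the upper one, which is contained in~$Q$), so one simply takes whichever of $Q_1,Q_2$ is larger --- guaranteed $\ge(t-3)/2$ by pigeonhole on $t_1+t_2\ge t-3$ --- and it is automatically free of colour~$c_g$. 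No balancing is needed, and no ``re-choosing the split vertex'' either.

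This same idea also resolves the shadow bookkeeping you correctly flag as delicate: when $g$ happens to lie in the shadow of $Q$ rather than in~$Q$, the paper takes the upper section at the lowermost vertex of $Q$ on $g$'s chain, which pushes $g$ below the new interval and its shadow. One minor correction: Lemma~\ref{lem:11interval} is used only for the base case (to force at least one coloured guard into a $13$-interval); in the recursive step, the fact that the unique-colour guard $g$ lies in $Q\cup\text{shadow}(Q)$ comes from Lemma~\ref{lem:observer}, not from Lemma~\ref{lem:11interval}.
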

	
	\begin{proof}
		We will prove the following claim by induction on $c\geq0$:
		\begin{itemize}
			\item If $Q$ is a $t$-interval in the funnel $F$
			and $t\geq 16\cdot2^c-3$, then any conflict-free colouring of $F$
			must use at least $c+1$ colours on the vertices of $Q$ or 
			of the shadow of~$Q$.
		\end{itemize}
		
		In the base $c=0$ of the induction, we have $t\geq16-3=13$.
		By Lemma~\ref{lem:11interval}, some point of $Q$ is not seen from outside,
		and so there has to be a coloured guard in some vertex of $Q$,
		thus giving $c+1=1$ colour.
		
		Consider now $c>0$.
		The observer $o$ of $Q$ (which sees all vertices of~$Q$) 
		must see a guard $g$ of a unique colour
		where $g$ is, by Lemma~\ref{lem:observer}, 
		a vertex of $Q$ or of the shadow of~$Q$.
		In the first case, we consider $Q_1$ and $Q_2$,
		the lower and upper sections of $Q$ at~$g$.
		By Lemma~\ref{lem:ksection}, for some $i\in\{1,2\}$,
		$Q_i$ is a $t_i$-interval of $F$ such that
		$t_i\geq (t-3)/2\geq (16\cdot2^c-6)/2=16\cdot2^{c-1}-3$.
		In the second case ($g$ is in the shadow of $Q$),
		we choose $g'$ as the lowermost vertex of $Q$ on the same chain as~$g$,
		and take only the upper section $Q_1$ of $Q$ at $g'$.
		We continue as in the first case with~$i=1$.
		
		By induction assumption for $c-1$, $Q_i$ together with its shadow
		carry a set $C$ of at least $c$ colours.
		Notice that the shadow of $Q_2$ is included in~$Q$,
		and the shadow of $Q_1$ coincides with the shadow of~$Q$,
		moreover, the observer of $Q_1$ sees only a subset of the shadow of $Q$ seen
		by the observer $o$ of~$Q$.
		Since $g$ is not a point of $Q_i$ or its shadow, but our observer 
		$o$ sees the colour $c_g$ of $g$ and all colours of~$C$, we have
		$c_g\not\in C$ and hence $C\cup\{c_g\}$ has at least $c+1$ colours, as
		desired.

		Finally, we apply the above claim to $Q=F$.
		We have $t\geq m$, and for $t\geq m\geq 16\cdot2^c-3$ we derive that
		we need at least $c+1\geq\lfloor\log(m+3)\rfloor-3$ colours for guarding
		whole~$F$.
	\end{proof}

	\begin{algorithm}[tbp]
		\KwIn{A funnel $F$ with concave chains $\mathcal{L}=(l_1, l_2,
			\ldots, l_k)$ and $\mathcal{R}=(r_1, \ldots, r_m)$.}
		\KwOut{A conflict-free chromatic guard set of $F$
			using $\leq OPT+4$ colours.}
		\smallskip
	Run Algorithm~\ref{alg:tightpath} to produce 
	 a guard sequence $A=(a_1,a_2,\dots,a_t)\>$ (bottom-up)\;
	Assign colours to members of $A$ according to the ruler sequence;
         the vertex $a_i$ gets colour $c_i$
         where $c_i$ is the largest integer such that $2^{c_i}$ divides~$2i$\;		
	Output coloured guards $A$ as the (approximate) solution\;

		\caption{Approximate conflict-free chromatic guarding of a funnel\hspace*{-2ex}}
		\label{alg:apxcfreefunnel}
	\end{algorithm}

	\begin{corollary}
		\label{cor:opt+4}
		Algorithm~\ref{alg:apxcfreefunnel}, for a given funnel~$F$,
		outputs in polynomial time a conflict-free chromatic guard set $A$,
		such that the number of colours used by $A$ is by at most four larger
		than the optimum.
	\end{corollary}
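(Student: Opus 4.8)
The plan is to obtain Corollary~\ref{cor:opt+4} by combining three things: the correctness (feasibility \emph{and} conflict-freeness) of the colouring returned by Algorithm~\ref{alg:apxcfreefunnel}, an exact count of how many colours the ruler sequence spends on $t$ guards, and the lower bound of Theorem~\ref{thm:logm-4}. Polynomial running time and the fact that the output set $A$ guards all of $F$ are immediate, because Algorithm~\ref{alg:apxcfreefunnel} merely runs Algorithm~\ref{alg:tightpath} and then assigns colours in constant time per guard; Lemma~\ref{lem:tightpath-feas} covers both.

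The first substantive step is to verify that the ruler-sequence colouring is conflict-free. I would isolate the structural claim that, for every viewer $v$ of $F$, the guards of $A$ seen by $v$ form a contiguous block $a_i,a_{i+1},\dots,a_j$ of the bottom-up guard sequence $A=(a_1,\dots,a_t)$. Granting this, conflict-freeness is a purely combinatorial property of the ruler sequence: on any integer interval $[i,j]$ the value $c_\ell=\nu_2(\ell)+1$ attains its maximum at a \emph{unique} index (if two odd multiples $2^k a<2^k b$ of the top power $2^k$ both lay in $[i,j]$, then $2^k(a+1)$ would lie strictly between them with strictly larger $2$-adic valuation, a contradiction). Hence among the guards seen by $v$ exactly one carries the highest colour, so that colour is unique to $v$; together with feasibility of $A$ (so $v$ does see some guard) this makes the output a conflict-free chromatic guarding.

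The second step is arithmetic. On $t$ guards the ruler-sequence colouring uses exactly $\lfloor\log_2 t\rfloor+1$ colours, since colour $k+1$ is used (at index $2^k$) precisely when $2^k\le t$, so the colour set is $\{1,\dots,\lfloor\log_2 t\rfloor+1\}$. By Theorem~\ref{thm:sizeguard-funnel} the minimum number $m$ of vertex guards of $F$ equals the size of the output of Algorithm~\ref{alg:newminfunguard}, and by Lemma~\ref{lem:alg1-near-optimal} the sequence $A$ produced by Algorithm~\ref{alg:tightpath} has length $t\le m+1$; hence Algorithm~\ref{alg:apxcfreefunnel} uses at most $\lfloor\log_2(m+1)\rfloor+1$ colours. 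On the other hand Theorem~\ref{thm:logm-4} bounds the optimum from below by $\lfloor\log_2(m+3)\rfloor-3$. Subtracting and using $\lfloor\log_2(m+1)\rfloor\le\lfloor\log_2(m+3)\rfloor$ leaves an additive gap of at most $\big(\lfloor\log_2(m+1)\rfloor+1\big)-\big(\lfloor\log_2(m+3)\rfloor-3\big)\le 4$, which is the assertion.

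The main obstacle is the structural claim of the second paragraph: that the guards of $A$ visible from any point of $F$ form an interval of the height-ordered sequence. Here one must use the funnel geometry together with the specific greedy behaviour of Algorithm~\ref{alg:tightpath}. Concretely, I expect to argue that the portion of a concave chain visible from a fixed point is connected, and that, because each $a_\ell$ is a topmost vertex seeing the previous segment $\upseg(\cdot)$, a viewer seeing $a_i$ and $a_j$ cannot ``skip over'' an intermediate guard $a_\ell$; this should follow from a short case analysis on which of $\mathcal{L}$, $\mathcal{R}$ the three guards lie on, in the same spirit as Lemmas~\ref{lem:11interval} and~\ref{lem:observer}. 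A minor side point to double-check is that the slight perturbations of observer-type positions and the endpoint conventions of $\upseg(\cdot)$ introduce no spurious visibilities, but this is routine.
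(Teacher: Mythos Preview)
Your proposal is correct and follows essentially the same route as the paper's proof: verify the ruler-sequence colouring is conflict-free via the ``maximum is unique on every interval'' property, count colours as $\lfloor\log_2 t\rfloor+1$, bound $t\le m+1$ via Lemma~\ref{lem:alg1-near-optimal}, and subtract the lower bound of Theorem~\ref{thm:logm-4}. Your arithmetic is in fact a little cleaner than the paper's, which picks $c$ with $2^{c-1}\le t\le 2^c-1$ and computes the same gap of~$4$.

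One remark: what you flag as the ``main obstacle''---that every viewer sees a \emph{consecutive} subsequence of the height-ordered guard set $A$---is simply asserted in the paper without further argument (``Since every point of $F$ sees a consecutive subsequence of~$A$\dots''). So you are not missing any ingredient the paper supplies; you are being more scrupulous than the paper here. Your sketched approach (visibility from a point onto a concave chain is an interval, plus the greedy structure of Algorithm~\ref{alg:tightpath} linking consecutive $a_\ell$'s through $\upseg(\cdot)$) is the natural way to fill this in.
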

	\begin{proof}
		Note the following simple property of the ruler sequence:
		if $c_i=c_j$ for some $i\not=j$, then $c_{(i+j)/2}>c_i$.
		Hence, for any $i,j$, the largest value occurring among colours
		$c_i,c_{i+1},\dots,c_{i+j-1}$ is unique.
		Since every point of $F$ sees a consecutive subsequence of~$A$,
		this is a feasible conflict-free colouring of the funnel~$F$.
		
		Let $m$ be the minimum number of guards in~$F$.
		By Lemma~\ref{lem:alg1-near-optimal}, it is $m+1\geq t=|A|\geq m$.
		To prove the approximation guarantee, observe that for $t\leq 2^c-1$,
		our sequence $A$ uses $\leq c$ colours.
		Conversely, if $t\geq 2^{c-1}$, i.e. $m\geq 2^{c-1}-1$,
		then the required number of colours for guarding $F$ is
		at least~$c-1-3=c-4$, and hence our algorithm uses at most $4$ more colours
		than the optimum.
	\end{proof}

\section{Vertex-to-Point Conflict-Free Chromatic Guarding of Weak Visibility Polygons}
	\label{sec:v2pChromatic}
	
	In this section, we extend the scope of the studied problem of vertex-to-point 
	conflict-free chromatic guarding from funnels to weak visibility polygons.
	We will establish an $O(\log^2 n)$ upper bound for the number of colours
	of vertex-guards on $n$-vertex weak-visibility polygons, and give the
	corresponding polynomial time algorithm.
	
	\begin{figure}[tbp]
		\centering
		\begin{tikzpicture}[scale=0.45]
		
		\tikzstyle{every node}=[draw, shape=circle, minimum size=4pt,inner sep=0pt];
		\tikzstyle{every path}=[thin, fill=none];
		\draw[fill=magenta!15!white,draw=none] (0,0)--(6,0.9)--(7.9,2)--(15,8.5)--(15,11.5)--(14.5,13)--(25,1.5)--(31,0)--(0,0);
		\node[fill=black,label=150:$u\,$](A) at (0,0) {};
		\node(B) at (6,0.9) {};
		\node(BB) at (7.9,2) {};
		\node(C) at (8,6) {};
		\node(D) at (7,8) {};
		\node[fill=orange](E) at (4.3,9) {};
		\node[fill=red](F) at (5,11.5) {};
		\node(G) at (7.8,9.5) {};
		\node(H) at (11,8) {};
		\node(I) at (15,8.5) {};
		\node(J) at (15,11.5) {};
		\node[fill=magenta](K) at (14.5,13) {};
		\node[fill=purple](L) at (16.5,12) {};
		\node[fill=blue](M) at (18,13) {};
		\node(N) at (19,10) {};
		\node(O) at (19.5,9) {};
		\node(P) at (23,8) {};
		\node(Q) at (24,10.5) {};
		\node[fill=cyan](R) at (24.2,12) {};
		\node[fill=green](S) at (26,11) {};
		\node[fill=green!60!black](T) at (31,10) {};
		\node(V) at (28.6,8) {};
		\node(W) at (27.5,6) {};
		\node(X) at (25,1.5) {};
		\node[fill=black,label=30:$\,v$](Y) at (31,0) {};
		\draw (A)--(B)--(BB)--(C)--(D)--(E)--(F)--(G)--(H)--(I)--(J)--(K)--(L)--(M)--(N)--(O)--(P)--(Q)--(R)--(S)--(T)--(V)--(W)--(X)--(Y)--(A);
		
		\tikzstyle{every path}=[very thick,dotted];
		\draw[orange] (A)--(B)--(BB)--(C)--(D)--(E)--(X)--(Y);
		\draw[red] (A)--(B)--(BB)--(C)--(D)--(F)--(G)--(X)--(Y);
		\draw[magenta] (A)--(B)--(BB)--(I)--(J)--(K)--(X)--(Y);
		\draw[purple] (A)--(B)--(BB)--(I)--(L)--(X)--(Y);
		\draw[blue] (A)--(B)--(BB)--(I)--(M)--(N)--(O)--(X)--(Y);
		\draw[cyan] (A)--(B)--(P)--(Q)--(R)--(X)--(Y);
		\draw[green] (A)--(B)--(P)--(S)--(X)--(Y);
		\draw[green!60!black] (A)--(B)--(T)--(V)--(W)--(X)--(Y);
		\end{tikzpicture}
		\caption{A weak visibility polygon $W$ on the base edge $uv$, and
		its collection of $8$ max funnels ordered from left to right.
		These max funnels are depicted by thick dotted lines, which
		are coloured from orange on the left to green on the right
		(the apex vertex of each max funnel is coloured the same as the funnel chains).
		Note that the left-most orange funnel is degenerate, i.e.
		having its two apex edges collinear.
		Moreover, the third (pink) max funnel from the left is
		emphasised as the filled region of~$W$.
		}
		\label{fig:maxfunnels}
	\end{figure}
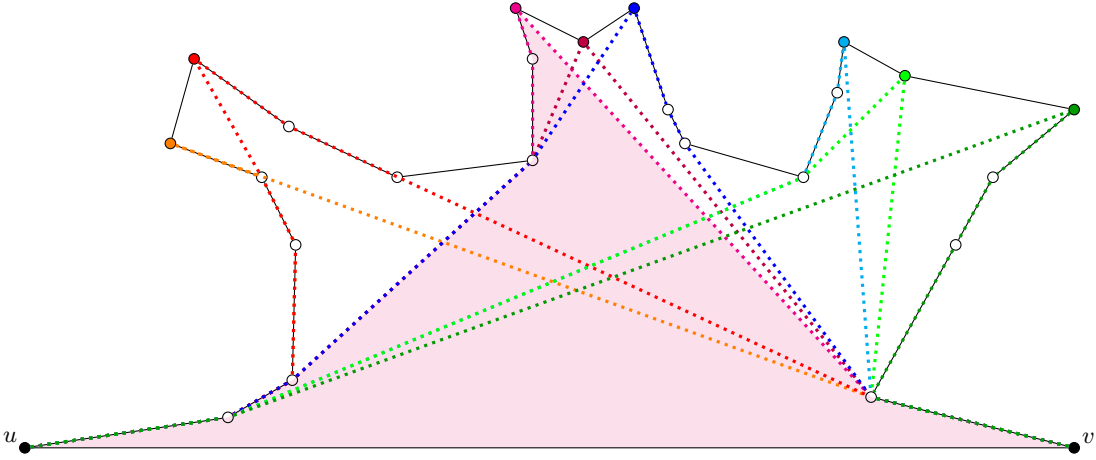

Consider a given weak visibility polygon $W$ on the base edge $uv$.
Each vertex $w$ of $W$ (other than $u,v$) is visible from some point of the segment~$\lseg uv$.
Consequently, for this $w$ one can always find a funnel $F\subseteq W$ with the base $uv$ 
and the apex~$w$ (such that $V(F)\subseteq V(W)$).
The chains of this funnel $F$ are simply the geometric shortest paths inside $W$
from $w$ to $u$ and~$v$, respectively.
The union of these funnels covers all vertices of $W$, and we consider the maximal
ones of them by inclusion -- the {\em max funnels} of $W$ -- ordered from left to right.
See Figure~\ref{fig:maxfunnels}.
Noe that, as illustrated in the picture, some max funnels may be degenerate,
which means that their two edges incident with the apex vertex are collinear.

Our algorithm (and the corresponding upper bound) 
is based on the following two rather simple observations:

\begin{itemize}
	\item In each of the max funnels, we assign colours independently in the
	logarithmic ruler sequence, similarly to Section~\ref{sec:funnelcolouring}.
	This time, however, we ``wastefully'' place guards at all vertices
	of both chains except the apex (which makes arguments easier).
	\item In order to prevent conflicts for points seeing guards from
	several funnels, we use generally different sets of colours for different
	max funnels.
	We distribute the colour sets to the max funnels again according to the
	ruler sequence, from left to right, 
	which requires only $O(\log n)$ different colour sets.
	We use the highest ruler sequence value to determine which colour set
	applies to vertices which belong to several max funnels.

	See Algorithm~\ref{alg:weakv2pcolouringB} for details.
\end{itemize}

\begin{algorithm}[tb]
	\caption{Computing a V2P conflict-free chromatic guarding of 
		a weak visibility polygon using $O(\log^2n)$ colours.}
	\label{alg:weakv2pcolouringB}
	\KwIn{A weak visibility polygon $W$, weakly visible from the base edge $uv$.}
	\KwOut{A V2P conflict-free chromatic guarding of $W$ using $O(\log^2 n)$
	colours, such that no guard is placed at any apex vertex of a max funnel of~$W$.}

	\smallskip
	Construct the (geom.) shortest path trees $T_u$ and $T_v$ of $W$ from $u$ and $v$\;
	Identify the common leaves of $T_u$ and $T_v$ as the apices of max funnels,
	and order them into a set $\ES F = \{ F_1, F_2, \ldots, F_m \}$ from left to
	right, where $m\leq n$\;
	
	For $\ell=1+\lfloor\log m\rfloor$,
	consider $\ell$ colour sets $C_1, C_2, \ldots, C_{\ell}$ of $2\log n$ colours
	each, i.e., let $C_i=C^l_i\cup C^r_i$ and all these sets be pairwise 
	disjoint of cardinality~$\log n$\;
	
	\ForEach{max funnel $F_i\in\ES F$}{
	Assign colour set $C_j$ to $F_i$, where $j$ is the largest power of $2$ dividing~$2i$%
	\tcc*{indices $j$ of the assigned sets $C_j$ form the ruler sequ.}
	}
	\ForEach{vertex $w$ of $W$}
	{
		Associate $w$ with the max funnel that was assigned colour set $C_j$ of
		the highest index~$j$ (among the max funnels containing~$w$, this is unique)\;
		\label{lin:7}
	}
	\ForEach{max funnel $F_i\in\ES F$}{
		Colour all vertices of the left chain of $F_i$ that are 
		associated with $F_i$, except the apex, by the ruler sequence 
		using the colour set~$C^l_j$ assigned to~$F_i$\;
		Colour the same way the right chain of $F_i$ using the colour set~$C^r_j$\;
	}
	\Return{Coloured $W$}\;
\end{algorithm}

Let the ordered set of max funnels (possibly degenerate) of $W$ be $\ES F = \{F_1, F_2, \ldots, F_m \}$
in the order of their apices occurring on the clockwise boundary of $W$ from $u$ to~$v$.
Then, as already mentioned, every vertex of $W$ belongs to one or more funnels of $\ES F$.
Observe that $u$ and $v$ belong to all funnels of $\ES F$.
Furthermore, for $F_i\in\ES F$ with the apex vertex~$a_i$, the point set $W\setminus F_i$
is the union of internally-disjoint polygons such that some of them are to
the {\em left of~$F_i$} (those having their vertices between $u$ and $a_i$
in the clockwise order) and the others to the {\em right of~$F_i$}.
Observe that, for $F_j\in\ES F$, we have $j<i$ if and only if $F_j\setminus F_i$
is all to the left of~$F_i$.

We have the following lemma on $W$ and $\ES F$.

\begin{lemma} \label{lem:atleastone}
Consider a weak visibility polygon $W$, and its max funnel $F_i\in\ES F$.
Assume that an observer $p\in W\setminus F_i$ is to the left of $F_i$ in
$W$, and that $x\in W\setminus F_i$ is a point of $W$ to the right of $F_i$
such that $p$ sees~$x$.
Then $p$ sees at least one vertex of $W$ belonging to $F_i$ except the apex.
\end{lemma}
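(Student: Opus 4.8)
The plan is to argue geometrically about the segment $\lseg px$, which must cross the funnel $F_i$ since $p$ is to the left of $F_i$ and $x$ is to the right. First I would observe that $\lseg px$ lies entirely inside $W$ (since $p$ sees $x$), and that it separates the part of $W$ to the left of $F_i$ from the part to the right. Because $F_i$ is a funnel with base $uv$ and apex $a_i$, and $p,x$ are both outside $F_i$ on opposite sides, the segment $\lseg px$ must enter and leave $F_i$, crossing its left chain and its right chain (or passing through the apex region). In particular, $\lseg px$ intersects both concave chains of $F_i$; let $\lseg px$ meet the left chain of $F_i$ at a point $y_L$ and the right chain at a point $y_R$, with $y_L$ closer to $p$.

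The key step is to deduce that $p$ sees a genuine \emph{vertex} of $F_i$ other than the apex, not merely a boundary point. Here I would use the concavity of the chains of $F_i$: the left chain is a concave polyline as seen from inside $F_i$, so the portion of this chain ``facing'' $p$ is visible from $p$ along the segment up to $y_L$. More precisely, consider the point $y_L$ where $\lseg px$ first meets the boundary of $F_i$. Since $p\in W$ and the whole segment $\lseg p{y_L}$ is inside $W$ and does not cross into $F_i$ before $y_L$, the point $p$ sees $y_L$. If $y_L$ is already a vertex of $F_i$ we are almost done; otherwise $y_L$ lies in the relative interior of some edge $\lseg{l_i}{l_{i+1}}$ of the left chain, and I would argue that at least one of the endpoints $l_i,l_{i+1}$ is also visible from $p$ — by rotating the ray from $p$ slightly, using that the left chain is concave toward the interior of $F_i$ and that $p$ lies on the outer (left) side of this chain, so the edge $\lseg{l_i}{l_{i+1}}$ is seen ``from outside'' by $p$ and does not get blocked at its endpoints by the chain itself. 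One must also check that neither endpoint is the apex $a_i$: the apex is the topmost vertex and is the meeting point of the two chains, whereas $y_L$ is a crossing point of $\lseg px$ with a chain strictly below the apex (because $x$ is strictly on the other side of $F_i$, so $\lseg px$ cannot pass through $a_i$ while still exiting to the right below it) — if the chosen edge were incident to $a_i$ I would instead pick the crossing point on the side nearer $p$ on whichever chain avoids the apex, or invoke that $F_i$ being a genuine (possibly degenerate) max funnel has at least one non-apex vertex on the relevant chain.

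The remaining bookkeeping is to handle the degenerate cases cleanly: if $F_i$ is degenerate (its two apex edges collinear) the argument is unaffected since the chains below the apex are still concave polylines; if $p$ sees $u$ or $v$ this already counts (they belong to $F_i$ and are not the apex, as $u,v$ are convex base vertices), so we may assume $\lseg px$ avoids $u$ and $v$ and genuinely crosses a chain edge interior. I would also remark that visibility inside $W$ versus inside $F_i$ must be reconciled: a segment from $p$ to a vertex $z$ of the left chain of $F_i$ lies inside $W$ provided it does not cross the boundary of $W$, and since $z$ is on the boundary of $F_i\subseteq W$ and $p$ is on the $F_i$-facing side, the segment $\lseg pz$ stays in $W$ — this uses that the region of $W$ between $p$'s location and the left chain of $F_i$ is ``funnel-shaped'' as guaranteed by $W$ being a weak visibility polygon and $F_i$ being built from shortest paths.

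\medskip
\noindent\textbf{Main obstacle.} The hard part will be the visibility transfer: showing that $p$, which sees $x$ across the funnel, actually sees a \emph{vertex} of the near chain of $F_i$ and not just an interior edge point, and simultaneously that this vertex is not the apex. The concavity of the chains is the right tool, but making precise why rotating the line of sight from $p$ toward an endpoint of the crossed edge keeps the sight line inside $W$ — rather than getting blocked by some other part of $W$'s boundary outside $F_i$ — requires invoking that $F_i$'s chains are shortest paths in $W$, so nothing of $W$ protrudes between $p$ and the chain on $p$'s side. I expect the cleanest route is: take the first boundary point $y$ of $F_i$ hit by $\lseg px$, note $p$ sees $y$; the edge of $F_i$ containing $y$ is seen by $p$ from its outer side; by concavity of that chain one endpoint of the edge is not occluded by the chain, and no other occluder exists since the sub-polygon of $W$ between $p$ and that chain is bounded precisely by a shortest path; that endpoint is the desired non-apex vertex.
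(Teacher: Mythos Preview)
Your outline does not match the paper's proof, and the place where it diverges is exactly the place where it breaks down. You aim for a vertex on the \emph{left} chain: you let $\overline{px}$ hit $L_i$ in the interior of some edge $\overline{l_jl_{j+1}}$ and then try to rotate the sight line toward one of $l_j,l_{j+1}$. You correctly flag the obstacle --- why doesn't some other piece of $\partial W$ block that rotated sight line --- but your proposed resolution (``no other occluder exists since the sub-polygon of $W$ between $p$ and that chain is bounded precisely by a shortest path'') is not valid. The shortest-path property of $L_i$ constrains where $L_i$ bends; it says nothing about the shape of the pocket of $W\setminus F_i$ that contains~$p$. Concretely, $p$ can sit in a deep pocket whose only opening toward $L_i$ is a narrow slit: then $\overline{px}$ threads the slit and meets a long edge $\overline{l_jl_{j+1}}$ somewhere in its interior, while both $l_j$ and $l_{j+1}$ are hidden behind the slit walls. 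Concavity of the chain is irrelevant here; the occluder is $\partial W$, not the chain itself. So with only the single segment $\overline{px}$ in hand you cannot force visibility of a vertex.

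What your argument never actually uses is the hypothesis that $W$ is a weak visibility polygon, and that is the missing ingredient. The paper's proof invokes it immediately: $p$ sees some point $b\in\overline{uv}$, so there is a \emph{second} segment $\overline{pb}\subseteq W$ that also crosses $L_i$. The paper then sweeps the ray $\overrightarrow{pb}$ counterclockwise and takes the first vertex $c$ on the \emph{right} chain the ray passes through. If $p$ did not see $c$, the blocker $y\in\overline{pc}\setminus W$ would have to lie on the $p$-side of $L_i$ (since beyond $L_i$ the segment is in $F_i\subseteq W$); but then $y$ is enclosed by the closed curve made from pieces of $\overline{pb}$, $\overline{px}$ and $L_i$, all of which lie in~$W$. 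That would disconnect the exterior of the simple polygon~$W$, a contradiction. So the paper's argument is a topological separation using two sight lines, not a local concavity argument with one; and it goes for the right chain precisely so that any hypothetical blocker is trapped in the region between the two known sight lines and~$L_i$.
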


Note that while it is immediate that the line of sight between $p$ and $x$
must cross both chains of the funnel $F_i$, this fact alone does not imply that
there is a vertex of $F_i$ visible from~$p$.
\begin{proof}
By the definition, $p$ sees a point $b\in\overline{uv}$ on the base of~$W$.
Consider the ray $\overrightarrow{pb}$ and rotate it counterclockwise until
it first time hits a vertex $c$ on the right chain of~$F_i$.
If $p$ sees~$c$, then we are done.
Otherwise, the line of sight $\overline{pc}$ is blocked by a point
$y\in\overline{pc}$ such that $y$ is outside of $W$, and hence to the left of $F_i$, too.
Let $L_i$ be the point set of the left chain of $F_i$.

Now, both segments $\overline{pb}$ and $\overline{px}$ cross the $L_i$, 
and the counter-clockwise order of the rays from $p$
is $\overrightarrow{pb}$, $\overrightarrow{pc}$, $\overrightarrow{px}$.
Consequently, the point set 
$\overline{pb}\cup\overline{px}\cup L_i$ (which is part of $W$)
separates~$y$ (which is in the exterior of $W$).
This contradicts the fact that $W$ is a simple polygon.
\end{proof}

Informally, Lemma~\ref{lem:atleastone} shows that the subcollection of max
funnels whose vertices are seen by the observer $p$, forms a consecutive
subsequence of $\ES F$, and so precisely one of these funnels visible by $p$
gets the highest colour set according to the ruler sequence.
From this colour set we then get the unique colour guard seen by~$p$.

The straightforward proof of correctness of our algorithm follows.

\begin{theorem}\label{thm:twoApprox}
Algorithm~\ref{alg:weakv2pcolouringB} in polynomial time computes a conflict-free
chromatic guarding of a weak visibility polygon using $O(\log^2 n)$ colours.
\end{theorem}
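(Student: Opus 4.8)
The plan is to verify the three properties claimed in the output specification of Algorithm~\ref{alg:weakv2pcolouringB}: polynomial running time, correct colour count, and conflict-freeness. First I would dispose of the running time: the shortest path trees $T_u,T_v$ are computed in linear time by a standard funnel/shortest-path algorithm, identifying common leaves and ordering the max funnels $\ES F=\{F_1,\dots,F_m\}$ is then immediate, the ruler-sequence colour-set assignment and the per-vertex association in line~\ref{lin:7} are linear sweeps, and colouring each chain by the ruler sequence is again linear overall since each vertex is coloured at most once (by its associated funnel). For the colour count, $\ell=1+\lfloor\log m\rfloor=O(\log n)$ colour sets are used, each of size $2\log n$, so the total number of colours is $O(\log^2 n)$ as required; I would also note that the ruler-sequence assignment of colour-set indices is well-defined and that the ``associated funnel'' in line~\ref{lin:7} is unique because, by Lemma~\ref{lem:atleastone}'s consequence (the funnels containing a fixed vertex $w$ form a consecutive block of $\ES F$, being exactly those whose apex separation does not cut $w$ off), among any consecutive block of indices the ruler sequence attains its maximum at a unique position.

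The substantive part is conflict-freeness. Fix an arbitrary viewer (observer) $p\in W$; I must exhibit a guard seen by $p$ whose colour is used by no other guard seen by $p$. Let $\ES F_p\subseteq\ES F$ be the set of max funnels from which $p$ sees at least one non-apex chain vertex. I claim $\ES F_p$ is a nonempty consecutive subsequence $\{F_a,F_{a+1},\dots,F_b\}$ of $\ES F$: nonempty because $p$ lies in some max funnel $F_i$ and then $p$ sees the neighbouring base vertex $u$ or $v$ — more carefully, $p$ sees a point of the base $uv$, hence sees a whole sub-funnel with that base, whose apex-path chain vertices are chain vertices of some $F_i\in\ES F$, giving a visible non-apex vertex; and consecutive because if $p$ sees a chain vertex of $F_{i'}$ with $i'<i$ and also a chain vertex of $F_{i''}$ with $i''>i$, then in particular $p$ sees points both to the left and to the right of $F_i$ (here I use that a chain vertex of $F_{i'}$ with $i'<i$ lies to the left of $F_i$ or in $F_i$, and symmetrically on the right), so Lemma~\ref{lem:atleastone} applies and forces $F_i\in\ES F_p$. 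Now let $F_j$ be the funnel of $\ES F_p$ whose assigned colour-set index is the (unique) maximum among $\{a,\dots,b\}$ in the ruler sequence. Every vertex $p$ sees from any funnel other than $F_j$ gets a colour from a colour set $C_{j'}$ with $j'\ne j$ (this includes the degenerate situation where $p$ sees a vertex $w$ lying in several max funnels: $w$ is coloured according to its \emph{associated} funnel, and either that associated funnel is $F_j$ or its colour set index differs from $j$), hence from a set disjoint from $C_j$. So it suffices to produce a uniquely-coloured guard inside $F_j$ as seen by $p$.

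For the final step I restrict attention to $F_j$ alone. The viewer $p$ sees a nonempty set of non-apex chain vertices of $F_j$; I claim this set, within the left chain of $F_j$, is a contiguous run of chain vertices (and similarly on the right), because $F_j$ is a funnel and visibility of chain vertices from any fixed point of a funnel is ``interval-like'' along each concave chain — precisely the property already exploited in Corollary~\ref{cor:opt+4}, where every point of a funnel sees a consecutive subsequence of the bottom-up guard list. The vertices of the left chain associated with $F_j$ are a suffix of that chain minus the apex (the prefix was stolen by funnels to the left with higher colour-set index, but that only removes a bottom segment, so what remains is still an interval), and likewise the right chain gives a suffix; either $p$ sees an associated vertex on one of these chains or the visible run is entirely non-associated — but in the latter case those vertices were coloured by a different funnel's colour set, contradicting that $F_j$ contributed to $\ES F_p$ via an associated vertex (I would tidy this by noting that by the choice of $F_j$, any vertex $p$ sees from $F_j$ which is associated to a different funnel $F_{j'}$ has $j'$ with colour index $\ne j$, so if \emph{no} $F_j$-associated vertex were visible, $F_j$ would not really be the source of the max index; one re-picks $F_j$ among funnels actually contributing an associated visible vertex, and the consecutiveness/uniqueness argument goes through unchanged). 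Then among the $F_j$-associated chain vertices visible to $p$ — a consecutive block in the ruler-sequence colouring of that chain with colour set $C^l_j$ (or $C^r_j$) — the largest colour value occurs exactly once, by the elementary ruler-sequence fact from Corollary~\ref{cor:opt+4} that $c_i=c_{i'}$ with $i\ne i'$ forces $c_{(i+i')/2}>c_i$. That guard has a colour in $C_j$ not shared with any other guard $p$ sees inside $F_j$ (by the ruler-sequence uniqueness) nor outside $F_j$ (by disjointness of colour sets), completing the proof.

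I expect the main obstacle to be the careful bookkeeping at vertices shared by several max funnels: one must check that the ``association'' rule in line~\ref{lin:7} interacts correctly with Lemma~\ref{lem:atleastone} so that (i) the funnels contributing a visible \emph{associated} vertex to a given viewer still form a consecutive block, and (ii) the per-chain visible-and-associated vertices form a genuine interval in the ruler-sequence indexing despite a bottom prefix of each chain possibly being associated elsewhere. Everything else — running time, colour count, and the ruler-sequence uniqueness — is routine given the earlier lemmas.
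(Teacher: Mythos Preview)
Your proposal is correct and follows essentially the same route as the paper: show that the funnels whose vertices the observer sees form a consecutive block via Lemma~\ref{lem:atleastone}, pick the funnel $F_k$ with the (unique) maximal colour-set index in that block, and then use the ruler-sequence property on the consecutive visible portion of one of its chains. One remark: the bookkeeping you flag as the ``main obstacle'' resolves more cleanly than your re-picking workaround suggests, since once $F_k$ has the maximal colour-set index among all funnels containing any vertex visible to $p$, \emph{every} visible vertex of $F_k$ is automatically associated with $F_k$ (its associated funnel would otherwise lie in the same consecutive block with a strictly higher index), so the visible run on each chain is simultaneously an interval of the chain and of the ruler-indexed associated vertices.
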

\begin{proof}
	Clearly, identifying the max funnels and the indexing can be done in $O(n^2)$ time. 
	We now prove the correctness of the algorithm.
	
	Any given point $p\in W$ always sees a unique colour within any single funnel $F_i\in\ES F$, 
	if at least some vertex of $F_i$ is seen from~$p$;
	in fact one such unique colour left the left or/and one from the right chain
	of this funnel (and the colour sets of the left and right chains are disjoint).
	This is because $p$ always sees a consecutive section of the left or
	right chain of $F_i$, regardless of whether $p$ is inside or outside
	of~$F_i$, and we use the ruler sequence colours on each chain.
	However, $p$ may see vertices associated with two or more max funnels assigned the same colour set
	by Algorithm~\ref{alg:weakv2pcolouringB} (line~\ref{lin:7}).
	
	Suppose the latter; that the point $p$ sees vertices associated with
	two distinct max funnels $F_i,F_j\in\ES F$ assigned the same colour set.
	Let $k$, $i<k<j$, be such that $F_k$ is assigned the highest colour
	set within the ruler subsequence from $i$ to $j$ ($k$ is unique).
	If $p\in F_k$, then $p$ sees a unique colour guard from those on~$F_k$.
	Otherwise, up to symmetry, $p$ is to the left of $F_k$ and $p$ sees
	a vertex associated with $F_j$ which is to the right of $F_k$. 
	We have a situation anticipated by Lemma~\ref{lem:atleastone}, and
	so $p$ sees some vertex of $F_k$.
	Then again, $p$ gets a unique colour guard from those on~$F_k$.
\end{proof}

\section{Vertex-to-Point Conflict-Free Chromatic Guarding of Polygons}
\label{sec:allsimple}

In this section, we extend the scope of the studied problem of
vertex-to-point conflict-free chromatic guarding from funnels and weak
visibility polygons to general simple polygons.  
We will establish an $O(\log^2 n)$ upper bound for the
number of colours of vertex-guards on $n$-vertex simple polygons, and give
the corresponding polynomial time algorithm.
For that we use the previous algorithm for chromatic guarding of weak visibility polygons
(Algorithm~\ref{alg:weakv2pcolouringB}).
Moreover, our algorithm is ready for further improvements in
Algorithm~\ref{alg:weakv2pcolouringB}, as it uses $O(C+\log n)$ colours
where $C$ is the number of colours used by Algorithm~\ref{alg:weakv2pcolouringB}.

We give our algorithm in two phases.
Namely, the decomposition phase, and the colouring phase.

\subsection{Decomposition into weak visibility polygons}

Our colouring scheme relies on the decomposition of a simple polygon into weak visibility polygons.
We utilise the decomposition algorithm described by B\"{a}rtschi \emph{et al.}
\cite{Bartschi-2014}, given in Algorithm~\ref{alg:bartdecompose}.
For an edge $f$ (or a vertex) of a polygon $P$,
we call the {\em visibility polygon of $f$} the set of all points of $P$ 
which are visible from some point of~$f$.

\begin{algorithm}[tb]
	\KwIn{A simple polygon $P$, an arbitrary vertex $v$}
	\KwOut{Decomposition of $P$ into weak visibility polygons}
	\smallskip
	$W_1 \gets$ visibility polygon of $v$; \tcc*{i.e., the set of points visible from $v$}
	$i \gets 1$, $j \gets 2$\;
	\Repeat{$\bigcup_{W \in \mathcal{W}} = P$}{
		$\mathcal{W} \gets \mathcal{W} \cup \{W_i\}$\;
		\ForEach{edge $e$ of $W_i$ which is not a boundary edge of $P$}
		{	\qquad\hfill\qquad\hfill\qquad\tcc{that is, $e$ cuts $W_i$ from the rest of $P$}
			$W_j \gets$ visibility polygon of $e$ in the
				adjacent subpolygon of $P\setminus W_i$\;
			$j \gets j + 1$\;
		}
		$i \gets i+1$\;
	}
	\Return{$\mathcal{W}$}\;
	\caption{Polygon decomposition algorithm of B\"{a}rtschi \emph{et al.} \cite{Bartschi-2014}}
	\label{alg:bartdecompose}
\end{algorithm}

The Algorithm~\ref{alg:bartdecompose} takes a simple polygon $P$ and an arbitrary vertex $v$ of $P$.
First, the algorithm computes the visibility polygon $W_1$ of $v$ and removes~$W_1$ from $P$.
Then, until the whole polygon is partitioned, the algorithm selects an edge $e$ of
a previously removed polygon, computes the visibility polygon $W_j$ of $e$ within
the rest of $P$, and then removes $W_j$, and so on.

The mentioned decomposition algorithm has been performed in
\cite{Bartschi-2014} to obtain a point-to-point guarding, 
in which the guards are not necessarily selected at vertices of the polygon.
In our case, we need to ensure that recursively chosen guards in weak
visibility subpolygons $W_j$ of $P$ are placed at vertices of $P$.
However, Algorithm~\ref{alg:bartdecompose} typically creates subpolygons
whose vertices are internal points of the edges of~$P$.
To overcome this problem, we slightly modify the algorithm by inserting an
intermediate phase -- creating a special subpolygon, which ``recovers'' the
property that the base edge of each subsequently constructed visibility polygon
is between two vertices of $P$ again.
As we will show later, this modification does not weaken the decomposition
technique much.

\begin{algorithm}[tbp]
	\KwIn{A simple polygon $P$, an arbitrary edge $e_0$ of $P$}
	\KwOut{Decomposition of $P$ into weak visibility polygons of two kinds}
	\smallskip
	$W_1 \gets$ visibility polygon of $e_0$ in $P$\;
	$i \gets 1$, $j \gets 2$\;
	\Repeat{$\bigcup_{W \in \mathcal{W}} = P$}{
		$\mathcal{W} \gets \mathcal{W} \cup \{W_i\}$\;
		\ForEach{edge $e=uv$ of $W_i$ which is not a boundary edge of $P$}{
			$U \gets$ visibility polygon of $e$ in the
				adjacent subpolygon of $P\setminus W_i$\;
			\If{both $u$ and $v$ are vertices of $P$}
			{ \label{it:uvv}
				$W_j \gets$ $U$ \tcc*{got an ordinary subpolygon $U$}
				$j \gets j + 1$\;
			}
			\Else{ \label{it:uvxy}
				Let $x$ and $y$ be the vertices of $U$ such that $x\not=v$
				is adjacent to $u$ and $y\not=u$ is adjacent to~$v$
				\tcc*{$x,y$ are vertices of $P$, too}
				$\Pi_{xy} \gets$ geometric shortest path in $U$ between $x$ and $y$\;
				$U_1$ $\gets$ the subpolygon of $U$ bounded by the
				paths $\Pi_{xy}$ and $(xu,uv,vy)$\;
				$\mathcal{W} \gets \mathcal{W} \cup \{U_1\}$
					\tcc*{adding a forward subpolygon $U_1$}
				\ForEach{edge $f$ of the path $\Pi_{xy}$}
				{ \label{it:pixy}
					$W_j \gets$ visibility polygon of $f$ in the
						adjacent subpolygon of $P\setminus U_1$\;
					\tcc{"forwarding" to ordinary subpolygons adjacent to $U_1$}
					$j \gets j + 1$
				}
			}
		}
		$i \gets i+1$\;
	}
	\Return{hierarchically structured decomposition $\mathcal{W}$ of $P$}\;
	\caption{Adjusted polygon decomposition algorithm}
	\label{alg:ourdecompose}
\end{algorithm}

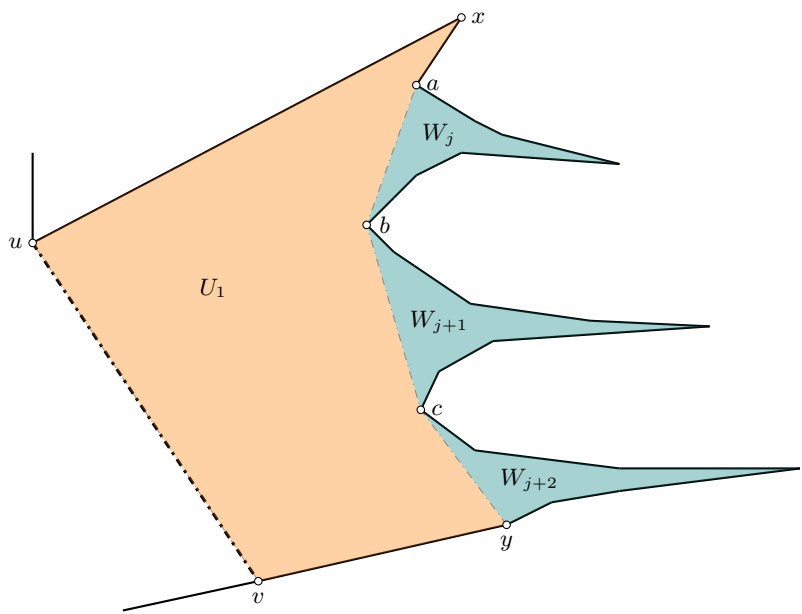
\begin{figure}[tbp]
\centering
\begin{tikzpicture}[xscale=-0.6, yscale=0.3]
\coordinate (1) at (7,-1);
\coordinate (2) at (1.5,1.5);
\coordinate (3) at (0.5,2.5);
\coordinate (4) at (-1,3);
\coordinate (5) at (-5,4);
\coordinate (6) at (-1,4);
\coordinate (7) at (2.2,4.8);
\coordinate (8) at (3.4,6.6);
\coordinate (9) at (3,8.3);
\coordinate (10) at (1.8,9.65);
\coordinate (11) at (-3,10.3);
\coordinate (12) at (-0.35,10.55);
\coordinate (13) at (2.3,11.3);
\coordinate (14) at (4,13.6);
\coordinate (15) at (4.6,14.8);
\coordinate (16) at (4,16);
\coordinate (17) at (3.5,17);
\coordinate (18) at (2.5,18);
\coordinate (19) at (-1,17.5);
\coordinate (20) at (1.6,18.8);
\coordinate (21) at (2.2,19.4);
\coordinate (22) at (3.5,21);
\coordinate (23) at (2.5,24);
\coordinate (24) at (12,14);
\foreach \i in {1,...,23}
	{
		\pgfmathtruncatemacro\j{\i+1};
		\draw[thick] (\i)--(\j);
	}
\draw[very thick,dash dot] (24)--(1);
\coordinate (A) at (10,-2.3);
\coordinate (B) at (12, 18);
\draw[thick] (A)--(1);
\draw[thick] (B)--(24);

\draw[draw=none,fill=teal, opacity=0.35] (2)--(3)--(4)--(5)--(6)--(7)--(8)--(2);
\draw[draw=none,fill=teal, opacity=0.35] (8)--(9)--(10)--(11)--(12)--(13)--(14)--(15)--(8);
\draw[draw=none,fill=teal, opacity=0.35] (15)--(16)--(17)--(18)--(19)--(20)--(21)--(22)--(15);
\draw[dash dot,fill=orange, opacity=0.35] (1)--(2)--(8)--(15)--(22)--(23)--(24)--(1);

\node at (8,12) {$U_1$};
\node at (1,3.5) {$W_{j+2}$};
\node at (3,10.5) {$W_{j+1}$};
\node at (3,18.8) {$W_j$};

\tikzstyle{every node}=[draw,fill=white, shape=circle, inner sep=1pt];
\node[label=left:$u$] at (24) {};
\node[label=below:$v$] at (1) {};
\node[label=below:$y$] at (2) {};
\node[label=right:$c$] at (8) {};
\node[label=right:$b$] at (15) {};
\node[label=right:$a$] at (22) {};
\node[label=right:$x$] at (23) {};
\end{tikzpicture}
\caption{An illustration of forward partitioning in Algorithm~\ref{alg:ourdecompose}.
	The base edge $e=uv$ of its visibility polygon $U$ (to the right) has one
	end $v$ which is not a vertex of $P$.
	In this situation we add an intermediate forward subpolygon $U_1$
	(coloured orange) bounded by $e$ and the path $\Pi_{xy}=(x,a,b,c,y)$.
	The child ordinary subpolygons of $U_1$ are then the visibility
	polygons of the edges of $\Pi_{xy}$, denoted in the picture by $W_j,W_{j+1},W_{j+2}$.
	Notice also that, in general, the union $U_1\cup W_{j}\cup
	W_{j+2}\cup\dots$ may be larger, as witnessed in the picture by~$W_j$, than
	the visibility polygon $U$ of the edge~$e$.}
\label{fig:forward}
\end{figure}

\medskip
We describe the full adjusted procedure in Algorithm~\ref{alg:ourdecompose}.
The constructed decomposition there consists of two kinds of polygons;
\begin{itemize}\item
the {\em ordinary} weak visibility polygons whose base edge has both ends
vertices of~$P$ (while some other vertices may be internal points of edges of~$P$),
\item
the {\em forward} weak visibility polygons whose base edge has at least one
end not a vertex of $P$ (and, actually, exactly one end, but this fact is not
crucial for the arguments), but all their other vertices are vertices of $P$
and form a concave chain (which will be important).
\end{itemize}
This two-sorted process is called \emph{forward partitioning} (cf.~the
branch from line~\ref{it:uvxy} of the algorithm).
We illustrate its essence in Figure~\ref{fig:forward}.
Notice that it may happen that $x=y$ and $\Pi_{xy}$ is a trivial one-vertex path.

\begin{lemma}\label{lem:ourdecompose}
Algorithm~\ref{alg:ourdecompose} runs in polynomial time, and it outputs a
decomposition $\mathcal{W}$ of $P$ into weak visibility polygons such that
the following holds:
\begin{enumerate}
\item [a)]
If $U\in\mathcal{W}$ is a forward polygon, then all vertices of $U$ except possibly
the base ones are vertices of $P$.
The non-base vertices form a concave chain.
\item [b)]
If $W\in\mathcal{W}$ is an ordinary polygon, then all vertices of $W$ 
are vertices of $P$, except possibly for apex vertices of max funnels of $W$
(cf.~Section~\ref{sec:v2pChromatic}).
\end{enumerate}
\end{lemma}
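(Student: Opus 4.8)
The plan is to verify the three claimed properties (polynomial running time, property a) and property b)) essentially by unrolling the definitions of Algorithm~\ref{alg:ourdecompose} and of the notion of a visibility polygon, using the distinction between the two ways the main \texttt{ForEach} loop can process a cutting edge $e=uv$ of a current region~$W_i$.

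First I would deal with the running time. Every iteration of the outer \texttt{Repeat} loop processes one region $W_i$ and, for each of its non-boundary edges, either directly creates a child region $U$ (line~\ref{it:uvv}) or first inserts one forward region $U_1$ and then creates child regions along the shortest path $\Pi_{xy}$ inside $U$ (line~\ref{it:uvxy}). A visibility polygon of an edge inside a simple subpolygon, and a geometric shortest path between two vertices, are both computable in polynomial (indeed near-linear) time by standard computational-geometry routines, so each such step is polynomial. To bound the number of steps, I would argue that the regions produced form a decomposition of $P$ with pairwise disjoint interiors: each newly created region lies in $P\setminus W_i$ on the far side of the cutting edge $e$ (respectively, $U_1$ lies between $e$ and $\Pi_{xy}$, and the children of $U_1$ lie on the far side of the edges of $\Pi_{xy}$). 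Hence, as in B\"artschi \emph{et al.}~\cite{Bartschi-2014}, the recursion is controlled by a ``staircase''/shortest-path-tree structure of $P$, and the total number of regions is polynomial (linear) in $n$; combined with the per-step cost this gives the overall polynomial bound. I would also remark here that the stopping condition $\bigcup_{W\in\mathcal W}W=P$ is reached because every cutting edge strictly shrinks the unexplored part of~$P$, and by construction (using the forwarding trick) each cutting edge has both ends at vertices of~$P$, so the process is exactly the one analysed in~\cite{Bartschi-2014} modulo the intermediate forward regions.

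Next, property~a). A forward region $U_1$ is created only in the branch starting at line~\ref{it:uvxy}, where its boundary is explicitly the edge $e=uv$ together with the two edges $xu$, $vy$ and the shortest path $\Pi_{xy}$ inside~$U$. The endpoints $x,y$ are vertices of $U$ adjacent along the polygon boundary to $u$ and $v$; since $U$ is a visibility polygon of the edge $uv$ in the adjacent subpolygon, the only vertices of $U$ that can fail to be vertices of $P$ are the two endpoints of its base edge, so $x$ and $y$ (being \emph{other} vertices of $U$) are vertices of~$P$. All interior vertices of $\Pi_{xy}$ are vertices of $U$ distinct from $u,v$, hence vertices of~$P$ as well. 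Thus every vertex of $U_1$ except possibly the two base vertices $u,v$ is a vertex of~$P$. That these non-base vertices $x,\dots,y$ form a concave chain (as seen from inside $U_1$) is the standard fact that a geometric shortest path between two vertices inside a simple polygon is an ``inward-convex'' polygonal chain: at each interior vertex of $\Pi_{xy}$ the path turns towards the polygon boundary, which is exactly reflexivity of that vertex in $U_1$. I would state this with a short justification rather than a full proof, since it is classical.

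Finally, property~b). An ordinary region $W$ is produced either as the initial $W_1$ (a visibility polygon of the starting edge $e_0$, whose ends are genuine vertices of $P$), or in line~\ref{it:uvv} as $W_j=U$ when both ends $u,v$ of the cutting edge are vertices of $P$, or in line~\ref{it:pixy} as the visibility polygon of an edge $f$ of $\Pi_{xy}$, whose ends are vertices of $P$ by property~a). In all cases the base edge of $W$ has both ends at vertices of~$P$. Now for \emph{any} weak visibility polygon $W$ with base edge $ab$, and $ab$ having both ends at vertices of the ambient polygon: every vertex $w$ of $W$ is either a vertex of the ambient polygon, or it is an interior point of an edge of the ambient polygon that arises as the point where the line of sight from some base point ``grazes'' a reflex vertex and continues — and such a point is precisely an apex of one of the max funnels of $W$ in the sense of Section~\ref{sec:v2pChromatic}. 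I would make this precise by recalling that the non-apex vertices of the max funnels are exactly the vertices of the shortest-path trees $T_u,T_v$ inside $W$, all of which are genuine polygon vertices, while the apices are the ``artificial'' points on polygon edges. Putting this together yields that all vertices of $W$ are vertices of $P$ except possibly the apex vertices of max funnels of $W$, which is exactly~b).

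\textbf{Main obstacle.} The only genuinely non-routine point is the claim that new regions have pairwise disjoint interiors and that their number stays polynomial — i.e.\ that inserting the forward regions $U_1$ and re-basing on the edges of $\Pi_{xy}$ does not blow up or overlap the decomposition. This is where I would lean most heavily on the structure of the algorithm of~\cite{Bartschi-2014}: the forward step only refines a single region $U$ that would have been created anyway, splitting it into $U_1$ plus the visibility polygons of the $O(|\Pi_{xy}|)$ edges of $\Pi_{xy}$, all inside $U$'s ``territory'', so no region is ever visited twice and the total count is linear in~$n$. Everything else is bookkeeping against the definitions.
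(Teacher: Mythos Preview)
Your approach is essentially the same as the paper's, which is equally terse (the paper dismisses claim~(a) as ``trivial from the choice of $\Pi_{xy}$ as a geometric shortest path'' and proves~(b) in two sentences via the observation that a non-$P$ vertex of a visibility polygon must be convex and hence an apex).

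There is, however, one incorrect intermediate claim in your argument for~(a): it is \emph{not} true that ``the only vertices of $U$ that can fail to be vertices of $P$ are the two endpoints of its base edge''. The visibility polygon $U$ of $e$ can have many artificial vertices where a line of sight from $e$ grazes a reflex vertex and then hits an edge interior --- exactly the apex-type points you yourself describe in part~(b). So you cannot conclude that interior vertices of $\Pi_{xy}$ lie in $V(P)$ merely from ``vertices of $U$ distinct from $u,v$''. The correct reason is the one you actually state two sentences later: interior vertices of a geodesic in $U$ are \emph{reflex} vertices of $U$, and since the artificial (non-$P$) vertices of a visibility polygon are always convex, reflex vertices of $U$ must be genuine vertices of~$P$. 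For $x$ and $y$ specifically, each is the boundary neighbour in $U$ of a point $u$ (resp.\ $v$) lying in the interior of an edge of $P$, and is therefore an endpoint of that edge --- hence a vertex of~$P$ (this is what the inline comment in the algorithm asserts). Once you reroute the argument through these two observations, your proof is correct and matches the paper's.
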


\begin{proof}
It is an easy routine to verify that the algorithm can be implemented in polynomial
time and that the members of $\mathcal{W}$ are pairwise internally disjoint
subpolygons which together cover~$P$.

Claim (a) follows trivially from the choice of $\Pi_{xy}$ as a geometric
shortest path in $U$.
For claim (b), the base vertices of $W$ are vertices of $P$ by
the condition on line \ref{it:uvv}, or the choice of base edge $f$ on line
\ref{it:pixy} of Algorithm~\ref{alg:ourdecompose}.
If some other vertex $w$ of $W$ is not a vertex of $P$, then since $W$ is a
visibility polygon of $f$, the vertex $w$ must not be reflex, and so $w$ is
the apex of some max funnel of~$W$.
\end{proof}

\subsection{Recursive colouring of the decomposition}

\begin{figure}[tbp]
\centering
\begin{tikzpicture}[xscale=1.4, yscale=0.8, rotate=-93]
\coordinate (1) at (3.6,4);
\coordinate (2) at (5.2,4.2);
\coordinate (3) at (3.5,4.7);
\coordinate (4) at (6.6,4.4);
\coordinate (5) at (4,5);
\coordinate (6) at (6.4,5);
\coordinate (7) at (3.5,6);
\coordinate (8) at (5.8,5.8);
\coordinate (9) at (4,7);
\coordinate (10) at (5.2,6.6);
\coordinate (11) at (4.4,7.6);
\coordinate (12) at (8,4);
\coordinate (13) at (10.6,6.8);
\coordinate (14) at (11,5);
\coordinate (15) at (12.5,7.5);
\coordinate (16) at (16,7);
\coordinate (17) at (15,8);
\coordinate (18) at (11,9);
\coordinate (19) at (16.2,8);
\coordinate (20) at (16.5,4);
\coordinate (21) at (15.5,6.5);
\coordinate (22) at (10,4);
\coordinate (23) at (11,3.8);
\coordinate (24) at (12,4);
\coordinate (25) at (11.5,3.5);
\coordinate (26) at (12,3);
\coordinate (27) at (10,3.6);

\foreach \i in {1,...,18,20,21,22,23,24,25,26}
	{
		\pgfmathtruncatemacro\j{\i+1};
		\draw[thick] (\i)--(\j);
	}
\draw[thick] (27)--(1);

\draw[dash dot, very thick] (19)--(20);
\coordinate (A) at (16.27,7.13);
\coordinate (B) at (16.29,6.76);
\coordinate (C) at (9.29,3.64);
\coordinate (D) at (7.31,3.77);

\draw[dash dot,fill=orange, opacity=0.3] (12)--(14)--(15)--(16)--(17)--(18)--(19)--(20)--(21)--(22)--(C)--(D)--(12);

\coordinate (P) at (5,3);
\coordinate (Q) at (17,7);

\coordinate (X) at (5,1.5);
\coordinate (Y) at (17,7.5);

\draw[thick,dotted,color=blue!50!black] (P)--(Q);
\draw[thick,dotted,color=blue!50!black] (X)--(Y);

\draw[dash dot, fill=magenta, opacity=0.25] (1)--(2)--(3)--(4)--(5)--(6)--(7)--(8)--(9)--(10)--(11)--(12)--(D)--(1);
\draw[dash dot, fill=magenta, opacity=0.5] (1)--(2)--(4)--(6)--(8)--(10)--(11)--(12)--(D)--(1);

\draw[dash dot, fill=green!60!black, opacity=0.25] (23)--(24)--(25)--(26)--(27)--(23);
\draw[dash dot, fill=green!60!black, opacity=0.5] (22)--(23)--(27)--(C)--(22);

\draw[dash dot, fill=magenta, opacity=0.25] (12)--(13)--(14)--(12);
\node at (16.8,6) {$e$};
\node at (13,6.5) {$W$};
\node at (12.5,3.5) {left child};
\node at (8,5.5) {right children};
\end{tikzpicture}
\caption{A weak visibility polygon $W$ of the base edge $e$ (coloured
orange), its left child (coloured green) and two right children (coloured pink).
The two forward polygons among all three children of $W$ are filled with
darked colour.}
\label{fig:decompositione}
\end{figure}
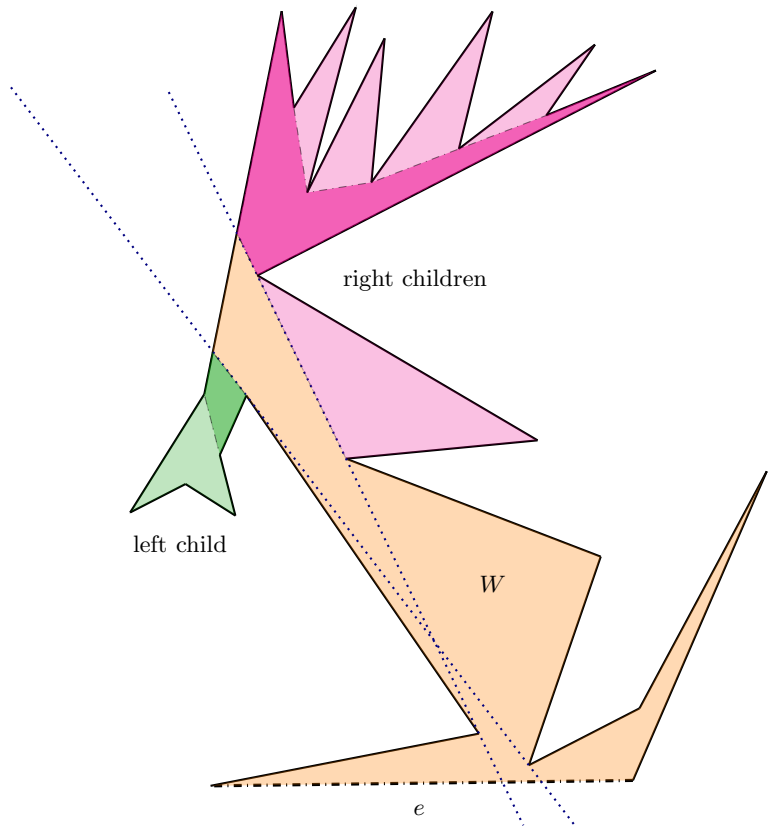

In the colouring phase, we traverse the (naturally rooted) decomposition tree of the
decomposition~$\mathcal{W}$ of $P$ computed by Algorithm~\ref{alg:ourdecompose}.
We will use the terms {\em parent} and {\em child} polygon with respect to
this rooted decomposition tree (which, essentially, is a BFS tree).
For each ordinary weak visibility polygon $W$ in $\mathcal{W}$, we apply our
Algorithm~\ref{alg:weakv2pcolouringB} for guarding, which is a correct usage since only
some apices of the max funnels of $W$ are not vertices of $P$, and those are
not used for guards.
On the other hand, for each forward weak visibility polygon $U$ in
$\mathcal{W}$, we apply a straightforward guarding by a ruler sequence on the
non-base vertices (which are vertices of $P$ and form a concave chain in~$U$).
For vertices shared between a parent and a child polygon we apply the
parental colour (which automatically resolves also sibling conflicts).

For this colouring scheme, in order to avoid colour conflicts between
different polygons, we need to use disjoint subsets of colours for the
ordinary and the forward polygons.
Moreover, this whole set of colours will be used in three disjoint copies
-- the first set of colours given to a parent polygon, the second one given
to all its ``left'' children and the third one to its ``right'' children.
Each child polygon will then reuse the other two colour sets for its children.
The reason why this works is essentially the same as in B\"{a}rtschi
\emph{et al.} \cite{Bartschi-2014}.

We need to define what the left and right children mean.
Consider an ordinary polygon $W\in\mathcal{W}$ with the base edge $e$,
and picture $W$ such that $e$ is drawn horizontal with $W$ above it.
See Figure~\ref{fig:decompositione}.
Then every edge $f$ of $W$ which is not a boundary edge of $P$ is {\em not} horizontal.
Hence every child polygon $U$ of $W$ in the decomposition (regardless of whether $U$
is a forward or ordinary polygon) is either to the left of $f$ or to the
right of it, and then $U$ is called a {\em left} or {\em right} child of $W$, respectively.
If $U$ is a left child of $W$ and $U$ is a forward polygon, then the
child ordinary polygons of $U$ are also called {\em left} children.
The same applies to right children.

We can now state the precise procedure in Algorithm~\ref{alg:coloringoverall}.

\begin{algorithm}[tb]
	\caption{Computing a V2P conflict-free chromatic guarding of 
		a simple polygon using $O(\log^2n)$ colours.}
	\label{alg:coloringoverall}
	\KwIn{A simple polygon $P$.}
	\KwOut{A V2P conflict-free chromatic guarding of $P$ using $O(C+\log n)$
	colours, where $C$ is the maximum number of colours used by calls to
	Algorithm~\ref{alg:weakv2pcolouringB}.}

	\smallskip
	Call Algorithm~\ref{alg:ourdecompose} to get the hierarchically
		structured decomposition $\mathcal{W}$ of $P$\;
	$A \gets \{1,2,\ldots,C\}$, $B \gets \{C\!+\!1,\ldots,C\!+\!\lfloor\log n\rfloor\}$
		\tcc*{colour sets to be used}
	\ForEach{ $i=1,2,3$ }{
		$C_i \gets$ $C_i'\cup C_i''$ where $C_i'$ is a disjoint
			copy of $A$ and $C_i''$ a disjoint copy of $B$\;
	}
	$W \gets$ the root polygon in the decomposition $\mathcal{W}$\;
	Call procedure RecursiveColour($W$,1)
		\tcc*{as defined below}
	\Return{Coloured $P$}\;
\end{algorithm}

\begin{procedure}[tb]
  	\caption{RecursiveColour($W$,\,$c$) }
        \If{$W$ is a forward polygon in the decomposition $\mathcal{W}$}
	{
		\ForEach{child polygon $W_i$ of $W$}
		{ Call procedure RecursiveColour($W_i$,\,$c$)\; }
		Colour the non-base vertices of $W$ by the ruler sequence
		using colours from $C_c''$\;
		\tcc{this overrides child colours from the recursive calls}
	} \Else {
		Choose $a,b$ such that $\{c,a,b\}=\{1,2,3\}$\;
		\ForEach{child polygon $W_j$ of $W$}
		{
			\If{$W_j$ is a left child of $W$}
			{ Call procedure RecursiveColour($W_j$,\,$a$)\; }
			\Else
			{ Call procedure RecursiveColour($W_j$,\,$b$)\; }
		
		}
		Call Algorithm~\ref{alg:weakv2pcolouringB} to colour-guard $W$,
			using colours from $C_c'$\;
		\tcc{this again overrides child colours from the recursive calls}
	}
 	\Return{}
\end{procedure}

\begin{theorem}\label{thm:allpolygon}
Algorithm~\ref{alg:coloringoverall} in polynomial time computes a conflict-free
chromatic guarding of an $n$-vertex simple polygon using $O(\log^2 n)$ colours.
\end{theorem}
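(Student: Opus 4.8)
\emph{Overview.} The plan is to check the three assertions in turn: polynomial running time, the $O(\log^2 n)$ colour bound, and conflict-freeness of the produced colouring. The first two are routine bookkeeping built on Lemma~\ref{lem:ourdecompose} and Theorem~\ref{thm:twoApprox}; the real content is the third.

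\emph{Time and colours.} By Lemma~\ref{lem:ourdecompose}, Algorithm~\ref{alg:ourdecompose} runs in polynomial time and yields a decomposition $\mathcal W$ of polynomially many weak visibility polygons whose decomposition tree has polynomially many nodes; RecursiveColour visits each node once, invoking the polynomial Algorithm~\ref{alg:weakv2pcolouringB} (Theorem~\ref{thm:twoApprox}) at ordinary nodes and merely ruler-colouring a concave chain at forward nodes, so the whole procedure is polynomial. For the colour count, let $C$ be the largest number of colours used by any call to Algorithm~\ref{alg:weakv2pcolouringB}; since every subpolygon has at most $n$ vertices, $C=O(\log^2 n)$, and a concave chain of a forward polygon has at most $n$ vertices and thus needs at most $1+\lfloor\log n\rfloor$ ruler colours. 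Hence Algorithm~\ref{alg:coloringoverall} uses $3(C+\lfloor\log n\rfloor)=O(\log^2 n)$ colours. The reason three copies of the palette suffice irrespective of the depth of the decomposition tree is the rotation rule in RecursiveColour, which is exactly the trick of B\"artschi et al.~\cite{Bartschi-2014}: an ordinary polygon at level $c$ draws from the $A$-part $C_c'$, a forward polygon at level $c$ from the $B$-part $C_c''$, its parent sits at a different level, and its left and right children at the two levels completing $\{1,2,3\}$ (a forward polygon passing its level unchanged to its ordinary children).

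\emph{Conflict-freeness.} Fix $p\in P$ and a subpolygon $W_p\in\mathcal W$ with $p\in W_p$. First I would show that already within $W_p$ the point $p$ sees a uniquely coloured guard $\gamma_p$: if $W_p$ is ordinary this is Theorem~\ref{thm:twoApprox} applied to $W_p$ (and Lemma~\ref{lem:ourdecompose}(b) together with the apex-avoiding property of Algorithm~\ref{alg:weakv2pcolouringB} guarantee those guards sit at genuine vertices of $P$), while if $W_p$ is forward then $p$ sees a consecutive block of its concave non-base chain (Lemma~\ref{lem:ourdecompose}(a)), on which the ruler sequence has a unique largest value; thus $\gamma_p$ comes from $C_c'$ or $C_c''$ according as $W_p$ is ordinary or forward. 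Next I would prove the \emph{locality lemma}: every guard visible from $p$ lies in $W_p$, in its parent, in a child of $W_p$, or -- the only extra possibility, caused by the forward construction -- in a child of a forward child of $W_p$. This is argued as the corresponding statement in B\"artschi et al.~\cite{Bartschi-2014} by tracing the straight segment from $p$ to the guard through the decomposition: each exit from a subpolygon crosses either its base edge (into the parent) or an internal window edge (into the child that is the visibility polygon of that edge), and a forward polygon -- bounded below by its base edge and above by a geometric shortest path -- lets a single line of sight cross from the base edge to exactly one edge of that path but no further. Colour separation then finishes: $\gamma_p$ lies in the $A$-part (resp.\ $B$-part) of copy $C_c$, whereas every other subpolygon occurring in the locality lemma draws its colours from a set disjoint from that part -- either because it sits at a different level $c'\neq c$, or because it is the ordinary child of a forward $W_p$ and uses the complementary $B$-part of level $c$, or the forward child of an ordinary $W_p$ and uses the complementary $A$-part of its level. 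Since $\gamma_p$ is also unique among the guards $p$ sees inside $W_p$, it is a conflict-free witness for $p$, and $p$ was arbitrary.

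\emph{Main obstacle.} The hard part is the locality lemma together with the bookkeeping around the forward polygons and the shared vertices: one must verify that inserting forward subpolygons into the B\"artschi et al.\ decomposition lets a line of sight reach at most one extra tree level and only through a forward polygon, and that the level-passing colour rule compensates precisely for that extra level; one must also check that re-colouring the vertices shared between a subpolygon and its parent (with the parental colour) only removes colours of $C_c$ from $p$'s view and never destroys the witness $\gamma_p$ of the first step. Everything else is the counting above and direct appeals to Theorem~\ref{thm:twoApprox} and Lemma~\ref{lem:ourdecompose}.
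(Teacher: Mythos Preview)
Your time and colour analysis is fine, and your within-$W_p$ argument for the existence of a locally unique guard $\gamma_p$ is correct. The gap is in your locality lemma.

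The statement ``every guard visible from $p$ lies in $W_p$, in its parent, in a child of $W_p$, or in a child of a forward child of $W_p$'' is false as written. A point $p\in W_p$ can very well see into a \emph{sibling} of $W_p$: trace the segment across the base of $W_p$ into the parent $W_0$, and then across another window of $W_0$ into a sibling $W'$. Your own tracing argument (``each exit crosses either its base edge into the parent or a window edge into a child'') permits exactly this two-step route; nothing in it bounds the number of crossings to one. Opposite-side siblings receive different colour sets, so they are harmless, but \emph{same-side} siblings (both left children, or both right children, of the same ordinary parent) receive the very same palette $C_c'$ as $W_p$, and this is precisely the conflict you must rule out. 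Your proposal never invokes the left/right distinction in the visibility analysis, so the case is simply missing. A second omission, benign for colours but fatal for the lemma as a lemma, is that when the parent of $W_p$ is forward, $p$ can see into the ordinary \emph{grandparent} as well.

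The paper does not attempt a locality lemma at all. Instead it proves the weaker statement that suffices: no observer can see a vertex of another subpolygon carrying the \emph{same} colour set as its home polygon. For the ancestor/descendant direction this is the standard window argument you allude to (if the segment crosses the base $e$ of an intermediate ordinary polygon $W$, the far endpoint is visible from $e$ and hence lies in $W$, contradiction). For same-side siblings $W_1,W_2$ of a common ordinary parent $W$, the paper picks a vertex $q$ of $W$ lying on the boundary arc between the two windows and observes that any line of sight from the base $e$ to $q$ would have to cross the alleged segment from $W_1$ to $W_2$, which is impossible. For siblings under a common forward parent it uses the concavity of the shortest-path chain. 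These sibling arguments are exactly what your sketch lacks, and they are where the left/right partition earns its keep.
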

\begin{proof}
Overall runtime analysis follows that of previous Algorithm~\ref{alg:ourdecompose}.
As noted above, the obtained colouring is valid since we only assign guards
to those vertices of the polygons in $\mathcal{W}$ which are at the same
time vertices of~$P$; this is claimed by Lemma~\ref{lem:ourdecompose} in
connection with Algorithm~\ref{alg:weakv2pcolouringB}.

It remains to prove that the resulting colouring of $P$ is conflict-free,
i.e., that every observer can see a unique colour.
We know that the constructed colouring is conflict-free within any single
polygon of $\mathcal{W}$.
Let us say that the {\em home} polygon of an observer $x\in P$ is $W\in\mathcal{W}$
such that $x\in W$ (the parent one in case of $x$ belonging to multiple polygons).
The rest will follow if we prove that no observer in $P$ 
can see another polygon which has received the same colour set as the home
polygon of~$x$.

Consider an ordinary polygon $W\in\mathcal{W}$ which is the visibility
polygon of its base edge $e$ (such as the one in Figure~\ref{fig:decompositione}), 
and let~$W_0$ be the nearest ancestor of $W$ which is also ordinary 
(i.e., $W_0$ is the parent of $W$, or the grandparent in case the parent is a forward polygon).
Let $x\in W_o$ be our observer, and let $W_1$ be an ordinary child or grandchild of $W$ 
which uses the same colour set as~$W_0$ in Algorithm~\ref{alg:coloringoverall}.
Then the line of sight between $x$ and some non-base vertex of $W_1$ must
cross the base edge $e$ of $W$, and hence be visible from~$e$.
This is a contradiction since the observed vertex of $W_1$ does not belong to~$W$.
The same argument applies symmetrically, and also in the case of a forward
home polygon.

Consider now two sibling ordinary polygons $W_1,W_2$ which receive the same colour set.
One case is that their common parent is a forward polygon $U$.
Since $W_1$ and $W_2$ are adjacent to a concave chain of $U$, an observer
$x$ may see both of them only if $x$ belongs to $U$ or some ancestor, and so
there is no conflict for~$x$.
The second case is that they have a common ordinary grandparent/parent $W$
with base edge~$e$.
Suppose that there is an observer, say $y\in W_1$, which sees a vertex $p$ of $W_2$.
Let $q$ be any vertex of $W$ ``between'' $W_1$ and $W_2$.
Then the line of sight between $e$ and $q$ must cross the line segment $yp$,
which is absurd.
Again, the same argument may also be applied to sibling forward polygons.

We have exhausted all cases.

Finally, the number of colours used by Algorithm~\ref{alg:coloringoverall}
is $3\cdot(C+\log n)$ where $C=O(\log^2 n)$ by Theorem~\ref{thm:twoApprox},
and so we have the bound $O(\log^2 n)$.
\end{proof}

\section{Vertex-to-vertex conflict-free chromatic guarding} \label{sec:v2vcfc}
	
	In the last section, we turn to the vertex-to-vertex variant
	of the guarding problem.
	As we have noted at the beginning, the V2V weak conflict-free chromatic guarding
	problem coincides with the graph conflict-free colouring problem
	\cite{k-cfc-graph} on the visibility graph of the considered polygon.
	While, for the latter graph problem, \cite{k-cfc-graph} provided
	constructions requiring an unbounded number of colours,
	it does not seem to be possible to adapt those constructions for polygon
	visibility graphs (and, actually, we propose that the conflict-free
	chromatic number is bounded in the case of polygon visibility graphs,
	see Conjecture~\ref{cj:v2v-upper3}).

\subsection{Lower bound for V2V conflict-free chromatic guarding}
	
	We start by showing in Figure~\ref{fig:v2v-1colour} 
	that one colour is not always enough.
	To improve this very simple lower bound further, we will then need 
	a more sophisticated construction.
	
	\begin{proposition}\label{pro:v2v-3colours}
		There exists a simple polygon which has no V2V conflict-free chromatic guarding with
		$2$ colours.
	\end{proposition}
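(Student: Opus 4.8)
The plan is to exhibit an explicit simple polygon $P$ together with an argument that no assignment of two colours to its vertices can be conflict-free in the V2V sense, i.e., some vertex sees no colour uniquely among the vertices visible to it (including itself). First I would set up the building block: a small "gadget" subpolygon in which a designated vertex $g$ sees exactly a fixed set of other vertices, and where the local visibility structure forces a contradiction unless a third colour is available. The natural way to do this with two colours is a parity/counting obstruction — if every vertex's closed visibility neighbourhood has even size and the neighbourhoods overlap in a suitably rigid pattern, then a two-colouring cannot make a colour appear an odd number of times (in particular, exactly once) in every neighbourhood. So the first concrete step is to design a polygon whose visibility graph has each closed neighbourhood of even cardinality, yet is connected and "tight" enough that no two-colouring works; combs or spike-chains attached to a central chamber are the usual shape for such constructions.

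The key steps, in order, would be: (1) describe the polygon coordinates (or at least its combinatorial visibility graph) precisely, typically a central region that sees a controlled collection of reflex "teeth," with each tooth contributing one or two vertices; (2) read off the visibility graph, listing for each vertex exactly which vertices it sees, and verify every closed neighbourhood is even (this is the routine but essential bookkeeping step); (3) argue the contradiction: suppose a proper two-colouring with colours $\{1,2\}$ exists and is conflict-free; for each vertex $v$, the colour seen uniquely by $v$ appears an odd number of times in $N[v]$, hence the other colour appears an odd number of times too (since $|N[v]|$ is even), so in fact \emph{both} colours appear an odd, in particular nonzero, number of times in every $N[v]$; (4) derive a global parity contradiction from this by summing an appropriate invariant over all vertices, or by propagating the forced colour counts along the chain of overlapping neighbourhoods until two of them force incompatible parities on a shared vertex set. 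An alternative to the pure parity route, if the even-neighbourhood trick is hard to realise geometrically, is a direct case analysis on a small polygon (a dozen-ish vertices) showing every two-colouring leaves some vertex conflicted; this is more elementary but less illuminating.

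The main obstacle I expect is \emph{geometric realisability}: it is easy to write down an abstract graph with all closed neighbourhoods even and no conflict-free two-colouring, but it must actually be the visibility graph of a simple polygon, and visibility graphs are heavily constrained (e.g., they contain no induced long "holes" in certain configurations, every reflex chain behaves monotonically, etc.). So the delicate work is choosing the teeth/chambers so that (a) the intended edges are genuinely present (unobstructed sightlines) and (b) no \emph{unintended} sightlines appear that would spoil the parity of some neighbourhood. I would handle this by making the central chamber a funnel-like region and placing the teeth so each tooth's two extreme vertices are mutually visible only through a narrow corridor, using the perturbation convention from the paper's visibility footnote to keep grazing lines of sight well-defined. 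Once the polygon is drawn and the visibility graph verified, the contradiction step is short; the figure and its caption will carry most of the weight of the argument, as is standard for such lower-bound constructions.
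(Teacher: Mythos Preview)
Your proposal is a strategy outline, not a proof: you never exhibit a concrete polygon, and the statement asks for one. More seriously, the parity route you favour has a real gap at step~(4). From ``$|N[v]|$ is even and some colour appears exactly once'' you correctly deduce that both colours appear an odd number of times in every closed neighbourhood, but this is far from a contradiction. Plenty of graphs (including visibility graphs) admit two-colourings with both colour-classes odd in every closed neighbourhood; you would need an additional structural invariant to sum, and you do not say what it is. ``Propagating forced colour counts along overlapping neighbourhoods'' is not an argument until you specify the propagation rule and show it terminates in an inconsistency. Your fallback (brute case analysis on a dozen-vertex polygon) is sound in principle, but you have not done it, and without a candidate polygon there is nothing to check.

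The paper takes a different, entirely constructive route. It first builds a small ``bowl'' gadget with two distinguished vertices $p_1,p_2$ and proves that any conflict-free $2$-colouring of it must place a guard at $p_1$ or $p_2$. It then glues four squeezed copies of the bowl, via tiny doors, into a bowtie-shaped polygon so that two central vertices $t,t'$ each see all four doors. A short case analysis on the unique colour seen by $t$ (and by symmetry $t'$) forces all four door-guards to have the same colour, after which the remaining bowtie vertices cannot be conflict-free guarded in the other colour. There is no parity argument; the work is in the gadget's forcing property and in arranging visibilities so the doors dominate $t$ and $t'$ while the bowl interiors stay hidden. If you want to salvage your approach, the second alternative (explicit small polygon plus case analysis) is the one that actually leads somewhere; the parity idea, as stated, does not.
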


	In the proof of Proposition~\ref{pro:v2v-3colours}, we elaborate on properties of the simple example 
	from Figure~\ref{fig:v2v-1colour} (right),
	and provide the construction shown in Figure~\ref{fig:v2v-2colour}; its underlying idea is that the polygon pictured on the left requires at
	least one guard to be placed on $p_1$ or $p_2$.
	Gluing four copies of that polygon to the picture on the right, we obtain
	a polygon for which two colours are not enough.
	
	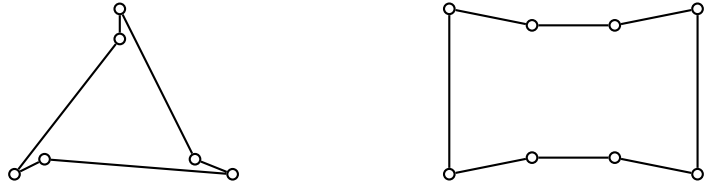
\begin{figure}[tbp]
		\centering
		\begin{tikzpicture}[scale=1]
		\tikzstyle{every node}=[draw, shape=circle, minimum size=4pt,inner sep=0pt];
		\tikzstyle{every path}=[thick];
		\node (a) at (0,0) {};
		\node (b) at (2,0) {};
		\node (c) at (1,1.6) {};
		\node (d) at (2.5,-0.2) {};
		\node (e) at (1,2) {};
		\node (f) at (-0.4,-0.2) {};
		\draw (a) -- (d) -- (b) -- (e) -- (c) -- (f) -- (a);
		\end{tikzpicture}
	\qquad\qquad\qquad\qquad
		\begin{tikzpicture}[scale=1.1]
		\tikzstyle{every node}=[draw, shape=circle, minimum size=4pt,inner sep=0pt];
		\tikzstyle{every path}=[thick];
		\node (a) at (0,0) {};
		\node (b) at (1,0.2) {};
		\node (c) at (2,0.2) {};
		\node (d) at (3,0) {};
		\node (e) at (3,2) {};
		\node (f) at (2,1.8) {};
		\node (g) at (1,1.8) {};
		\node (h) at (0,2) {};
		\draw (a) -- (b) -- (c) -- (d) -- (e) -- (f) -- (g) -- (h) -- (a);
		\end{tikzpicture}
		\caption{Two examples of simple polygons requiring at least $2$ colours for
			a V2V conflict-free chromatic guarding (in each, one vertex guard cannot see
			all vertices, and any two guards of the same colour make a conflict).}
		\label{fig:v2v-1colour}
	\end{figure}
	
	\begin{figure}[tbp]
		\centering
		\begin{tikzpicture}[xscale=1.2, yscale=1.6]
		\tikzstyle{every node}=[draw, shape=circle, minimum size=4pt,
		fill=gray, inner sep=0pt];
		\tikzstyle{every path}=[thick];
		\node[label=right:$~p_2$] (a) at (0.06,-0.3) {};
		\node[label=left:$p_1~$] (b) at (-0.06,-0.3) {};
		\node[label=left:$a_1$] (c) at (-1.9,1) {};
		\node[label=left:$a_2$] (cc) at (-1.9,1.33) {};
		\node[label=left:$a_3$] (dd) at (-1.97,1.66) {};
		\node[label=left:$a_4$] (d) at (-2.1,2) {};
		\node[label=right:$a_5$] (e) at (-1.1,2) {};
		\node[label=right:$a_6$] (ee) at (-0.89,1.66) {};
		\node[label=right:$a_7$] (ff) at (-0.64,1.33) {};
		\node[label=left:$a_8$] (f) at (-0.3,1) {};
		\node[label=below:$\!a_9~$] (g) at (-0.1,0.9) {};
		\node[label=below:$~c_9\!$] (h) at (0.1,0.9) {};
		\node[label=right:$c_8$] (i) at (0.3,1) {};
		\node[label=left:$c_7$] (ii) at (0.64,1.33) {};
		\node[label=left:$c_6$] (jj) at (0.89,1.66) {};
		\node[label=left:$c_5$] (j) at (1.1,2) {};
		\node[label=right:$c_4$] (k) at (2.1,2) {};
		\node[label=right:$c_3$] (kk) at (1.97,1.66) {};
		\node[label=right:$c_2$] (ll) at (1.9,1.33) {};
		\node[label=right:$c_1$] (l) at (1.9,1) {};
		\draw[color=blue] (a) -- (b) -- (c) -- (cc) -- (dd) -- (d)
		-- (e) -- (ee) -- (ff) -- (f) -- (g) -- (h)
		-- (i) -- (ii) -- (jj) -- (j)
		-- (k) -- (kk) -- (ll) -- (l) -- (a);
		\end{tikzpicture}
		\qquad
		\def\bowlsh#1#2{%
			\begin{scope}[shift={#1}, rotate=#2, yscale=1, xscale=0.17]
				\draw[color=blue, thin, fill=blue!15!white] (-0.06,-0.3) -- (-1.9,1) -- (-2.1,2) -- (-1.1,2)
				-- (-0.3,1) -- (0.3,1) -- (1.1,2) -- (2.1,2) -- (1.9,1) -- (0.06,-0.3);
			\end{scope}
		}
		\begin{tikzpicture}[xscale=0.4,yscale=1.15]
		\tikzstyle{every node}=[draw, shape=circle, minimum size=4pt,
		fill=gray, inner sep=0pt];
		\tikzstyle{every path}=[thick];
		\node[label=below:$t$] (a) at (0,-0.3) {};
		\node[label=below:$r_1$] (c) at (-2,-0.83) {};
		\node[label=below:$r_2$] (e) at (-4,-1.5) {};
		\node[fill=white, label=right:$q_1$] (ee) at (-4.5,-0.4) {};
		\node[fill=white, label=right:$q_2$] (ff) at (-4.5,0.4) {};
		\node[label=above:$r_3$] (f) at (-4,1.5) {};
		\node[label=above:$r_4$] (h) at (-2,0.83) {};
		\node[label=above:$t'$] (aa) at (0,0.3) {};
		\node[label=above:$s_4$] (k) at (2,0.83) {};
		\node[label=above:$s_3$] (m) at (4,1.5) {};
		\node[fill=white, label=left:$q_3$] (mm) at (4.5,0.4) {};
		\node[fill=white, label=left:$q_4$] (nn) at (4.5,-0.4) {};
		\node[label=below:$s_2$] (n) at (4,-1.5) {};
		\node[label=below:$s_1$] (p) at (2,-0.83) {};
		\draw (a) -- (c) -- (e) -- (ee) -- (ff) -- (f)
		-- (h) -- (aa) -- (k) -- (m) -- (mm)
		-- (nn) -- (n) -- (p) -- (a);
		\bowlsh{(-4.75,-0.4)}{90}
		\bowlsh{(-4.75,0.4)}{90}
		\bowlsh{(4.75,-0.4)}{270}
		\bowlsh{(4.75,0.4)}{270}
		\end{tikzpicture}
		
		\caption{A construction of an example requiring at least $3$ colours for
			a V2V conflict-free chromatic guarding (cf.~Proposition~\ref{pro:v2v-3colours}).
			The bowl shape (see on the left), suitably squeezed and with a tiny
			opening between $p_1$ and $p_2$, is placed to four
			positions within the bowtie shape on the right.}
		\label{fig:v2v-2colour}
	\end{figure}

	\begin{proof}
		We call a ``{\em bowl\/}'' the simple polygon depicted in Figure~\ref{fig:v2v-2colour} 
		(note that it contains two copies of the shape from Fig.~\ref{fig:v2v-1colour}).
		In particular, the vertices $p_1,p_2$ see all other vertices of the bowl.
		We claim the following:
		\begin{itemize}
			\item[(i)] In any conflict-free $2$-colouring of the bowl there is a guard
			placed on $p_1$ or $p_2$ (or both).
		\end{itemize}
		
		Assume the contrary to (i), that is, existence of a $2$-colouring
		of the bowl avoiding both $p_1$ and~$p_2$.
		One can easily check that the subset $A=\{a_1,\ldots,a_8\}$ of the vertices
		requires both colours, with guards possibly placed at $A\cup\{a_9\}$.
		Symmetrically, there should be guards of both colours placed at the 
		disjoint subset $\{c_1,\ldots,c_8,c_9\}$.
		Then we have got a colouring conflict at both $p_1$ and $p_2$, thus proving (i).
		(On the other hand, placing a single guard at either $p_1$ or $p_2$
		gives a feasible conflict-free colouring of the bowl.)
		
		\smallskip
		The next step is to arrange four copies of the bowl within a suitable
		simple polygon, as depicted on the right hand
		side of Figure~\ref{fig:v2v-2colour}.
		More precisely, let $S$ be the (bowtie shaped) polygon on the right of the picture.
		Note that the chains $C_1=(r_2,r_1,t,s_1,s_2)$ and
		$C_2=(r_3,r_4,t',s_4,s_3)$ of $S$ are both concave, and each vertex
		of $C_1$ sees all of~$C_2$.
		We construct a polygon $S'$ from $S$ by making a tiny opening at each of the
		vertices $q_1,q_2,q_3,q_4$, and gluing there a suitably rotated and squeezed copy of the
		bowl, where gluing is done along a copy of the edge $\lseg{p_1}{p_2}$ of the bowl.
		We call these openings at former vertices of $S$ the {\em doors} of~$S'$.
		Obviously, the doors can be made so tiny that there is no accidental
		visibility between a vertex inside a bowl and a vertex belonging
		to the rest of~$S'$.
		
		Assume, for a contradiction, that $S'$ admits a conflict-free colouring with
		two colours, say red and blue.
		Up to symmetry, let the unique colour seen by vertex $t$ be blue.
		Since $t$ sees all four doors, and (i) every door has a guard,
		at least three guards at the doors are red.
		Hence the unique colour seen by symmetric $t'$ must also be blue.
		Consequently, either there is only one blue guard at one of
		$r_1,r_4,s_1,s_4,q_1,q_2,q_3,q_4$ (which all see both $t$ and $t'$), 
		or there are two blue guards suitably placed at a pair of vertices from
		$r_2,r_3,s_2,s_3$ (each of those sees one of~$t,t'$).
		Moreover, a single blue guard cannot be placed at one of the doors
		$q_1,q_2,q_3,q_4$, because that would leave $r_3$ or $s_3$ unguarded
		(seeing two red and no blue).
		Consequently, the guards placed at the doors          
		$q_1,q_2,q_3,q_4$ must all be red, and so all vertices
		$t,r_1,r_2,r_3,r_4,t',s_1,s_2,s_3,s_4$ must be guarded by a blue guard.
		The latter is clearly impossible without a conflict
		(similarly as in Fig.~\ref{fig:v2v-1colour}).
	\end{proof}

	Our investigation of the V2V chromatic guarding problem,
	although not giving further rigorous claims (yet), moreover
	suggests the following conjecture:%
	\begin{conjecture}\label{cj:v2v-upper3}
		Every simple polygon admits a weak conflict-free vertex-to-vertex 
		chromatic guarding with at most~$3$ colours.
	\end{conjecture}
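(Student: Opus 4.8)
The plan is to attack the conjecture with the same two-level strategy that produced the $O(\log^2 n)$ bound in Sections~\ref{sec:v2pChromatic} and~\ref{sec:allsimple}, but to exploit heavily the fact that in the V2V setting the viewers are restricted to vertices — which should collapse the colour count dramatically. First I would treat the base case of a single weak visibility polygon $W$ on a base edge $uv$, decomposed into its max funnels $F_1,\dots,F_m$ (left to right) as in Section~\ref{sec:v2pChromatic}. The structural advantage absent from the P2P/V2P settings is that the apex $a_i$ of each max funnel sees \emph{every} vertex of $F_i$ (indeed, an isolated funnel is trivially $2$-colourable V2V: colour its apex uniquely). So I would try to show that $W$ itself needs only a constant number of colours: handle the two base vertices $u,v$ (which lie in all funnels) with their own private colours, and then argue — via a contiguity statement in the spirit of Lemma~\ref{lem:atleastone} — that the set of max funnels whose vertices are visible from any fixed \emph{non-base} vertex $w$ forms a short window of $\ES F$, so that the ruler-sequence colouring of the apices used in Algorithm~\ref{alg:weakv2pcolouringB} collapses from $O(\log m)$ to $O(1)$. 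Pinning down the right form of this ``short window'' claim, and proving it, is the first real hurdle, since a priori a vertex high on one chain could see vertices of many funnels at once.

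Granting a constant bound $c_0$ for weak visibility polygons, the second step is the recursive gluing along the decomposition tree produced by Algorithm~\ref{alg:ourdecompose} (built on Algorithm~\ref{alg:bartdecompose}). In the P2P/V2P arguments the gluing costs a multiplicative factor $3$ — one colour set for the parent, one for its left children, one for its right children — which would only give $3c_0$, not $3$. To kill this blow-up I would again use the V2V restriction: a vertex $x$ lying in a child subpolygon $U$ can see into its parent $W$ only through the separating window edge $e$, so $x$ sees merely the ``fringe'' of $W$ adjacent to $e$, and by Lemma~\ref{lem:ourdecompose}(a) that fringe sits on a single concave chain when $U$ is a forward polygon. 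The goal is to choose the colouring of each $W$ so that the colour $x$ already receives inside $U$ is never also displayed to $x$ by this fringe; then no fresh colours are needed per level and the same three colours can be reused globally — with the rainbow-triangle $3$-colouring of a triangulation of $P$ as a plausible starting template, subject to local repairs near window edges. Establishing that such a globally consistent repair always exists is the technical heart of the argument.

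The main obstacle, honestly, is that a simple polygon has in general no vertex visible from all others and no small vertex-dominating set (the vertex guard number can be $\Theta(n)$), so the ``one universal unique marker'' idea only works piecewise, and the entire difficulty lies in stitching the pieces together without the count creeping upward; the construction in Figure~\ref{fig:v2v-2colour} already shows that a mere two-level nesting forces a jump from $2$ to $3$ colours. I would expect the decisive lemma to say that the interaction between a subpolygon and the rest of $P$ across a window edge is one-dimensional — mediated by a single concave chain — so that a parent's colouring and a child's colouring can clash only in a controlled way that three colours suffice to reconcile. Ruling out a pathological \emph{iterated} gadget that would push the requirement to $4$ (the natural worry after Figure~\ref{fig:v2v-2colour}) is the step I would be least confident about, and it is presumably why the statement is phrased as a conjecture rather than a theorem.
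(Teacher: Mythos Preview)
The statement is labelled a \emph{conjecture} in the paper, and the paper offers no proof of it --- only the lower-bound construction of Proposition~\ref{pro:v2v-3colours} and the sentence in the conclusion that the authors believe three colours always suffice. There is therefore no ``paper's own proof'' against which to compare your attempt; the problem is genuinely open.

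Your proposal is explicitly a research plan rather than a proof, and you correctly flag the two structural hurdles (the contiguity/window claim for max funnels, and recursive gluing without colour blow-up). However, one concrete claim on which your base case rests is false: the apex $a_i$ of a max funnel $F_i$ does \emph{not} in general see every vertex of~$F_i$. The chains of $F_i$ are geodesic shortest paths from $a_i$ to the base endpoints $u,v$; if the apex saw $u$ directly, the left chain would be the single edge $a_iu$. So whenever a chain has intermediate (reflex) vertices, the apex fails to see the base endpoint on that side, and typically several of the lower chain vertices as well. Consequently ``colour the apex uniquely'' does not give a V2V conflict-free guarding of a funnel. The paper's own concluding remarks sketch the correct $2$-colour V2V scheme for a single funnel --- roughly, every third vertex of the left chain gets one colour and every third vertex of the right chain gets the other --- which is a distributed pattern, not a single universal marker.

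This error propagates upward through your outline: the hope that the ruler sequence on the apices collapses to $O(1)$ in the V2V setting relies on each apex being a unique marker for its whole funnel, which it is not. Without that, the ``short window'' lemma has no base case to anchor it, and the argument as written does not get off the ground. The honest status is that you have identified the right decomposition machinery from Sections~\ref{sec:v2pChromatic}--\ref{sec:allsimple}, but the specific mechanism you propose for the base case is incorrect, and the conjecture remains open exactly as the paper leaves it.
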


\subsection{Hardness of V2V conflict-free chromatic guarding}

	In view of Conjecture~\ref{cj:v2v-upper3} and the algorithmic results for
	other variants of chromatic guarding, it is natural to ask how difficult is
	to decide whether using $1$ or $2$ colours in V2V guarding is enough for 
	a given simple polygon.
	Actually, for general graphs the question whether one can find a
	conflict-free colouring with~$1$ colour was investigated already long time
	ago \cite{biggs-1973} (under the name of a perfect code in a graph), 
	and its NP-completeness was shown by Kratochv{\'{\i}}l and K\v{r}iv{\'a}nek
	in~\cite{DBLP:conf/mfcs/KratochvilK88}.
	Previous lower bounds and hardness results in this area, 
	including recent~\cite{k-cfc-graph}, however, do not seem to
	directly help in the case of polygon visibility graphs.
	
	Here we show that in both cases of $1$ or $2$ guard colours,
	the conflict-free chromatic guarding problem on arbitrary simple polygons is NP-complete.
	In each case we use a routine reduction from SAT, using a ``reflection model'' of a
	formula which we have introduced recently
	in~\cite{DBLP:conf/fsttcs/CagiriciHR17} for showing hardness of the ordinary
	chromatic number problem on polygon visibility graphs.
	This technique is, on a high approximate level, shown in
	Figure~\ref{fig:hardness-rough}:
	the variable values (T or F) are encoded in purely local {\em variable
		gadgets}, which are privately observed by opposite {\em reflection (or copy)
		gadgets} modelling the literals, and then reflected within precisely
	adjustable narrow beams to again opposite {\em clause gadgets}.
	Furthermore, there is possibly a special {\em guard-fix gadget} whose purpose 
	is to uniquely guard the polygonal skeleton of the whole construction,
	and so to prevent interference of the skeleton with the other local gadgets
	(which are otherwise ``hidden'' from each other).
	
	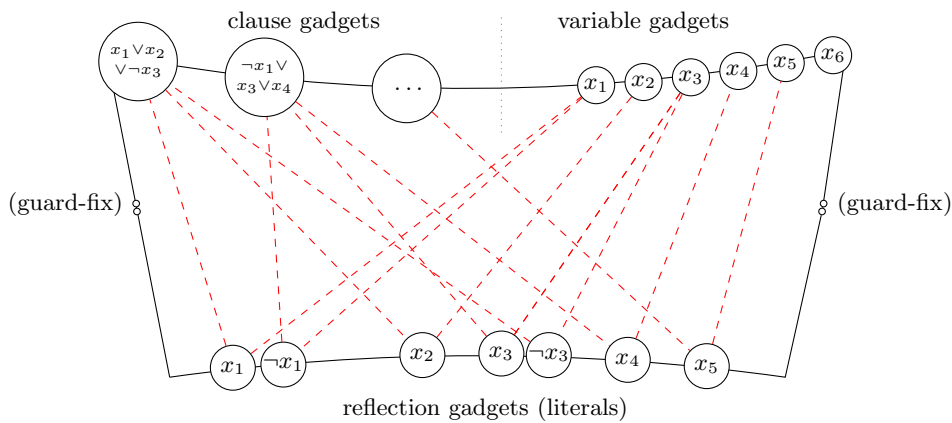
\begin{figure}[tb]
		\small\centering
		\begin{tikzpicture}[scale=2.1]
		\tikzstyle{every node}=[minimum size=0pt,inner sep=0pt];
		\draw (2.4,-0.25) node {reflection gadgets (literals)} ;
		\draw (1.25,2.2) node {clause gadgets} ;
		\draw (3.4,2.2) node {variable gadgets} ;
		\tikzstyle{every node}=[draw, fill=white, shape=circle, minimum size=2.5pt,inner sep=0pt];
		\node (cf) at (4.53,1) {} ;
		\tikzstyle{every path}=[draw,color=black];
		\draw (0.4,-0.05) arc (98:82:14) -- (cf) ;
		\draw (0,2) arc (-101:-78.5:12) -- (cf) ;
		\draw (0.4,-0.05) -- (0,2) ;
		\node[label=right:~(guard-fix)] at (4.54,1.05) {} ;
		\node[label=left:(guard-fix)~~] at (0.188,1.05) {} ;
		\node (cf0) at (0.193,1) {} ;
		\tikzstyle{every node}=[draw, fill=white, shape=circle, minimum size=17pt, inner sep=1pt];
		\node (cx1) at (0.8,0.01) {\small ${x_1}$};
		\node (nx1) at (1.12,0.03) {\small ${\!\neg x_1\!}$};
		\node (cx2) at (2.0,0.09) {\small ${x_2}$};
		\node (cx3) at (2.5,0.1) {\small ${x_3}$};
		\node (nx3) at (2.8,0.09) {\small ${\!\neg x_3\!}$};
		\node (cx4) at (3.3,0.06) {\small ${x_4}$};
		\node (cx5) at (3.8,0.02) {\small ${x_5}$};
		\tikzstyle{every node}=[draw, fill=white, shape=circle, minimum size=14pt, inner sep=1pt];
		\node (x1) at (3.1,1.8) {${x_1}$};
		\node (x2) at (3.4,1.82) {${x_2}$};
		\node (x3) at (3.7,1.85) {${x_3}$};
		\node (x4) at (4.0,1.89) {${x_4}$};
		\node (x5) at (4.3,1.94) {${x_5}$};
		\node (x6) at (4.6,1.99) {${x_6}$};
		\tikzstyle{every node}=[draw, fill=white, shape=circle, inner sep=2pt];
		\node (cl1) at (0.2,1.95) {$\substack{x_1\vee x_2\\[2pt] \vee\neg x_3}$};
		\node (cl2) at (1.0,1.85) {$\substack{\neg x_1\vee \\[2pt] x_3\vee x_4}$};
		\node (cl3) at (1.9,1.78) {~~\dots~~};
		\tikzstyle{every path}=[draw, color=red, dashed];
		\draw (cx1) -- (x1) -- (nx1) -- (cl2) ;
		\draw (cx1) -- (cl1) -- (cx2) ;
		\draw (cx2) -- (x2) ;
		\draw (cx3) -- (x3) ;
		\draw (cx3) -- (cl2) -- (cx4) ;
		\draw (cx3) -- (x3) -- (nx3) -- (cl1) ;
		\draw (cx4) -- (x4) ;
		\draw (cl3) -- (cx5) -- (x5) ;
		\tikzstyle{every path}=[draw, color=gray, dotted];
		\draw (2.5,1.5) -- (2.5,2.25);
		\end{tikzpicture}
		
		\caption{A scheme of a ``reflection'' polygonal visibility model 
			of a 3-SAT formula, as taken from \cite{DBLP:conf/fsttcs/CagiriciHR17}.
			The red dashed lines show ``visibility communication'' between
			related variable and reflection gadgets (the literals),
			and between related reflection and clause gadgets.}
		\label{fig:hardness-rough}
	\end{figure}

	\begin{theorem}\label{thm:v2vhardness}
		For $c\in\{1,2\}$, the question whether a given simple polygon 
		admits a weak conflict-free vertex-to-vertex chromatic guarding 
		with at most~$c$ colours, is NP-complete.
	\end{theorem}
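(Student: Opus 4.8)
The plan is to prove NP-membership by a direct certificate check, and NP-hardness by a polynomial reduction from $3$-SAT that reuses the ``reflection model'' of \cite{DBLP:conf/fsttcs/CagiriciHR17} sketched in Figure~\ref{fig:hardness-rough}. For membership, observe that since $c\in\{1,2\}$ is a constant, a candidate solution is succinctly described by a pair $(S,\chi)$ with $S\subseteq V(P)$ the guard set and $\chi\colon S\to\{1,\dots,c\}$ its colouring; given $(S,\chi)$ one computes, for each of the $n$ vertices $v$, the set of guards visible from $v$ (polynomial time, by standard visibility computations) and verifies that some colour occurs exactly once among them. Hence the problem is in NP. Note that for $c=1$ the condition ``every vertex sees exactly one guard'' is precisely that $S$ is a \emph{perfect code} (efficient dominating set) of the visibility graph of $P$; although perfect code is NP-complete for general graphs \cite{DBLP:conf/mfcs/KratochvilK88}, this does not transfer to the restricted class of polygon visibility graphs, so a dedicated construction is needed --- and the same construction, retuned internally, will also cover $c=2$.

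For hardness, given a $3$-CNF formula $\varphi$ with variables $x_1,\dots,x_n$ and clauses $C_1,\dots,C_m$, I would build in polynomial time a simple polygon $P_\varphi$ (respectively $P_\varphi'$ for $c=2$) composed of: a \emph{variable gadget} $V_i$ for each $x_i$; a \emph{reflection gadget} for each literal occurrence; a \emph{clause gadget} $G_j$ for each $C_j$; a polygonal \emph{skeleton} connecting all of these into one simple polygon; and a single \emph{guard-fix gadget}. The gadgets are placed on the three ``levels'' of Figure~\ref{fig:hardness-rough} --- each variable gadget opposite the reflection gadgets reading it, and each reflection gadget opposite the clause gadget it feeds --- and all inter-gadget communication is confined to narrow, precisely aimed visibility beams, which is achievable by choosing all coordinates as rationals of polynomially bounded bit length and perturbing slightly to destroy accidental collinearities. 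Each variable gadget is a small, privately observed sub-polygon admitting, in any conflict-free $c$-colouring, exactly two local ``states'', read as $x_i=\mathrm{true}$ or $x_i=\mathrm{false}$; the chosen state fixes the direction of the beam emitted toward each reflection gadget reading $x_i$. A reflection gadget copies this beam and, according to whether the literal it represents is satisfied, re-emits (or does not) a beam toward its clause gadget. The clause gadget $G_j$ --- a small structure in the spirit of Figures~\ref{fig:v2v-1colour} and~\ref{fig:v2v-2colour} --- is designed so that for incoming literal-states $(b_1,b_2,b_3)$ it admits a conflict-free $c$-colouring of its vertices if and only if $(b_1,b_2,b_3)\neq(\text{unsat},\text{unsat},\text{unsat})$, i.e.\ iff at least one literal of $C_j$ is satisfied. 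Finally, the guard-fix gadget carries the unique colour needed to guard the skeleton, so that the skeleton imposes no extra constraint on the local gadgets.

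Correctness is then routine in both directions: a satisfying assignment of $\varphi$ selects the corresponding state in each $V_i$, which propagates consistently through the reflection gadgets and makes every $G_j$ and the skeleton conflict-free $c$-colourable; conversely, any conflict-free $c$-colouring of $P_\varphi$ induces, via the forced states of the variable gadgets, a truth assignment which must satisfy every clause, since otherwise the ``all-unsat'' clause gadget $G_j$ would have no valid colouring. Checking that each of the four gadget types behaves exactly as specified is a finite case analysis over its at most $c^{O(1)}$ local colourings. The overall shape of the construction is the same for $c=1$ and $c=2$; only the internal wiring of the gadgets differs, so both NP-completeness statements follow at once.

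The main obstacle is not the logical skeleton of the reduction but the geometric bookkeeping: one must design the gadgets so that (i) in isolation each behaves exactly as claimed under conflict-free $c$-colouring, (ii) when assembled, the only inter-gadget visibility is along the intended beams --- in particular no vertex of one clause or variable gadget accidentally sees into another, and the skeleton sees none of the ``hidden'' gadget interiors --- and (iii) all of this is realizable with polynomial-size rational coordinates. Securing (ii) is precisely where the narrow-beam ``reflection'' idea together with a careful perturbation argument is essential, and it is also where the clause-gadget design must be handled separately for $c=1$, where the perfect-code condition is rigid, and for $c=2$, where a vertex may tolerate several visible guards provided one colour is unique.
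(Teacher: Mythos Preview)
Your high-level framework (reflection model, variable/reflection/clause gadgets, narrow beams) matches the paper, and the NP-membership argument is fine. However, there is a genuine gap in the hardness direction: you reduce from ordinary $3$-SAT and assert that a clause gadget can be built which ``admits a conflict-free $c$-colouring if and only if at least one literal is satisfied.'' This is precisely the step that does not go through, and the paper's proof is organised around avoiding it.

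For $c=1$, conflict-free with one colour means \emph{every} vertex sees \emph{exactly one} guard (perfect code). If a clause vertex receives beams from its three literals and a satisfied literal contributes a visible guard, then the clause vertex sees $0$, $1$, $2$, or $3$ guards according to how many literals are satisfied; only the value $1$ is legal. So the natural gadget encodes \emph{exactly one satisfied literal}, not \emph{at least one}. Designing a richer clause gadget that internally absorbs the excess guards while itself remaining a perfect code under all $(b_1,b_2,b_3)\neq(0,0,0)$ is far from a finite case check; you have not described such a gadget, and it is not clear one exists. The paper sidesteps this by reducing from \textbf{1-in-3 SAT}, where ``exactly one true literal per clause'' is the target, so a single clause vertex works and no guard-fix gadget is needed at all (contrary to your plan).

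The same issue recurs for $c=2$: if each literal feeds a coloured guard to the clause vertex, then ``at least one satisfied'' does not prevent all three guards from being the same colour, which is a conflict. What \emph{is} naturally enforceable is that the three incoming colours are not all equal, i.e.\ \textbf{not-all-equal positive 3-SAT}. The paper therefore reduces from NAE-3-SAT (with no negations), and here it \emph{does} need an elaborate guard-fix gadget (built from four bowls) to pin down the two colours on the frame before the clause argument can start. So the two cases are not ``the same construction with different internal wiring'': they use different source problems, different gadgets, and only the $c=2$ case uses a guard-fix gadget. Your proposal should either switch to these SAT variants and redo the gadget analysis accordingly, or supply an explicit clause gadget achieving the ``at least one'' semantics and argue its correctness --- the latter being the missing idea.
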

	
	\begin{proof}
	\begin{figure}[tbp]
		\centering\hfill
		\begin{tikzpicture}[scale=1]
		\tikzstyle{every path}=[thick];
		\tikzstyle{every node}=[draw, shape=circle, minimum size=4pt,
		inner sep=0pt, fill=gray];
		\node (a) at (0,0) {};
		\node (b) at (0.2,1) {};
		\node (c) at (0.2,2) {};
		\node (d) at (0,3) {};
		\node (e) at (2,3) {};
		\node (f) at (1.8,2) {};
		\node (g) at (1.8,1) {};
		\node (h) at (2,0) {};
		\tikzstyle{every node}=[draw, shape=circle, minimum size=4pt,
		inner sep=0pt, fill=none];
		\node[fill=blue,inner sep=2pt, label=above:$d_1$] (d1) at (0.8,0) {};
		\node[fill=blue,inner sep=2pt, label=above:$d_2$] (d2) at (1.2,0) {};
		\node (v1) at (0.6,-1.2) {};
		\node (v2) at (1.4,-1.2) {};
		\draw (a) -- (b) -- (c) -- (d) -- (e) -- (f) -- (g) -- (h);
		\draw (a) -- (d1) --  (v1) -- (0,-1.2) node[draw=none] {} ;
		\draw (h) -- (d2) --  (v2) -- (2,-1.2) node[draw=none] {} ;
		\draw[dotted,color=blue] (v1) -- (0.48,-2);
		\draw[dotted,color=blue] (v2) -- (1.52,-2);
		\draw[dashed,color=blue] (d2) -- (0.25,-2);
		\draw[dashed,color=blue] (d1) -- (1.75,-2);
		\node[draw=none,fill=none,color=blue, shape=rectangle] at (1,-2.5) {to	reflection};
		\end{tikzpicture}
		\qquad\hfill
		\begin{tikzpicture}[scale=2]
		\tikzstyle{every node}=[draw, shape=circle, minimum size=4pt,inner sep=0pt];
		\tikzstyle{every path}=[thick];
		\node[label=$a_1~~~$] (a) at (0.15,1.7) {};
		\node[fill=green,inner sep=2pt, label=below:$q$] (b) at (0,0) {};
		\node[label=below:$s$] (c) at (1.75,0.4) {};
		\node[fill=blue,inner sep=2pt, label=below:$q'$] (d) at (2.15,0) {};
		\node[fill=blue,inner sep=2pt, label=right:$r$] (e) at (1.35,1.5) {};
		\node (f) at (1.1,1.3) {};
		\node[label=$a_2~~$] (g) at (0.7,2) {};
		\draw (a) -- (b) -- (c) -- (d) -- (e) -- (f) -- (g) ;
		\draw (a) -- (-0.2,1.8) node[draw=none] {} ;
		\draw (g) -- (2,1.8) node[draw=none] {} ;
		\draw[dashed, color=green!70!black] (b) -- (0.25,2.8);
		\draw[dashed, color=green!70!black] (b) -- (0.91,2.6) node[draw=none] {from variables};
		\draw[dotted, color=blue] (d) -- (-0.05,2.2);
		\draw[dotted, color=blue] (d) -- (0.03,2.6) node[draw=none, label=left:to clauses] {};
		\end{tikzpicture}
		\hfill~%
		\caption{Left: the variable gadget, such that exactly one of $d_1,d_2$ must
			have a guard. ~Right: the reflection gadget, where $a_1,a_2$ are
			seen by a guard from outside, and one of $r,q'$ has a guard.}
		\label{fig:v2v-1hardness}
	\end{figure}
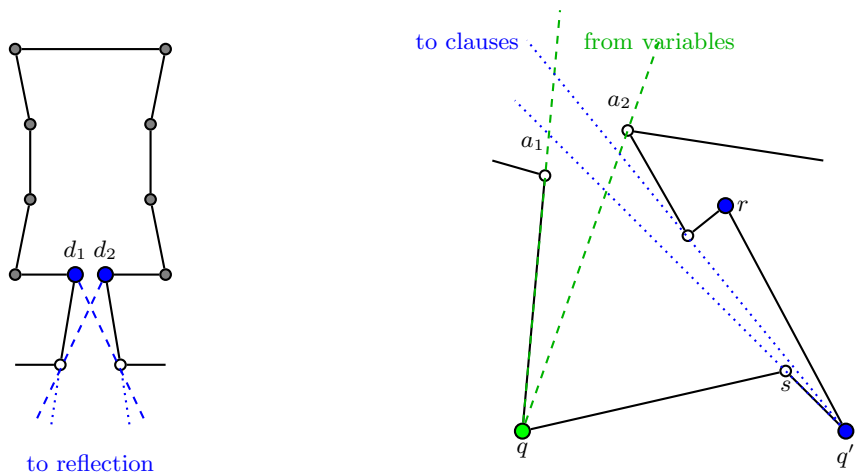

		For $c=1$ we reduce from the 1-in-3 SAT variant, which asks for an assignment
		having exactly one true literal in each clause and which is NP-complete, too.
		Note that in this case, there is only one colour of guards, and so every
		vertex must see exactly one guard.
		Following the general scheme of Figure~\ref{fig:hardness-rough}, we now give
		the particular variable and reflection gadgets:
		\begin{itemize}
			\item The variable gadget is depicted in Figure~\ref{fig:v2v-1hardness} on the left. 
			The opening between $d_1$ and $d_2$ is sufficiently small to
			prevent accidental visibility between an inner vertex of this gadget
			(gray in the picture) and other vertices outside.
			
			In any conflict-free $1$-colouring of a polygon containing this
			gadget, precisely one of $d_1,d_2$ must have a guard (otherwise, the inner
			vertices cannot be guarded, as in Figure~\ref{fig:v2v-1colour}),
			and there is no other guard on this gadget.
			
			\item The reflection gadget is depicted in Figure~\ref{fig:v2v-1hardness} on the right.
			Note that the visible angles of the vertices $q$ and $q'$ can be made
			arbitrarily tiny and fine-adjusted (independently of each other) by changing
			horizontal positions of $q,q'$ and~$s$.
			
			In any conflict-free $1$-colouring of a polygon containing this
			gadget, such that both $a_1,a_2$ see a guard (from outside),
			the following holds: a guard can be placed only at $q'$ or $r$,
			and the guard is at $q'$ if and only if $q$ sees a guard from outside
			(otherwise, there would be a conflict at $q$ or $q$ would not be guarded).
			
			\item Every clause gadget is just a single vertex positioned on a concave
			chain as shown in Figure~\ref{fig:hardness-rough}.
			There is no guard-fix gadget present in this construction.
		\end{itemize}
		
		For a given 3-SAT formula $\Phi=(x_i\vee\neg x_j\vee x_k)\wedge\ldots$,
		the construction is completed as follows.
		Within the frame of Figure~\ref{fig:hardness-rough}, we place a copy of the
		variable gadget for each variable of $\Phi$.
		We adjust these gadgets such that the combined visible angles of $d_1,d_2$
		of each variable gadget do not overlap with those of other variables on the
		bottom base of the frame.
		(The lower left and right corners of the frame are seen by the first and
		last variable gadgets, respectively.)
		
		For each literal $\ell$ containing a variable $x_i$ we place a copy of the
		reflection gadget at the bottom of the frame, such that its opening
		$a_1,a_2$ is visible from both $d_1$ and $d_2$ of the gadget of $x_i$.
		We adjust the narrow visible angle of $q$ (by moving $q$
		horizontally) such that $q$ sees $d_1$ but not $d_2$
		if the literal $\ell$ is $x_i$, and $q$ sees $d_2$ but not $d_1$
		if $\ell=\neg x_i$.
		Then we adjust the visible angle of $q'$ such that it sees exactly the one
		vertex (on the top concave chain of the frame) which represents the clause
		containing~$\ell$.
		
		This whole construction can clearly be done in polynomial time and 
		precision (cf.\ the similar arguments in \cite{DBLP:conf/fsttcs/CagiriciHR17}).
		To recapitulate, each of the vertices of our constructed polygon is
		\begin{itemize}
			\item coming from a copy of the variable gadget for each variable of $\Phi$, or
			\item from a copy of the reflection gadget added for each literal in~$\Phi$, or
			\item is a singleton representative of one of the clauses of $\Phi$, or
			\item is the auxiliary lower-left or right corner.
		\end{itemize}
		
		Assume that $\Phi$ has a 1-in-3 satisfying assignment.
		Then we place a guard at $d_1$ of a variable gadget of $x_i$ if $x_i$ is
		true, and at $d_2$ otherwise.
		Moreover, at a reflection gadget of a literal $\ell$, we place a guard at
		$q'$ if $\ell$ is true, and at $r$ otherwise.
		No other guards are placed.
		In this arrangement, every vertex of a variable or reflection gadget sees
		precisely one guard, regardless of the evaluation of $\Phi$.
		Moreover, since the assignment of $\Phi$ makes precisely one literal of each
		clause true, every clause vertex also sees one guard.
		This is a valid conflict-free $1$-colouring.
		
		Conversely, assume we have a conflict-free $1$-colouring of our polygon.
		Since precisely one of $d_1,d_2$ of each variable gadget of $x_i$ has a guard,
		this correctly encodes the truth value of $x_i$ (true iff the guard is
		at~$d_1$), and we know that both $a_1,a_2$ of each reflection gadget
		of a literal $\ell$ see an outside guard.
		Hence, by our construction, the gadget of $\ell$ has a guard at $q'$ if and
		only if $\ell$ is true in our derived assignment of variables.
		
		Furthermore, any clause vertex can see a guard only at a vertex $q'$ of some
		reflection gadget. 
		Otherwise (including the case of a guard at some clause vertex), we would
		necessarily get a guard conflict at a vertex $a_1$ of some reflection gadget.
		Consequently, every clause contains precisely one true literal
		(as determined by the guard visible from this clause vertex).
		The NP-completeness reduction for $c=1$ is finished.
		
		\medskip
		
        \begin{figure}[tbp]
                \def\bowlsh#1#2{%
                        \begin{scope}[shift={#1}, rotate=#2, yscale=0.4, xscale=0.17]
                                \begin{scope}[shift={(0,0.3)}]
                                        \draw[color=blue, thin, fill=blue!15!white] (-0.06,-0.3) -- (-1.9,1) -- (-2.1,2) -- (-1.1,2)
                                        -- (-0.3,1) -- (0.3,1) -- (1.1,2) -- (2.1,2) -- (1.9,1) -- (0.06,-0.3);
                                \end{scope}
                        \end{scope}
                }
                \centering
                \begin{tikzpicture}[scale=1.4]
                \tikzstyle{every node}=[draw, shape=circle, minimum size=3pt,
                	inner sep=0pt, fill=gray];
                \tikzstyle{every path}=[thick];
                \node (a) at (0,0) {};
                \node (b) at (4,0) {};
                \node (c1) at (4.1,1) {};
                \node[label=below:$c_4$] (c2) at (4.4,1.1) {};
                \node[fill=white, minimum size=4pt, label=below:$d_4$] (c3) at (5.5,1.1) {};
                \node[fill=white, minimum size=4pt, label=above:$d_3$] (c4) at (5.5,1.9) {};
                \node[label=above:$c_3$] (c5) at (4.4,1.9) {};
                \node (c6) at (4.15,2) {};
                \node[fill=white, minimum size=4pt, label=below:$~~~d_2$] (d1) at (4.15,4.1) {};
                \node[fill=white, minimum size=4pt, label=above:$~~~d_1$] (d2) at (4.1,4.9) {};
                \node (e) at (4,6) {};
                \node (h) at (0,6) {};
                \node[label=$a_1~~~$] (j1) at (-0.1,4.9) {};
                \node (j2) at (-1.3,4.8) {};
                \node (j3) at (-1.5,4.72) {};
                \node (j4) at (-1.7,4.66) {};
                \node[label=$~~~A~~$] (j5) at (-1.9,4.63) {};
                \node (j6) at (-1.9,4.37) {};
                \node (j7) at (-1.7,4.34) {};
                \node (j8) at (-1.5,4.28) {};
                \node (j9) at (-1.3,4.2) {};
                \node[label=below:$\!\!a_2~~$] (j10) at (-0.15,4.1) {};
                \node[label=$b_1~~~$] (k1) at (-0.15,1.9) {};
                \node (k2) at (-1.3,1.8) {};
                \node (k3) at (-1.5,1.72) {};
                \node (k4) at (-1.7,1.66) {};
                \node[label=$~~~B~~$] (k5) at (-1.9,1.63) {};
                \node (k6) at (-1.9,1.37) {};
                \node (k7) at (-1.7,1.34) {};
                \node (k8) at (-1.5,1.28) {};
                \node (k9) at (-1.3,1.2) {};
                \node[label=below:$\!\!b_2~~$] (k10) at (-0.1,1.1) {};
                \draw (b) -- (c1) -- (c2) -- (c3) -- (c4) -- (c5)
                -- (c6) -- (d1) -- (d2) -- (e);
                \draw (h) -- (j1) -- (j2) -- (j3) -- (j4) -- (j5) 
                -- (j6) -- (j7) -- (j8) -- (j9) -- (j10) 
                -- (k1) -- (k2) -- (k3) -- (k4) -- (k5) 
                -- (k6) -- (k7) -- (k8) -- (k9) -- (k10) -- (a);
                \draw[dashed] (a) arc (98:82:14.2) -- (b) ;
                \draw[dashed] (h) arc (-98:-82:14.2) -- (e) ;
                \bowlsh{(c3)}{270}
                \bowlsh{(c4)}{270}
                \bowlsh{(d1)}{270}
                \bowlsh{(d2)}{270}
                \tikzstyle{every path}=[thick];
                \draw[dotted, color=blue] (5.5,1.13) -- (k9);
                \draw[dotted, color=blue] (5.5,1.13) -- (k6);
                \draw[dotted, color=blue] (5.5,1.87) -- (k5);
                \draw[dotted, color=blue] (5.5,1.87) -- (k2);
                \draw[dotted, color=gray] (c2) -- (k10);
                \draw[dotted, color=gray] (c5) -- (k1);
                \draw[dotted, color=blue] (d1) -- (j9);
                \draw[dotted, color=blue] (d1) -- (j6);
                \draw[dotted, color=blue] (d2) -- (j5);
                \draw[dotted, color=blue] (d2) -- (j2);
                \end{tikzpicture}
                
                \caption{The guard-fix gadget for conflict-free $2$-colouring:
                        the four blue filled shapes glued to the frame on the right are copies of the bowl
                        shape from Figure~\ref{fig:v2v-2colour}, and the two ``pockets'' 
                        on the left side of the frame enforce each pair of bowls to receive guards of both
                        colours. The middle part of the frame is much wider than depicted here.
                        See in the proof of Theorem~\ref{thm:v2vhardness}.}
                \label{fig:v2v-2colourF}
        \end{figure}

		We now move onto the $c=2$ case, which we reduce from the NP-complete
		not-all-equal positive 3-SAT problem (also known as $2$-colouring of
		$3$-uniform hypergraphs).
		This special variant of 3-SAT requires every clause to have at least one true and one
		false literal, and there are no negations allowed.
		We again follow the same general scheme as for $c=1$, but this time 
		the main focus will be on implementing the guard-fix gadget.
		Let our guard colours be red and blue (then every vertex must see exactly
		one red guard, or exactly one blue guard).
		
		The left and right walls of the schematic frame from
		Figure~\ref{fig:hardness-rough} are constructed as shown in
		Figure~\ref{fig:v2v-2colourF}.
		In the construction, we use four copies of the bowl
		shape from Figure~\ref{fig:v2v-2colour}, and we adopt the terminology
		of gluing the bowls and of the {\em doors} from the proof of
		Proposition~\ref{pro:v2v-3colours}.
		For simplicity, while keeping in mind that the door of each bowl is a narrow
		passage formed by a pair of vertices, we denote each door by a single letter
		$d_i$, $i=1,2,3,4$ (as other vertices).
		
		This guard-fix gadget is constructed such that both $d_1,d_2$ see all
		the $8$ vertices of the ``pocket'' $A$ on the left, but neither of $a_1,a_2$ does so.
		We recall the following property from the proof of Proposition~\ref{pro:v2v-3colours}:
		\begin{itemize}
			\item[(i)] In any conflict-free $2$-colouring of the bowl there is a guard
			(or two) placed at the door.
		\end{itemize}
		Consequently, the guards placed at $d_1$ and $d_2$ must be of different
		colours (red and blue); otherwise, say for two red guards at $d_1,d_2$, each
		of the $8$ vertices in $A$ would have to be guarded by a blue guard
		within $A\cup\{a_1,a_2\}$ which is not possible.
		(Though, we have not yet excluded the case that, say, $d_1$ would have 
		a red guard and $d_2$ a blue and a red guards.)
		
		In the next step, we note that $d_3,d_4$ see all $8$ vertices of the 
		``pocket'' $B$, but none of $b_1,b_2,c_3,c_4$ does so.
		So, by analogous arguments, the guards placed at $d_3,d_4$ must be of
		different colours (red and blue).
		We remark that this does not necessarily cause a conflict with the guards
		at $d_1,d_2$ since the visibility between $b_1,d_3$ and between $b_2,d_4$
		is blocked by $c_3$ and $c_4$, respectively.
		However, the vertices $b_1$ and $b_2$ are now ``exhausted'' in the sense
		that one sees (at least) one red and two blue guards, and the other one blue
		and two red guards.
		Altogether, this implies that
		\begin{itemize}\item[(ii)]
			there is exactly one red and one blue guard
			among $d_1,d_2$ and the same holds among $d_3,d_4$,
			and no other vertex visible from both $b_1$ and $b_2$ can have a guard.
		\end{itemize}
		
        \begin{figure}[tbp]
                \def\bowlsh#1#2{%
                        \begin{scope}[shift={#1}, rotate=#2, yscale=0.25, xscale=0.1]
                                \begin{scope}[shift={(0,0.3)}]
                                        \draw[color=blue, thin, fill=blue!15!white] (-0.06,-0.3) -- (-1.9,1) -- (-2.1,2) -- (-1.1,2)
                                        -- (-0.3,1) -- (0.3,1) -- (1.1,2) -- (2.1,2) -- (1.9,1) -- (0.06,-0.3);
                                \end{scope}
                        \end{scope}
                }
                \centering
                \begin{tikzpicture}[yscale=1.8,xscale=1.5]
                \tikzstyle{every node}=[draw, shape=circle, minimum size=3pt,
                inner sep=0pt, fill=gray];
                \tikzstyle{every path}=[thick];
                \node (a) at (0,0) {};
                \node (b) at (8,0) {};
                \coordinate (b1) at (8,1) {};
                \node[color=blue,fill=blue!30!white, minimum size=4pt,
                label=below:$d_4$] (b2) at (8.5,1.02) {};
                \node[color=blue,fill=blue!30!white, minimum size=4pt,
                label=above:$d_3$] (b3) at (8.5,1.18) {};
                \coordinate (b4) at (8,1.2) {};
                \node[color=blue,fill=blue!30!white, minimum size=4pt,
                label=below:$~~d_2\!\!\!\!$] (b5) at (8,2.02) {};
                \node[color=blue,fill=blue!30!white, minimum size=4pt,
                label=above:$~~d_1\!\!\!\!$] (b6) at (8,2.18) {};
                \node (e) at (8,3) {};
                \node[label=$c_1$] (h) at (0,3) {};
                \node[label=$c_2$] (h2) at (0.5,2.85) {};
                \node[label=$c_3$] (h3) at (1,2.74) {};
                \node[label=$c_4$] (h4) at (1.5,2.65) {};
                \node[label=$\dots$] (h5) at (2,2.58) {};
                \coordinate[label=below:$\!\!b_2~~$] (f1) at (0,1) {};
                \coordinate (f2) at (-0.8,1.05) {};
                \coordinate (f3) at (-0.8,1.15) {};
                \coordinate[label=$\!\!b_1~~$] (f4) at (0,1.2) {};
                \coordinate[label=below:$\!\!a_2~~$] (g1) at (0,2.0) {};
                \coordinate (g2) at (-0.8,2.05) {};
                \coordinate (g3) at (-0.8,2.15) {};
                \coordinate[label=$\!\!a_1~~$] (g4) at (0,2.2) {};
                \draw[dashed, color=gray] (a) arc (98:82:28.6) -- (b) ;
                \draw[dashed, color=gray] (h) arc (-105:-75:15.2) -- (e) ;
                \draw (a) -- (f1) -- (f2) -- (f3) -- (f4)
                -- (g1) -- (g2) -- (g3) -- (g4) -- (h);
                \draw (b) -- (b1) -- (b2) -- (b3) -- (b4) --(b5) -- (b6) -- (e);
                \draw (h) -- (h2) -- (h3) -- (h4) -- (h5) -- (2.2,2.56);
                \draw[dotted, color=blue] (b5) -- (2,2.58);
                \draw[dotted, color=blue] (b6) -- (a);
                \draw[dotted, color=gray] (f4) -- (8,2.98);
                
                \node (u1) at (4.78,2.49) {};
                \node[fill=white, label=left:$x_2$] (u2) at (4.95,3.01) {};
                \node (u3) at (5.02,2.5) {};
                \bowlsh{(u2)}{0}
                \draw[dotted, color=green] (u1) -- (4.2,0.1);
                \draw[dotted, color=green] (u3) -- (5.2,0.1);
                \node (w1) at (3.95,2.466) {};
                \node[fill=white, label=left:$x_1$] (w2) at (4.3,3) {};
                \node (w3) at (4.25,2.47) {};
                \bowlsh{(w2)}{0}
                \draw[dotted, color=red] (w1) -- (2.45,0.1);
                \draw[dotted, color=red] (w3) -- (3.85,0.1);
                \node (v1) at (5.5,2.54) {};
                \node[fill=white, label=left:$x_3$, label=right:$~~\dots$] (v2) at (5.5,3.03) {};
                \node (v3) at (5.65,2.56) {};
                \bowlsh{(v2)}{0}
                \draw[dotted, color=red] (v2) -- (5.5,0.08);
                \draw[dotted, color=red] (v2) -- (6.4,0.07);
                \draw (3.7,2.47) -- (w1) -- (w2) -- (w3) -- (u1) -- (u2) 
                --(u3) --(v1) --(v2) --(v3) -- (5.9,2.59);
                
                \node (l1) at (2.8,0.28) {};
                \node (l2) at (3.23,-0.4) {};
                \node (l3) at (2.92,0.285) {};
                \draw[dotted, color=green] (l2) -- (h4);
                \node (ll1) at (3.2,0.3) {};
                \node (ll2) at (4,-0.3) {};
                \node (ll3) at (3.35,0.3) {};
                \draw[dotted, color=green] (ll2) -- (h);
                \draw (2.6,0.27) -- (l1) -- (l2) -- (l3) -- (ll1) -- (ll2) -- (ll3)
                -- (3.6,0.3);
                \node[label=240:$l_1$] (lm1) at (4.7,0.28) {};
                \node[label=right:$l_2$] (lm2) at (5.7,-0.3) {};
                \node[label=60:$l_3$] (lm3) at (4.9,0.278) {};
                \draw[dotted, color=red] (lm2) -- (h3);
                \draw (lm1) -- (lm2) -- (lm3);
                \end{tikzpicture}
                
                \caption{Placement of the variable/reflection/clause gadgets for conflict-free $2$-colouring:
                        the clause gadgets are the single vertices $c_1,c_2,\ldots$,
                        the variable gadgets are formed by copies of the bowl at $x_1,x_2,\ldots$,
                        and the reflection gadgets are simply triples of vertices as
                        $l_1,l_2,l_3$ at the bottom line.
                        In this example, the value (colour red or blue) of the variable
                        $x_1$ is reflected towards clauses $c_1$ and $c_4$ (which are
                        thus assumed to contain literal~$x_1$), and the value of
                        $x_2$ is reflected towards~$c_3$.}
                \label{fig:v2v-2colourG}
        \end{figure}
		
		The rest of the construction is, within the frame constructed above
		(Figure~\ref{fig:v2v-2colourF}), already quite easy.
		Let $\Phi$ be a given 3-SAT formula without negations.
		See Figure~\ref{fig:v2v-2colourG}.
		\begin{itemize}
			\item We again represent each clause of $\Phi$ by a single vertex on a
			concave chain on the top of our frame.
			This chain of clause vertices is ``slightly hidden'' in a sense that it is
			not visible from $d_1$ or $d_2$, but it is all visible from $b_1$ and~$b_2$.
			\item Each variable $x_i$ of $\Phi$ is represented by a copy of the bowl,
			also placed on the top of the frame (but separate from the
			section of clause vertices).
			As before, the visible angles of the variable gadgets are adjusted 
			so that they do not overlap on the bottom line of the frame.
			\item Each literal $\ell=x_i$ is represented by a triple of vertices
			as $l_1,l_2,l_3$ in Figure~\ref{fig:v2v-2colourG}, such that $l_1,l_3$ are
			in the visible angle of the variable-$x_i$ gadget, while $l_2$ is ``deeply
			hidden'' so that $l_2$ sees only the clause vertex $c_j$ that $\ell$ belongs to.
		\end{itemize}
		
		The final argument is analogous to the $c=1$ case.
		Assume we have a not-all-equal assignment of~$\Phi$.
		We put red guards to $d_1$ and $d_3$ and blue guards to $d_2$ and~$d_4$.
		For each variable $x_i$, we put one guard to $x_1$ of blue colour if $x_i$
		is true and of red colour if $x_i$ is false.
		Whenever $\ell=x_i$ is a literal represented by the triple $l_1,l_2,l_3$,
		we put to $l_2$ a guard of the same colour as of the guard at $x_i$.
		Then every clause $c_j=(x_a\vee x_b\vee x_c)$ will see the colours of guards
		at $x_a,x_b,x_c$, and since the values assigned to $x_a,x_b,x_c$ are not all
		the same, one of the colours is unique to guard~$c_j$.
		We have got a conflict-free $2$-colouring.
		
		On the other hand, assume a conflict-free $2$-colouring.
		By (ii), there are no guards at the clause vertices $c_1,c_2,\ldots$,
		and so those vertices can be only guarded from the reflection gadgets.
		Assume a reflection gadget of a literal $\ell=x_i$.
		Then, again by (ii), the vertices $l_1,l_3$ of this gadget have no guards,
		and they see a red and a blue guard from $d_1,d_2$.
		On the other hand, $l_2$ cannot see any other guard except one placed at
		$l_2$, and so there has to be a guard at~$l_2$.
		If, moreover, $l_1,l_3$ see a red (say) guard placed at $x_i$
		(and $x_i$ as the door of a glued bowl there must have a guard by (i)\,),
		then the guard at $l_2$ must also be red (or $l_1$ would have a conflict).
		Consequently, every clause $c_j=(x_a\vee x_b\vee x_c)$ sees the colours of
		the guards placed at $x_a$, $a_b$ and $x_c$, and since the colouring is
		conflict-free, there have to be both colours visible (one red plus two blue,
		or one blue plus two red).
		From this we can read a valid not-all-equal variable assignment of~$\Phi$.
	\end{proof}

	\section{Concluding remarks}
	
	We have designed an algorithm for producing a V2P guarding of funnels that is 
	optimal in the number of guards.
	We have also designed an algorithm for a V2P conflict-free chromatic guarding 
	for funnels, which gives only a small additive error
	with respect to the minimum number of colours required. 
	We believe that the latter can be strengthened to an exact solution
	by sharpening the arguments involved (though, it would likely not be easy).
	
	For V2P conflict-free chromatic guarding simple polygons, we have given a simple
	efficient upper bound of $O(\log^2 n)$, first for the special case of weak-visibility 
	polygons and then, as a corollary, for all simple polygons.
	This still leaves room for improvement down to $O(\log n)$, which is
	the worst-case scenario already for funnels and which would match
	the previous P2P upper bound for simple polygons of \cite{Bartschi-2014}.
	We believe that such an improvement is achievable by a more careful
	distribution of the colours sets (and re-use of colours)
	to the max funnels in Algorithm~\ref{alg:weakv2pcolouringB}.

	For the V2V guarding case, it is easy to see that there is a natural upper bound 
	by the domination number of the visibility graph.
	Also, in the V2V case for funnels, our Algorithm~\ref{alg:tightpath} can be used 
	to decide whether a given funnel can be guarded by only one colour,
	whereas a trivial V2V conflict-free chromatic guarding with two colours 
	exists for any funnel (up to some technical details, every third
	vertex of the left chain is red, and every third vertex of the right chain is blue).

	For general simple polygons, we have shown that the problem of V2V conflict-free
	chromatic guarding is NP-complete when the number of colours is one or two.  
	We have also given a simple example where the polygon requires three
	colours, and conjectured that no more colours are needed
	no matter how convoluted the polygon might be.

	While we have shown that the problem of V2V conflict-free
        chromatic guarding is NP-complete already with one or two colours,
	this does not readily imply any hardness results for the V2P and P2P cases.
	Hence, the complexity of an exact solution in the latter two cases remains open.
	
	\smallskip
	To summarise, we propose the following open problems for future research:
	\begin{itemize}\vspace*{-1ex}%
		\item Improve Algorithm~\ref{alg:apxcfreefunnel} to an exact algorithm for chromatic guarding a funnel.
		\item Improve the upper bound in Theorem~\ref{thm:twoApprox} to $O(\log n)$.
 		\item Prove Conjecture~\ref{cj:v2v-upper3} or, at least, prove any constant upper bound on the V2V conflict-free chromatic guarding
 		of simple polygons.
	\end{itemize}

\bibliographystyle{spmpsci}      
\bibliography{jco}   

\end{document}